\documentclass[11pt]{article}
\usepackage{amsmath}
\usepackage{amsthm}
\usepackage{amssymb}
\usepackage{algorithm}
\usepackage{subfig}
\usepackage{color}
\usepackage[english]{babel}
\usepackage{graphicx}
\usepackage{wrapfig,epsfig}
\usepackage{epstopdf}
\usepackage{url}
\usepackage{graphicx}
\usepackage{color}
\usepackage{epstopdf}
\usepackage{algpseudocode}
\usepackage{scrextend}
\usepackage[T1]{fontenc}
\usepackage{bbm}
\usepackage{comment}
\usepackage{authblk}
\usepackage{multicol}
\usepackage{dsfont}


\usepackage{tikz}
\usepackage{hyperref}  
\hypersetup{colorlinks=true,citecolor=blue,linkcolor=blue} 
\usetikzlibrary{arrows}
\usepackage[margin=1in]{geometry}
\linespread{1}
\graphicspath{{./figs/}}

\author{
  Vasileios Nakos\thanks{\texttt{vasileiosnakos@g.harvard.edu} Harvard University.}
  \quad
  Zhao Song\thanks{\texttt{zhaos@g.harvard.edu} Harvard University \& UT-Austin. Work done while visiting Harvard University, hosted by Jelani Nelson.}
}  


\date{}

\title{Stronger $\ell_2/\ell_2$ Compressed Sensing; Without Iterating}

\newtheorem{theorem}{Theorem}[section]
\newtheorem{lemma}[theorem]{Lemma}
\newtheorem{definition}[theorem]{Definition}

\newtheorem{fact}[theorem]{Fact}
\newtheorem{remark}[theorem]{Remark}
\newtheorem{claim}[theorem]{Claim}

\newtheorem{problem}[theorem]{Problem}

\newcommand{\wh}{\widehat}

\newcommand{\ov}{\overline}
\newcommand{\eps}{\epsilon}
\newcommand{\N}{\mathcal{N}}
\newcommand{\R}{\mathbb{R}}

\renewcommand{\varepsilon}{\epsilon}

\renewcommand{\hat}{\wh}
\renewcommand{\bar}{\ov}
\renewcommand{\eps}{\epsilon}

\DeclareMathOperator*{\E}{{\bf {E}}}

\DeclareMathOperator*{\median}{median}

\DeclareMathOperator{\sparse}{sparse}
\DeclareMathOperator{\poly}{poly}


\newcommand{\Zhao}[1]{{\color{red}[Zhao: #1]}}

\makeatletter
\newcommand*{\RN}[1]{\expandafter\@slowromancap\romannumeral #1@}
\makeatother

\usepackage{lineno}

%
%
%
%
%
%
%
%
\usepackage{etoolbox}

\makeatletter

\newcommand{\define}[4][ignore]{%
  \ifstrequal{#1}{ignore}{}{
  \@namedef{thmtitle@#2}{#1}}%
  \@namedef{thm@#2}{#4}%
  \@namedef{thmtypen@#2}{lemma}%
  \newtheorem{thmtype@#2}[theorem]{#3}%
  \newtheorem*{thmtypealt@#2}{#3~\ref{#2}}%
}

\newcommand{\state}[1]{%
  \@namedef{curthm}{#1}
  \@ifundefined{thmtitle@#1}{
  \begin{thmtype@#1}
    }{
  \begin{thmtype@#1}[\@nameuse{thmtitle@#1}]
  }
    \label{#1}
    \@nameuse{thm@#1}
  \end{thmtype@#1}
  \@ifundefined{thmdone@#1}{
  \@namedef{thmdone@#1}{stated}%
  }{}
}

\newcommand{\restate}[1]{%
  \@namedef{curthm}{#1}
  \@ifundefined{thmtitle@#1}{
    \begin{thmtypealt@#1}
    }{
  \begin{thmtypealt@#1}[\@nameuse{thmtitle@#1}]
  }
    \@nameuse{thm@#1}
  \end{thmtypealt@#1}
  \@ifundefined{thmdone@#1}{
  \@namedef{thmdone@#1}{stated}%
  }{}
}

\newcommand{\thmlabel}[1]{
  \@ifundefined{thmdone@\@nameuse{curthm}}{\label{#1}
    }{\tag*{\eqref{#1}}}
}
\makeatother

\begin{document}

\begin{titlepage}
  \maketitle
  \begin{abstract}
We consider the extensively studied problem of $\ell_2/\ell_2$ compressed sensing. The main contribution of our work is an improvement over [Gilbert, Li, Porat and Strauss, STOC 2010] with faster decoding time and significantly smaller column sparsity, answering two open questions of the aforementioned work.

Previous work on sublinear-time compressed sensing employed an iterative procedure, recovering the heavy coordinates in phases. We completely depart from that framework, and give the first sublinear-time $\ell_2/\ell_2$ scheme which achieves the optimal number of measurements without iterating; this new approach is the key step to our progress. Towards that, we satisfy the $\ell_2/\ell_2$ guarantee by exploiting the heaviness of coordinates in a way that was not exploited in previous work. Via our techniques we obtain improved results for various sparse recovery tasks, and indicate possible further applications to problems in the field, to which the aforementioned iterative procedure creates significant obstructions.

  \end{abstract}
  \thispagestyle{empty}
\end{titlepage}

\newpage



\section{Introduction}

Compressed Sensing, or sparse recovery, is a powerful mathematical framework the goal of which is to reconstruct an approximately $k$-sparse vector $x\in \mathbb{R}^n$ from linear measurements $y = \Phi x$, where $\Phi \in \mathbb{R}^{m \times n}$. The most important goal is to reduce the number of measurements $m$ needed to approximate the vector $x$, avoiding the linear dependence on $n$. In discrete signal processing, where this framework was initiated \cite{crt06,d06}, the core principle that the sparsity of a signal can be exploited to recover it using much fewer samples than the Shannon-Nyquist Theorem. We refer to the matrix $\Phi$ as the sketching or sensing matrix, and $y = \Phi x$ as the sketch of vector $x$.

Sparse recovery is the primary task of interest in a number of applications, such as image processing \cite{tlwdbskb06,ldp07,ddtlskb08}, design pooling schemes for biological tests \cite{ecgrnomg09,dwgn+13}, pattern matching \cite{cepr07}, combinatorial group testing \cite{saz09, esaz09,kbg10}, localizing sources in sensor networks \cite{zbsgb05,zpb06}, as well as neuroscience \cite{gs12}. Furthermore, not surprisingly, tracking heavy hitters in data streams, also known as frequent items, can be captured by the sparse recovery framework \cite{m05,ch09,kszc03,i07}. In practice, streaming algorithms for detecting heavy hitters have been used to find popular destination addresses and heavy bandwidth users by AT\&T \cite{cjkmss04} or answer ``iceberg queries'' in databases \cite{fsgmu99}.


%

Sparse recovery attracts researchers from different communities, from both theoretical and practical perspective. During the last ten years, hundreds of papers have been published by theoretical computer scientists, applied mathematicians and electrical engineers that specialize in compressed sensing. While numerous algorithms using space linear in the universe size $n$ are known, \cite{d06,crt06,ir08,nt08,bir08,bd09,sv16} to name a few, our goal is to obtain algorithms that are sublinear, something that is crucial in many applications.

The desirable quantities we want to optimize may vary depending on the application. For example, in network management, $x_i$ could denote the total number of packets with destination $i$ passing through a network router. In such an application, storing the sketching matrix explicitly is typically not a tenable solution, since this would lead to an enormous space consumption; the number of possible IP addresses is $2^{32}$. Moreover, both the query and the update time should be very fast, in order to avoid congestion on the network. Incremental updates to $x$ come rapidly, and the changes to the sketch should also be implemented very fast; we note that in this case, even poly-logarithmic factors might be prohibitive. Interested readers can refer to \cite{kszc03, ev03} for more information about streaming algorithms for network management applications.

{\emph{``The goal of that research is to obtain encoding and recovery schemes with good compression rate (i.e., short sketch lengths) as well as good algorithmic properties (i.e., low encoding, update and recovery times).'' $\quad \quad$ -- Anna Gilbert and Piotr Indyk} \cite{gi10}} 

Sparse recovery schemes that are optimal across all axis are a challenge and an important theoretical and practical problem. For most sparse recovery tasks, we have algorithms that achieve different trade-offs for the various parameters of interest. One exception is the $\ell_{\infty}/\ell_2$ guarantee, for which the breakthrough work of Larsen, Nelson, Nguy{\^e}n and Thorup \cite{lnnt16} shows that this trade-off is unnecessary.

\subsection{Previous work}
 Since compressed sensing has been extensively studied in the literature for more than a decade, different guarantees of interest have been suggested ($x_{-k}$ is the vector that occurs after zeroing out every $i$ that does not belong among the largest $k$ coordinates). In what follows $x \in \mathbb{R}^n$ is the vector we want to sketch, $x'$ is the approximation to $x$, $k$ is the sparsity and $\epsilon$ is the fineness of the approximation. 

There are two different guarantees researchers consider in compressed sensing, one is the for-all guarantee and the other is the for-each guarantee. In the for-all guarantee, one wants to design a sketch that gives the desired result for all vectors $ x \in\mathbb{R}^n$. In the for-each guarantee, one wants to design a distribution over sketches that gives the desired result for a fixed vector $x \in \mathbb{R}^n$. We note that $\ell_{\infty}/\ell_2$, $\ell_2/\ell_2$ are impossible in the for-all model, unless $\Omega(n)$ measurements are used \cite{cdd09}. The standard approach for the for-all guarantee is via RIP matrices, satisfying the so-called Restricted Isometry Property. In what follows, we will refer to the for-each model, unless stated otherwise.

\begin{multicols}{2}
\begin{itemize}
  \item $\ell_{2}/\ell_2: \|x-x'\|_{2} \leq (1+\epsilon)\|x_{-k}\|_2$.
  \item $\ell_{\infty}/\ell_2: \|x-x'\|_{\infty} \leq (1+\epsilon) \frac{1}{\sqrt{k}}\|x_{-k}\|_2$.
  \item $\ell_{1}/\ell_1: \|x-x'\|_{1} \leq (1+\epsilon)\|x_{-k}\|_1$.
  \item $\ell_{2}/\ell_1: \|x-x'\|_{2} \leq (1+\epsilon)\frac{1}{\sqrt{k}}\|x_{-k}\|_1$.
\end{itemize}
\end{multicols}
The first set of schemes that initiated the research on compressed sensing are given in \cite{crt06,d06}. There the authors show, for any $x \in \mathbb{R}^n$, given $y = \Phi x$, it is possible to satisfy the $\ell_2/\ell_1$ guarantee for all vectors, if $\Phi$ is a Gaussian matrix with $O(k \log(n/k))$ rows. The schemes in \cite{cm06,ccf02} achieve the $\ell_{\infty}/\ell_2$ guarantee with $O(k \log n)$ measurementz, matching known lower bounds \cite{jst11}, $O(n \log n)$ decoding time and $O(\log n)$ update time.
The state of the art for $\ell_{\infty}/\ell_2$ is \cite{lnnt16}, which gives optimal number of measurements, sublinear decoding time, $O(\log n)$ update time and $1/\poly(n)$ failure probability. Price and Woodruff \cite{pw11} show that in order to get $\ell_2/\ell_2$ with constant failure probability$<1/2$ with the output being exactly $k$-sparse output requires $\Omega(\eps^{-2} k )$ measurements. They also showed non-$k$-sparse output requires $\Omega( \epsilon^{-1} k \log (n / k))$ measurements in the regime $\epsilon > \sqrt{k \log n / n}$, and gave an upper bound of $O(\epsilon^{-1} k \log n)$ measurements, showing thus a separation in the measurement complexity between $k$-sparse and $O(k)$-sparse output. Later, in the breakthrough work of Gilbert, Li, Porat and Strauss \cite{glps12} an algorithm that runs in sublinear time, and has $O(\log (n/k) \log^2k)$ column sparsity, was devised.
On generic norms, nearly optimal bounds have been given by Backurs, Indyk, Razenshteyn and Woodruff \cite{birw16}. We note, however, that their schemes are not computationally efficient: they have exponential running time, except in the case of Earth-Mover-Distance, which has time polynomial in $n$ and $\log^k n$.


\paragraph{Measurements.}

The number of measurements corresponds to physical resources: memory in monitoring devices of data streams, number of screens in biological applications, or number of filters in dynamic spectrum access (DSA) of radio signal \cite{huang2013applications}.

In applications such as medical imaging, it is crucial to reduce the number of measurements, since the radiation used in CT scans could potentially increase cancer risks for patients. 
For instance, \cite{psl+12} showed that a positive association between radiation exposure from CT scans in childhood and subsequent risk of leukemia and brain tumors.


For more applications, we refer the readers to \cite{qaisar2013compressive}. 

\paragraph{Encoding Time.} Designing algorithms with fast update/encoding time is a well-motivated task for streaming algorithms, since the packets arrive at an extremely fast rate \cite{tz12}; even logarithmic factors are crucial in these applications. Also in digital signal processing applications, in the design of cameras or satellites which demand rapid imaging, when we observe a sequence of images that are close to each other, we may not need to encode the new signal from the beginning, rather than encode only that part which differs from the current signal; the delay is then defined by the update time of our scheme. Moreover, in Magnetic Resonance Imaging (MRI) update time or encoding time defines the time the patient waits for the scan to happen. Improvement of the runtime has benefits both for patients and for healthcare economics \cite{ldsp08}.

A natural question is the following: what are the time limitations of our data structures, regarding update time? Regarding the streaming setting, the first lower bounds are given in \cite{lnn15} for non-adaptive algorithms. An algorithm is called non-adaptive if, during updates, the memory cells are written and read depend only on the index being updated and the random coins tossed before the stream is started to being processed. The lower bounds given concern both randomized and deterministic algorithms; the relevant bounds to sparse recovery are for $\ell_p/\ell_q$ estimation. However, for constant failure probability their results do not give anything useful, since their lower bounds start to kick in when the failure probability becomes very small, namely $o ( 2^{ -\sqrt{m \cdot \log n}} )$.

For the column sparsity (which could be smaller than update time, and hence the lower bounds in \cite{lnn15} might not apply\footnote{ the lower bounds in \cite{lnn15} also depend heavily on the streaming model, so they do not transfer necessarily to all scenarios where sparse recovery finds application.}), the only known lower bounds are known for RIP matrices, which are used in the for-all setting. To the best of our knowledge, the first non-trivial lower bounds were given by Nachin \cite{nachin2010lower}, and then extended by Indyk and Razenshteyn in \cite{ir13} for RIP-1  model-based compressed sensing matrices. Lower bounds for the column sparsity of RIP-$2$ matrices were given in Nelson and Nguy{\^e}n \cite{nelson2013sparsity}, and then to RIP-$p$ matrices in Allen-Zhu, Gelashvili and Razenshteyn \cite{agr16}. Roughly speaking, the lower bounds for $\ell_2$ indicate that if one aims for optimal measurements, $ m = k\log(n/k)$, in the regime $ k < n / \log^3n$, one cannot obtain column sparsity better than $\Omega( m )$. This indicates that the for-all case should be significantly worse, in terms of column sparsity, than the for-each case.

\paragraph{Decoding Time.} Another very important quantity we want to minimize is the time needed to reconstruct the approximation of $x$ from its compressed version. This quantity is of enormous significance in cases where the universe size is huge and we cannot afford to iterate over it. This is often the case in networking applications, where the universe size is the number of distinct IP addresses. In MRI applications the decoding time corresponds to the time needed to reconstruct the image after the scan has been performed. Decoding time is highly important also in satellite systems, modern radars and airspace surveillance, where compressed sensing have found extensive application \cite{radar}.


The sparse Fourier transform can be regarded as another variant of the sparse recovery problem, the results developed along that line being incomparable with traditional sparse recovery results, where we have freedom over the design of the sketching matrix. Sparse Fourier transforms can be divided into two categories, one being sparse discrete Fourier transform, and the other being sparse continuous Fourier transform \cite{bcgls14,m15,ps15,ckps16}. The sparse discrete Fourier transform also can be split into two lines, the first category of results \cite{ggims02,gms05,hikp12a,hikp12b,iw13,ikp14,ik14,k16,cksz17,k17,kvz19} carefully choose measurements that allow for sublinear recovery time, and the second category of results \cite{crt06a,rv08,b14,hr16} focus proving Restricted Isometry Property. The techniques used in sparse Fourier transforms are related to standard compressed sensing, since the main approaches try to implement arbitrary linear measurements via sampling Fourier coefficients.

\subsection{Our result}

Our main result is a novel scheme for $\ell_2/\ell_2$ sparse recovery. Our contribution lies in obtaining better decoding time, and $O(\log(n/k))$ column sparsity via new techniques. The problem of improving the column sparsity to $O(\log(n/k))$ was explicitly stated in \cite{glps12} as an open problem. We completely resolve the open problem. Moreover, as an important technical contribution, we introduce a different approach for sublinear-time optimal-measurement sparse recovery tasks. Since this iterative loop is a crucial component of almost all algorithms in sublinear-time compressed sensing \cite{ipw11,ps12,hikp12a, gnprs13,ikp14,k16,glps17,cksz17,k17,lnw17,nswz18}, we believe our new approach and ideas will appear useful in the relevant literature, as well as be a starting point for re-examining sparse recovery tasks under a different lens, and obtaining improved bounds.

\subsection{Notation}

For $ x \in \mathbb{R}^n$ we let $H(x,k)$ to be the set of the largest $k$ in magnitude coordinates of $x$. We also write $x_S$ for the vector obtained after zeroing out every $x_i, \notin S$, and $x_{-k} = x_{[n] \setminus H(x,k)}$. We use $\|\cdot\|_p$ to denote the $\ell_p$ norm of a vector, i.e. $\|x\|_p = \left( \sum_{i=1}^n |x_i|^p \right )^{1/p}$.  

\subsection{Technical statements}

\begin{table}[!t]
\begin{center}
    \begin{tabular}{| l | l | l | l| }
    \hline
    {\bf Reference} & {\bf Measurements} & {\bf Decoding Time} & {\bf Encoding Time}   \\ \hline 
    \cite{d06,crt06} & $k \log (n/k)$ & LP & $k \log (n/k)$ \\ \hline
    \cite{ccf02, cm06} & $\epsilon^{-2} k \log n $ & $\epsilon^{-1} n \log n$ & $\log n$ \\ \hline
    \cite{nt08} & $ k\log (n/k)$ & $nk\log (n/k)$ & $\log(n/k)$ \\ \hline
    \cite{cm04} & $\epsilon^{-2}k \log^2 n$ & $\epsilon^{-1}k \log^c n$ & $\log^2 n$ \\ \hline
    \cite{ccf02, cm06} & $\epsilon^{-2} k \log^c n $ & $\epsilon^{-1} k \log^2 n$ & $\log^c n$ \\ \hline
    \cite{glps12} & $\epsilon^{-1} k \log(n/k) $  &  $\epsilon^{-1} k \log^c n$ & $ \log (n/k) \cdot \log^2 k   $   \\ \hline 
    Our result & $\epsilon^{-1} k  \log(n/k)$ &  $\epsilon^{-1} k  \log^2(n/k)$ & $\log(n/k)$ \\\hline
    \end{tabular}
\end{center}
\caption{(A list of $\ell_2/\ell_2$-sparse recovery results). We ignore the ``$O$'' for simplicity. LP denotes the time of solving Linear Programs \cite{cls19}, and the state-of-the-art algorithm takes $n^{\omega}$ time where $\omega$ is the exponent of matrix multiplication. The results in \cite{d06,crt06,nt08} do not explicitly state the $\ell_2/\ell_2$ guarantee, but their approach obtains it by an application of the Johnson-Lindenstrauss Lemma; they also cannot facilitate $\epsilon < 1$, obtaining thus only a $2$-approximation. The $c$ in previous work is a sufficiently large constant, not explicitly stated, which is defined by probabilistically picking an error-correcting code of short length and iterating over all codewords. We estimate $ c \geq 4$. We note that our runtime is (almost) achieved } 
\label{tab:ell2_sparse_recovery} 
\end{table}


We proceed with the definition of the $\ell_2/\ell_2$ sparse recovery problem.
\begin{problem}[$\ell_2/\ell_2$ sparse recovery]
Given parameters $\epsilon, k, n$, and a vector $x \in \R^n$.  The goal is to design some matrix $\Phi \in \R^{m \times n}$ and a recovery algorithm ${\cal A}$ such that we can output a vector $x'$ based on measurements $y = \Phi x$,
\begin{align*}
\| x' - x \|_2 \leq (1+\epsilon) \min_{k\textrm{-}\sparse~z \in \mathbb{R}^n} \| z - x \|_2.
\end{align*} 
We primarily want to minimize $m$ (which is the number of measurements), the running time of ${\cal A}$ (which is the decoding time) and column sparsity of $\Phi$. \end{problem}

In table~\ref{tab:ell2_sparse_recovery}, we provide a list of the previous results and compare with ours. Here, we formally present our main result.
\begin{theorem}[stronger $\ell_2/\ell_2$ sparse recovery] \label{thm:main_sparse_recovery}
There exists a randomized construction of a linear sketch $\Phi \in \mathbb{R}^{m \times n}$ with $m = O(\epsilon^{-1} k \log (n/k)  )$ and column sparsity $O( \log (n/k) )$, such that given $y = \Phi x$, we can find an $O(k)$-sparse vector $x' \in \mathbb{R}^n$ in $O(m \cdot \log(n/k))$ time such that 
\begin{align*}
\| x' - x \|_2 \leq (1+\epsilon) \min_{k\textrm{-}\sparse~z \in \mathbb{R}^n} \| z - x \|_2.
\end{align*}
holds with $9/10$ probability. 
\end{theorem}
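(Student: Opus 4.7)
The plan is to avoid the standard iterative peeling framework (identify heavy elements, subtract, repeat) and instead build a single-layer sketch whose analysis directly accounts for the $\ell_2^2$ mass of heavy coordinates that the hashing scheme fails to isolate. Concretely, I would take a single hash function $h:[n]\to[B]$ with $B=\Theta(k/\eps)$ buckets, combined with random signs $\sigma_i\in\{-1,+1\}$, and maintain for each bucket $b$ both a CountSketch-style value $y_b=\sum_{i:h(i)=b}\sigma_i x_i$ and a sequence of $O(\log(n/k))$ bit-test measurements $y_{b,t}=\sum_{i:h(i)=b,\,\mathrm{bit}_t(i)=1}\sigma_i x_i$. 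Each coordinate lands in exactly one bucket and contributes to its $O(\log(n/k))$ identification measurements, yielding column sparsity $O(\log(n/k))$ and total measurement count $B\cdot O(\log(n/k))=O(\eps^{-1}k\log(n/k))$.

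For decoding, I would iterate over the $B$ buckets and, for each one, attempt to extract a candidate index by comparing the bit-test $y_{b,t}$ against $y_b-y_{b,t}$: if a single coordinate $i$ dominates bucket $b$ in magnitude, then the larger-magnitude side reveals bit $t$ of $i$, giving an index $\hat{\imath}$ in $O(\log(n/k))$ time per bucket, hence $O(m\log(n/k))$ overall. I would then output $\sigma_{\hat{\imath}}y_b$ as the estimate, keeping at most one candidate per bucket (and at most $O(k)$ candidates overall by thresholding at $\Theta(\|x_{-k}\|_2/\sqrt{B})$). The correctness of identification in a given bucket $b$ reduces to showing that its dominant coordinate exceeds the in-bucket noise from both tail and other head coordinates by a constant factor.

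The heart of the proof, and the main obstacle, is establishing the $(1+\eps)$ $\ell_2/\ell_2$ guarantee without iteration. I would classify each heavy index $i\in H(x,k)$ by its $\ell_2^2$-weight and argue, via a levels/weighting argument, that
\begin{align*}
\E\Big[\sum_{i\in H(x,k)\setminus R} x_i^2\Big]\;\leq\;O(\eps)\,\|x_{-k}\|_2^2,
\end{align*}
where $R$ is the set of heavy indices correctly isolated and identified. The key exploitable fact, which previous iterative schemes do not directly use, is that heavier coordinates are less likely to be drowned out by collisions: $\Pr[i\notin R]$ scales with the probability that $x_i^2$ is dominated by either tail mass $\|x_{-k}\|_2^2/B$ landing in the same bucket or by collisions with other head coordinates, both controllable via pairwise independence and a Markov-style bound on collision mass $\sum_{j\in H,\,j\neq i}x_j^2\mathbf{1}[h(j)=h(i)]$. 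Combined with the per-bucket estimation error $\|x_{-k}\|_2^2/B$ summed over the $O(k)$ recovered coordinates, this yields total error $(1+O(\eps))\|x_{-k}\|_2^2$, and rescaling $\eps$ finishes the proof.

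The subtle step I expect to be the bottleneck is handling head-head collisions cleanly in a single pass: because there is no peeling, a heavy coordinate can be missed because another heavy coordinate collides with it, so the analysis must charge the lost mass of the lighter one to the kept mass of the heavier one (plus tail). I would formalize this by a chaining/levels decomposition of $H(x,k)$ into weight classes $H_\ell=\{i:2^{\ell}\leq x_i^2<2^{\ell+1}\}$, bounding the probability that $i\in H_\ell$ is ruined by any collider of weight $\geq x_i^2/2$ via $|H_{\geq\ell-1}|/B$, and observing that $\sum_\ell 2^\ell|H_\ell||H_{\geq\ell-1}|/B\leq O(\|x_H\|_2^2\cdot k/B)=O(\eps)\|x_H\|_2^2$, which after a standard reduction (restricting attention to the regime $\|x_H\|_2^2\lesssim\|x_{-k}\|_2^2/\eps$) gives the desired bound.
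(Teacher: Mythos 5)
Your high-level philosophy --- charge each head coordinate $i$ a miss probability that decays in its weight $x_i^2$, and show the total expected missed head mass is $O(\epsilon)\|x_{-k}\|_2^2$, so that no peeling is needed --- is exactly the philosophy of the paper. But the specific sketch you design and the analysis you sketch for it have several genuine gaps, and the paper's construction is different in precisely the places where your proposal runs into trouble.

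First, you use random signs together with bit-testing on a flat hash. Bit-testing requires that a single coordinate \emph{dominates} its bucket, and with random signs two head coordinates of comparable magnitude that collide can cancel: the per-pair cancellation probability is a constant, so for moderate $\epsilon$ the expected number of cancelling head-head collisions is $\Theta(k^2/B)=\Theta(\epsilon k)$, which is not negligible, and each such event destroys the mass of both coordinates. The paper explicitly replaces random signs by \emph{Gaussians} (Definition~\ref{def:sketching_matrix}) so that a bucket's measurement is $\|x_J\|_2\cdot g$ for $g\sim\mathcal N(0,1)$: more head mass colliding only makes detection easier, never harder. Equally important, the paper does not try to extract an index from a bucket at all; it performs a threshold test $z_j\geq\eta V$ on \emph{intervals} in a tree (Algorithm~\ref{alg:btree}) and descends, so two heavy coordinates in the same interval or bucket are both kept and only get separated at a deeper level with fresh hashing. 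Your bit-test has no analogue of this robustness: once two heads collide in a bucket you lose both, and your weight-class analysis only accounts for colliders with $x_j^2\geq x_i^2/2$, not for the union of several smaller colliders whose combined mass exceeds $x_i^2$.

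Second, the reduction you invoke, ``restrict to $\|x_H\|_2^2\lesssim\|x_{-k}\|_2^2/\epsilon$,'' is not a standard reduction and I do not see how to justify it. When $\|x_H\|_2^2\gg\|x_{-k}\|_2^2/\epsilon$ (e.g.\ a few enormous head coordinates on top of a tiny tail), your bound $O(\epsilon)\|x_H\|_2^2$ on lost head mass is vacuous, yet the $(1+\epsilon)$-guarantee then demands recovering essentially all of $H(x,k)$ with probability close to $1$, which a single flat hash with constant per-coordinate isolation probability cannot deliver. The paper sidesteps this regime entirely: because $\|x_J\|_2^2\geq\|x_{I_{\ell,j}}\|_2^2$ always, Lemma~\ref{lem:large} gives a per-interval miss probability $C_j^{-R/6}$ independent of what else collides, so the lost mass is bounded directly against $\|x_{-k}\|_2^2$ and never against $\|x_H\|_2^2$.

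Third, with zero repetitions your decoder must get all $\Theta(\log(n/k))$ bit tests right. Even with Gaussian noise, a bit test on a coordinate with $x_i^2 = C_i\cdot(\epsilon/k)\|x_{-k}\|_2^2$ fails with probability $\Theta(1/\sqrt{C_i})$, and a union bound over $\log(n/k)$ bits gives a per-coordinate miss probability of order $\log(n/k)/\sqrt{C_i}$. Coordinates with $1\leq C_i\leq\log^2(n/k)$ then contribute expected lost mass up to $\Theta(\log^2(n/k))\cdot\epsilon\|x_{-k}\|_2^2$, far above target. The paper pays a $\Theta(\log\log(\epsilon n/k))$ repetition factor per level (Definition~\ref{def:R}) precisely to drive the per-level miss probability down to $C_j^{-R/6}\ll 1/(H\cdot C_j)$ so that a union bound over the $H$ levels still gives the $O(\epsilon/k)\|x_{-k}\|_2^2$ per-coordinate expected loss needed in Lemma~\ref{lem:final}; the tree degree $\log q/\log\log q$ is chosen so that these repetitions do not inflate the measurement count above $O(\epsilon^{-1}k\log(\epsilon n/k))$.

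Finally, two structural pieces are missing from your proposal: you threshold candidates at $\Theta(\|x_{-k}\|_2/\sqrt B)$ but do not say how the decoder learns $\|x_{-k}\|_2$ (the paper devotes Section~\ref{sec:tail_estimation} to an $O(\log(1/\delta))$-row tail-estimation sketch for exactly this), and you rely on the raw bucket value $\sigma_{\hat\imath}y_b$ as the final estimate, which is only accurate in expectation and can be arbitrarily wrong on buckets suffering head-head collisions; the paper therefore separates the set produced by identification and pruning from the final estimation, delegating the latter to Price's set-query sketch (Lemma~\ref{lem:set_query_price}).
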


\begin{remark}
In the regime where $k$ is very close to $n$, for example $ k = n / \mathrm{poly}(\log n)$, we get an exponential improvement on the column sparsity over \cite{glps12}. In many applications
of compressed sensing, this is the desired regime of interest, check for example Figure 8 from \cite{bi08}: $n = 71,542$ while $m \geq 10,000$, which corresponds to $k$ being very close to $n$.
\end{remark}

\begin{remark}
As can be inferred from the proof, our algorithms runs in time $O( (k/\epsilon) \log^2(\epsilon n/k) + (k/\epsilon) \log(1/\epsilon))$, which is slightly better than the one stated in Theorem \ref{tab:ell2_sparse_recovery}. The algorithm in \cite{hikp12a} achieves also the slightly worse running time of $O( (k/\epsilon) \log n \log (n/k) )$. That algorithm was the first algorithm that achieved running time $O(n \log n)$ for all values of $k,\epsilon$ for which the measurement complexity remained sublinear, smaller than $\gamma n$, for some absolute constant $\gamma$. A careful inspection shows that our algorithm achieves running time that is \textbf{always} sublinear, as long as the measurement complexity remains smaller than $\gamma n$.
\end{remark}

\subsection{Overview of techniques and difference with previous work}
This subsection is devoted to highlighting the difference between our approach and the approach of \cite{glps12}. We first give a brief high-level description of the state of the art algorithm before our work, then discuss our techniques, and try to highlight why the previous approach could not obtain the stronger result we present in this paper. Lastly, we show how our ideas can be possibly applied to other contexts.

\subsubsection{Summary of \cite{glps12}.} The algorithm of \cite{glps12} consists of $O(\log k)$ rounds: in the $r$-th round the algorithm finds a constant fraction of the remaining heavy hitters. Beyond this iterative loop lies the following idea about achieving the $\ell_2/\ell_2$ guarantee: in order to achieve it, you can find all but $\frac{k}{3^r}$ heavy hitters $i$ such that $ |x_i|^2 = \Omega(\frac{2^r \epsilon}{k} \|x_{-k}\|_2^2)$.  This means that the algorithm is allowed to ``miss'' a small fraction of the heavy hitters, depending on their magnitude. For example, if all heavy hitters are as small as $\Theta(\sqrt{\epsilon/k} \|x_{-k}\|_2)$, a correct algorithm may even not find any of them. This crucial observation leads naturally to the main iterative loop of combinatorial compressed sensing, which, as said before, loop proceeds in $O(\log k)$ rounds. Every round consists of an identification and an estimation step: in the identification step most heavy hitters are recognized, while in the estimation step most of them are estimated correctly. Although in the estimation step some coordinates might have completely incorrect estimates, this is guaranteed (with some probability) be fixed in a later round. The reason why this will be fixed is the following. If a coordinate $i$ is badly estimated, then it will appear very large in the residual signal and hence will be identified in later rounds, till it is estimated correctly. One can observe that the correct estimation of that round for coordinate $i$ cancels out (remedies) the mistakes of previous rounds on coordinate $i$. Thus, the identification and estimation procedure, which are interleaved, work complementary to each other. The authors of \cite{glps12} were the first that carefully managed to argue that identifying and estimating a constant fraction of heavy hitters per iteration, gives the optimal number of measurements.

More formally, the authors prove the following iterative loop invariant, where $k_r = k3^{-r}, \epsilon_r = \epsilon_r2^{-r}$ for $ r \in [R]$ with $R = \log k$:
Given $ x \in \mathbb{R}^n$ there exists a sequence of vectors $\{ x^{(r)} \}_{r \in [R]} $, such that $x^{(r+1)} = x^{(r)} - \hat{x}^{(r)}$ and 
\begin{align}\label{eq:iterative_procedure}
\|(x - \hat{x})_{-k_r}\|_2^2 \leq ( 1+\epsilon_r) \|x_{-k_r}\|_2^2.
\end{align}

In the end, one can apply the above inequality inductively to show that 
\begin{align*}
\|x - \sum_{r =1}^R \hat{x}^{(r)} \|_2^2 \leq ( 1+\epsilon)\|x_{-k}\|_2^2.
\end{align*}

We now proceed by briefly describing the implementations of the identification and the estimation part of \cite{glps12}. In the identification part, in which lies the main technical contribution of that work, every coordinate $i \in [n]$ is hashed to $O(k/\epsilon)$ buckets and in each bucket $O(\log(\epsilon n/k))$-measurement scheme based on error-correcting codes is used to identify a heavy hitter; the authors carefully use a random error-correcting code of length $O(\log \log(\epsilon n/k))$, so they afford to iterate over all codewords and employ a more sophisticated approach and use nearest-neighbor decoding. This difference is one of the main technical ideas that allow them to obtain $O(\epsilon^{-1}k \log(n/k))$ measurements, beating previous work, but it is also the main reason why they obtain $k\cdot \poly(\log n)$ decoding time\footnote{The authors do not specifically address the exponent in the $\poly(\log n)$, but we estimate it to be $\geq 4$.}: performing nearest neighbor decoding and storing the code per bucket incurs additional $\poly(\log n)$ factors. Moreover, for every iteration $r$ they need to repeat the identification scheme $r$ times in order to bring down the failure probability to $2^{-r}$, so that they afford a union-bound over all iterations. This leads to an additional $O(\log^2 k)$ factor in the update time. The estimation step consists of hashing to $O(k/\epsilon)$ buckets and repeating $O(\log(1/\epsilon))$ times. Since the identification step returns $O(k/\epsilon)$ coordinates, the $O(\log(1/\epsilon))$ repetitions of the estimation step ensure that at most $k / 3$ coordinates out of the $O(k/\epsilon)$ will not be estimated correctly. This is a desired property, since it allows the algorithm to keep the $2k$ coordinates with the largest estimates, subtract them from $x$ and iterate.

In the next section, we will lay out our approach which improves the decoding time and the column sparsity of \cite{glps12}. The iterative procedure of \cite{glps12} lies in the heart of most compressed sensing schemes, so we believe that this new approach could be applied elsewhere in the sparse recovery literature.

\subsubsection{Our approach}\label{sec:intro_btree}

As we mentioned before, our approach is totally different from previous work, avoiding the iterative loop that all algorithms before applied.
Our algorithm consists of four steps, each one being a different matrix responsible for a different task. The first matrix, with a constant number of rows allows us to approximate the tail of the vector $x$, an approximation that will appear useful in the next step. The second matrix along with its decoding procedure, which should be regarded as the identification step, enables us to find a list $L$ of size $O(k/\epsilon)$ that contains $k$ coordinates\footnote{We note that this term is exactly $k$, not $O(k)$. Although not important for our main result, it will be crucial for some of our applications of our techniques.}, which are sufficient for the $\ell_2/\ell_2$ guarantee. This matrix has $O(\epsilon^{-1} k \cdot \log( \epsilon n / k)) )$ rows. The third matrix with its decoding procedure, which should be regarded as the pruning step, takes the aforementioned list $L$, and prunes it down to $O(k)$ coordinates, which are sufficient for the $\ell_2/\ell_2$ guarantee. This matrix again has $O(\epsilon^{-1} k \cdot \log(1/\epsilon))$ rows, for a total of $O(\epsilon^{-1} k \log(n/k))$ rows. The last matrix is a standard set-query sketch.


\paragraph{Step 1: Tail Estimation}

\begin{lemma}[tail estimation]\label{lem:tail_estimation}
Let $c_1 \geq 1$ denote some fixed constant. There is an oblivious construction of matrix $\Phi \in \R^{m \times n}$ with $m = O(\log (1/\delta))$ and column sparsity $O(\log(1/\delta))$ such that, given $\Phi x$, there is an algorithm that outputs a value $V \in \R$ in time $O(m)$ such that
\begin{align*}
\frac{1}{10k} \| x_{- c_1 \cdot k } \|_2^2 \leq V \leq \frac{1}{k} \| x_{-k} \|_2^2
\end{align*}
holds with probability $1-\delta$.
\end{lemma}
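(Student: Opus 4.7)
The plan is to instantiate a sub-sampled AMS-style $\ell_2$ estimator and amplify via the median over $R=\Theta(\log(1/\delta))$ independent repetitions. Concretely, pick $B=\Theta(k)$; for each $r\in[R]$ draw a pairwise-independent hash $h_r:[n]\to[B]$ and an independent $4$-wise-independent sign function $\sigma_r:[n]\to\{\pm1\}$, and take the single measurement
\[ y_r \;:=\; \sum_{i:\,h_r(i)=0} \sigma_r(i)\,x_i, \]
so that $p:=1/B$ plays the role of a sub-sampling rate. The sketch then consists of $m=R=O(\log(1/\delta))$ measurements, and the column sparsity is $\max_i |\{r:h_r(i)=0\}|\le R = O(\log(1/\delta))$. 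The estimator is $V := \alpha\cdot \operatorname{median}_{r\in[R]} y_r^2$ for a constant $\alpha$ determined below; decoding is $O(R)=O(m)$.

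\paragraph{Upper bound ($V\le \|x_{-k}\|_2^2/k$).}
Define $E_r := \{h_r(i)\ne 0 \text{ for all } i\in H(x,k)\}$. Pairwise independence gives $\Pr[E_r]\ge 1-k/B$, and conditional on $E_r$ the pairwise-independent signs yield $\E[y_r^2\mid E_r] = p\|x_{-k}\|_2^2$. Markov then gives $\Pr[y_r^2 > Cp\|x_{-k}\|_2^2 \mid E_r] \le 1/C$, so $\Pr[y_r^2 > Cp\|x_{-k}\|_2^2] \le k/B + 1/C$; choosing $B=\Theta(k)$ with $k/B$ and $1/C$ both sufficiently small absolute constants, this is strictly below $1/2$. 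A standard Chernoff-on-the-median bound over the $R$ repetitions then gives $\Pr[\operatorname{median}_r y_r^2 > Cp\|x_{-k}\|_2^2] \le \delta/2$, and picking $\alpha$ so that $\alpha C p = 1/k$ delivers the upper bound.

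\paragraph{Lower bound ($V\ge \|x_{-c_1 k}\|_2^2/(10k)$) and main obstacle.}
Let $E_r':=\{h_r\text{ misses }H(x,c_1 k)\}$; taking $c_2:=B/k$ large enough, $\Pr[E_r']\ge 1-c_1/c_2$ is close to $1$. Conditional on $E_r'$ one has $\E[y_r^2\mid E_r'] = p\|x_{-c_1 k}\|_2^2$, and with the $4$-wise-independent signs and pairwise-independent sub-sampling,
\[ \E[y_r^4\mid E_r'] \;\le\; 3p^2\|x_{-c_1 k}\|_2^4 \;+\; p\|x_{-c_1 k}\|_4^4. \]
Paley--Zygmund then produces a constant-probability lower bound of the form $\Pr[y_r^2 \ge \beta\cdot p\|x_{-c_1 k}\|_2^2 \mid E_r']\ge \gamma$ for absolute constants $\beta,\gamma>0$, and Chernoff amplifies this over the $R$ repetitions (using a suitable quantile of $\{y_r^2\}_r$, not necessarily the strict median). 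The $\alpha$ chosen for the upper bound is then verified to also give $V\ge \|x_{-c_1 k}\|_2^2/(10k)$. The hard part is exactly this anti-concentration: when $\|x_{-c_1 k}\|_2^2$ is carried by very few coordinates the ratio $\|x_{-c_1 k}\|_4^4/(p\|x_{-c_1 k}\|_2^4)$ degrades and Paley--Zygmund weakens. This is exactly where the lemma's slack parameters matter: the freedom to take $c_1$ to be a sufficiently large constant (so that $E_r'$ holds with probability arbitrarily close to $1$ relative to the probability of $E_r$) together with the factor $1/10$ in the lower bound absorb this fourth-to-second moment ratio and make the Chernoff-quantile step go through.
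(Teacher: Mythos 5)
Your upper-bound analysis is standard and fine, but the lower-bound argument as written does not close, and the paper explicitly flags this: in Section~\ref{sec:tail_tail_estimation_in_the_ell_p_norm} it notes that trying to prove the lower bound ``for $p=2$ with random signs, instead of Gaussians, by applying the Paley--Zygmund inequality \ldots\ does not give the desired result.'' The slack you invoke (large $c_1$, the $1/10$ factor, picking a quantile other than the median) does not rescue the bound. Consider the concrete instance $x_1=\cdots=x_{c_1 k+1}=1$ and $x_i=0$ otherwise, so that $\|x_{-c_1 k}\|_2^2=1$ and the claimed lower bound demands $V\ge 1/(10k)$. In your setup you \emph{condition on $E_r'$}, i.e.\ on $h_r$ missing $H(x,c_1 k)$; conditional on that, the only surviving nonzero coordinate is the single index $c_1 k+1$, which is sampled with probability $p=1/B$. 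So $\Pr[y_r^2>0\mid E_r']=p=\Theta(1/(c_1 k))$, not a constant $\gamma>0$. This is precisely the degradation $\|x_{-c_1 k}\|_4^4/(p\|x_{-c_1 k}\|_2^4)=\Theta(1/p)$ you point to, and neither increasing $c_1$ nor the $1/10$ factor changes its $\Theta(k)$ scale (making $c_1$ or $B/k$ larger only drives $p$ down and makes it worse). The median of $R=O(\log(1/\delta))$ copies is then $0$ with high probability, so $V=0$, violating the lower bound; switching to a more extreme quantile breaks the upper-bound side because the $O(1/B)$-probability events in which a head coordinate is caught now show up in that quantile.

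The paper avoids all of this by changing both the randomness and the lower-bound mechanism. First, it replaces signs with Gaussians (more generally a $p$-stable distribution), so that $y_t=\Delta_t\cdot g$ with $g\sim\mathcal N(0,1)$, where $\Delta_t^2=\sum_i\delta_{i,t}x_i^2$: there is no sign cancellation to fight, and the anti-concentration of $|y_t|$ around $\Delta_t$ is immediate from continuity of the $p$-stable density (Claims~\ref{cla:tail_upper_bound_on_y_t} and~\ref{cla:tail_lower_bound_on_y_t}). Second, and crucially, the lower bound on $\Delta_t$ (Claim~\ref{cla:tail_upper_bound_on_Delta_t}) does \emph{not} condition on missing the head: the head coordinates are allowed to be sampled and actively help. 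The argument sorts $x$, partitions $[n]$ into blocks of size $C_0k$, and uses a biased random walk (Theorem~\ref{lem:random_walk}) to show that with constant probability every prefix of blocks has a majority of ``hit'' blocks, which yields an injection from unhit blocks to earlier hit blocks and hence $\Delta_t^p\ge \tfrac{1}{2s}\|x_{-s}\|_p^p$ deterministically on that event. This exchange argument is the genuinely new idea replacing Paley--Zygmund, and it is what makes the $1/(10k)$ lower bound hold even when $x_{-c_1 k}$ is carried by a single coordinate. If you want to keep your sub-sampling framework, you would need to (a) drop the conditioning on $E_r'$ in the lower bound and (b) replace the Paley--Zygmund step with an argument of this flavour (or move to Gaussians), rather than hoping the constant slack absorbs a $\Theta(k)$ loss.
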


 Our first step towards the way for stronger sparse recovery is the design a routine that estimates the $\ell_2$ norm of the tail of a vector $ x \in \mathbb{R}^n$, which we believe might be interesting in its own right. More generally, our algorithm obtains a value $V$ such that
$
\frac{1}{10k} \|x_{-c_1 \cdot k}\|_p^p \leq V \leq \frac{1}{k}\|x_{-k}\|_p^p,
$
using $O(1)$ measurements. Here $c_1$ is some absolute constant. 
To obtain this result we subsample the vector at rate $\Theta(1/k)$ and then use a $p$-stable distribution to approximate the subsampled vector.  While the upper bound is immediate, the Paley-Zygmund inequality does not give a sufficient result for the lower bound, so more careful arguments are needed to prove the desired result. We obtain our result by employing a random walk argument. 

One additional possible application in sparse recovery  applications where a two-stage scheme is allowed, e.g. \cite{dlty06,ddtlskb08}, would be to first use the above routine to roughly estimate how many heavy coordinates exist, before setting up the measurements. For example, we could first run the above routine for $k =1 , 2 ,2^2,\ldots, 2^{\log n}$, obtaining values $V_1,V_2,V_4,\ldots, V_{\log n}$, and then use these values to estimate the size of the tail of the vector is, or equivalently approximate the size of the set of the heavy coordinates by a number $k'$. We can then run a sparse recovery algorithm with sparsity $k'$. Details can be found in Section~\ref{sec:tail_estimation}.

\paragraph{Step 2: The Identification Step}

The goal of this step is to output a list $L$ of size $O(k/\epsilon)$ that contains a set of $k$ coordinates that are sufficient to satisfy the $\ell_2/\ell_2$ guarantee. The column sparsity we are shooting for at this point is $O(\log (\epsilon n / k))$ 
, and the decoding time should be $O(m \log(\epsilon n/k))$.

\begin{lemma}[identification sketch]\label{lem:identification_sketch}
There exists a randomized construction of a matrix $\Phi \in \mathbb{R}^{m \times n}$, with $m = O(\epsilon^{-1} k \cdot \log (\epsilon n / k) )$ and column sparsity $O(\log (\epsilon n / k))$, such that given $ y= \Phi x$, one can find a set $L$ of size $O(k/\epsilon)$ in $O(m \log(\epsilon n/k))$ time, such that
\begin{align*} 
\exists T \subset S, |T| \leq k: \|x - x_T \|_2 \leq ( 1+\epsilon) \|x_{-k}\|_2,
\end{align*}
holds with probability at least $9/10$.
\end{lemma}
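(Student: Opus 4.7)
The plan is to realize $\Phi$ as a single-hash-table sketch augmented with per-bucket bit-testing, and to argue correctness through a layered heaviness decomposition that tracks how the per-coordinate failure probability decays with coordinate magnitude. Pick $B=\Theta(k/\epsilon)$, a pairwise-independent hash $h:[n]\to[B]$, and independent random signs $\sigma_i\in\{\pm 1\}$. For each bucket $b$ maintain a bucket sum $c_b=\sum_{h(i)=b}\sigma_ix_i$ together with $\log(\epsilon n/k)$ auxiliary sums $c_{b,\ell}=\sum_{h(i)=b,\,\mathrm{bit}_\ell(\mathrm{off}(i))=1}\sigma_ix_i$, where $\mathrm{off}(i)\in[\epsilon n/k]$ is the offset of $i$ within its bucket (obtained, e.g., from a random permutation bucketing $i\mapsto(h(i),\mathrm{off}(i))$). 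The total measurement count is $O(\epsilon^{-1}k\log(\epsilon n/k))$ and every coordinate participates in $1+\log(\epsilon n/k)=O(\log(\epsilon n/k))$ rows.

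The decoder loops over the $B$ buckets; for each $b$ it recovers the offset bit-by-bit by setting the $\ell$-th bit to $1$ iff $|c_{b,\ell}-c_b|<|c_{b,\ell}|$ (so that $c_{b,\ell}$ lies closer to $c_b$ than to $0$), inverts the bucketing to obtain a candidate $\hat i_b$, and forms $L=\{\hat i_b:b\in[B]\}$. Obviously noisy buckets can be pre-filtered by comparing $|c_b|^2$ against a constant multiple of the tail estimate $V$ from Lemma~\ref{lem:tail_estimation}. The total decoding cost is $O(B\log(\epsilon n/k))=O(m)$, comfortably inside the claimed $O(m\log(\epsilon n/k))$ budget, and by construction $|L|\leq B=O(k/\epsilon)$.

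For correctness, partition the top-$k$ coordinates into magnitude layers $H_r=\{i\in H(x,k):x_i^2\in[2^r,2^{r+1})(\epsilon/k)\|x_{-k}\|_2^2\}$ for $r\geq 0$; indices in $H(x,k)$ with squared mass below $(\epsilon/k)\|x_{-k}\|_2^2$ contribute at most $\epsilon\|x_{-k}\|_2^2$ in aggregate even if entirely omitted from $L$. Fix $i\in H_r$: the bit test in bucket $h(i)$ recovers $i$ provided (a) no $j\neq i$ with $x_j^2\gtrsim x_i^2$ hashes to $h(i)$ and (b) the bucket's squared tail noise is at most a constant times $2^r(\epsilon/k)\|x_{-k}\|_2^2$. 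The number of competitors of level $\geq r$ is at most $k+k/(2^r\epsilon)$, so pairwise independence bounds (a) by $O(2^{-r}+\epsilon)$, while Markov's inequality on the expected bucket tail noise $O((\epsilon/k)\|x_{-k}\|_2^2)$ bounds (b) by $O(2^{-r})$. Summing $|H_r|\cdot 2^{r+1}(\epsilon/k)\|x_{-k}\|_2^2\cdot O(2^{-r}+\epsilon)$ over $r$ yields expected missed mass $O(\epsilon)\|x_{-k}\|_2^2$ from the $2^{-r}$ term (using $\sum_r|H_r|\leq k$), plus a contribution from the $\epsilon$ term that needs separate handling. Choosing $T=H(x,k)\cap L$ (so $|T|\leq k$ and $T\subset L$) gives $\|x-x_T\|_2^2=\|x_{-k}\|_2^2+\|x_{H(x,k)\setminus L}\|_2^2$, and Markov's inequality on the missed mass turns the expected bound into a constant-probability guarantee after rescaling $\epsilon$.

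The hard part is the $\epsilon$ term in the per-coordinate miss probability, which arises from collisions between two top-$k$ coordinates of comparable magnitude: when their signs cancel, the bucket sum reads near-zero and neither coordinate is identified, so the naive bound on missed mass inflates to $O(\epsilon)\|x_H\|_2^2$, which could overwhelm $O(\epsilon)\|x_{-k}\|_2^2$ if the head concentrates much more mass than the tail. Closing this gap is the heart of the argument; the cleanest fix I would try first is to thicken the scheme with $R=O(1)$ independent hash-and-sign repetitions (column sparsity stays $O(\log(\epsilon n/k))$), observing that the cancellation probability for a fixed pair drops to $2^{-R}$ across repetitions while the tail-threshold filter driven by Lemma~\ref{lem:tail_estimation} suppresses the spurious candidates introduced by the extra tables; an alternative route exploits the existential framing of the lemma by substituting the missed heavies with other surviving candidates in $L$ and charging the residual error back to the (accurately estimated) tail. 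Either route delivers the desired $(1+\epsilon)\|x_{-k}\|_2$ bound with probability $9/10$ after geometric summation over the levels.
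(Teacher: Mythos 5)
Your proposal takes a genuinely different route from the paper (flat CountSketch plus bit-testing, versus a hierarchical interval forest), and you correctly isolate the obstruction yourself, but the fix you sketch does not close the gap. With any constant number $R$ of independent hash-and-sign tables, a coordinate $i \in H(x,k)$ fails bit-testing in a given table whenever another comparably heavy coordinate shares its bucket, an event of probability $\Theta(\epsilon)$ per table (and the collision contaminates \emph{every} bit-test, so a threshold filter on $|c_b|$ cannot save it -- the contaminated bucket may in fact look large). Thus $\Pr[i \notin L] \gtrsim \epsilon^{R}$ even for arbitrarily large $|x_i|$, and the expected missed mass is at least $\epsilon^{R}\|x_{H(x,k)}\|_2^2$, which for any constant $R$ can swamp $\epsilon\|x_{-k}\|_2^2$ (take $k$ equal heavy entries of magnitude $M \to \infty$ with $\epsilon$ constant). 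The ``substitute with other survivors'' idea fares no better: if $i$ is missed, any replacement $j \in L \setminus H(x,k)$ satisfies $|x_j| \leq |x_i|$, so $\|x - x_T\|_2^2 \geq \|x_{-k}\|_2^2 + |x_i|^2 - |x_j|^2$ still blows up with $|x_i|$. The structural requirement is a per-coordinate miss probability that decays \emph{polynomially} in the coordinate's magnitude (measured in units of $(\epsilon/k)\|x_{-k}\|_2^2$), and a constant number of flat CountSketch tables cannot deliver that.

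The paper achieves this decay with two ingredients your scheme lacks. First, the bucket coefficients are independent Gaussians rather than signs, so the bucket value in level $\ell$, repetition $r$, bucket $b$ is exactly $\|x_J\|_2 \cdot |{\cal N}(0,1)|$ where $J$ is the union of intervals hashed to $b$; heavy collisions therefore \emph{increase} the bucket magnitude rather than risking cancellation, and anti-concentration gives $\Pr\left[ |y_{\ell,r,b}|^2 \leq \eta V\right] \lesssim C_j^{-1/2}$ for an interval of heaviness $C_j$. Second, the decoder never needs to disambiguate two heavy indices inside one bucket: each level asks only the yes/no question ``is this interval heavy?'' and recurses into the $D = \Theta(\log(\epsilon n/k)/\log\log(\epsilon n/k))$ children, so two heavy coordinates in the same interval simply both descend to the next level and get separated deeper in the tree. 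With $R = \Theta(\log\log(\epsilon n/k))$ repetitions per level the median of $R$ Gaussian trials fails with probability $C_j^{-\Omega(R)}$, which after a union bound over the $H = \Theta(\log(\epsilon n/k)/\log\log(\epsilon n/k))$ levels is still $\ll 1/C_j$; multiplying by $|x_j|^2 \propto C_j (\epsilon/k)\|x_{-k}\|_2^2$ and summing over at most $k$ heavy coordinates gives expected missed mass $O(\epsilon)\|x_{-k}\|_2^2$ independently of $\|x_{H(x,k)}\|_2^2$, while $HR = O(\log(\epsilon n/k))$ keeps the column sparsity inside budget. Your flat scheme would need an analogous magnitude-dependent decay; as written it does not have one, and I do not see a modification within constant $R$ and sign coefficients that would supply it.
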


For this routine, we will set up a hierarchical separation of $[n]$ to trees. We will call this separation interval forest. we set $ \tau = k/\epsilon$. Then we partition $[n]$ into $\tau$ intervals of length $q = n / \tau$, and set up an interval tree for each interval in the following way: every interval tree has branching factor 
\begin{align*}
	\frac{\log q}{\log \log q}
\end{align*}
with the same height (this is consistent with the fact that every tree contains $q$ nodes). At the leaves of every interval tree there are the nodes of the corresponding interval.

Our approach consists is now the following. For each level of the interval forest we hash everyone to $k/\epsilon$ buckets in the following way: if two coordinates $i,i'$ are in the same interval (node of the interval forest) they are hashed to the same bucket. The above property can be regarded as ``hashing interval to buckets''. Moreover, every $x_i$ is multiplied by a \textbf{Gaussian} random variable. We repeat this process for $\log \log (\epsilon n/k)$ per level.

The decoding algorithm is the following. First, we obtain a value $V$ using the routine in Step $1$, Lemma \ref{lem:tail_estimation}. Then we proceed in a breadth-first (or depth-first, it will make no difference) search manner and find an estimate for every interval, similarly to \cite{lnnt16,ch09}, by taking the median of the $\log\log (\epsilon n /k)$ buckets it participates to. 
 There are two technical things one needs to show: First, we should bound the decoding time, as well as the size of the output list $L$. Second, we need to show that there exists a set $T'$ of size at most $k$ that satisfies the guarantee of the Lemma \ref{lem:identification_sketch}. For the first part, we show that the branching process defined by the execution of the algorithm is bounded due to the $\log \log(\epsilon n/k)$ repetitions per level. For the second part, we show that for every coordinate $i \in H(x,k)$ the probability that $i \in L$ is proportional to 
$ k|x_i|^2 / ( \epsilon \|x_{-k}\|_2^2 )$. Then we show that the expected $\ell_2^2$ mass of coordinates $ i \in L \setminus H(x,k)$ is $\epsilon \|x_{-k}\|_2^2$. This suffices to give the desired guarantee for Lemma \ref{lem:identification_sketch}.

We provide details in Section~\ref{sec:btree}.

\paragraph{Step 3: The Pruning Step}

\begin{lemma}[pruning sketch]\label{lem:pruning_sketch}
Let $c_2, c_3> 1$ denote two fixed constants. There exists a randomized construction of a matrix $\Phi \in \mathbb{R}^{m\times n}$, with $m = O( \epsilon^{-1} k \cdot \log (1/\epsilon)),$ with column sparsity $O(\log (1/\epsilon))$ such that the following holds :\\
Suppose that one is given a (fixed) set $L \subseteq [n] $ such that 
\begin{align*}
 |L| = O(k/\epsilon), && \exists T \subset L, |T| \leq k: \|x - x_T \|_2 \leq ( 1+\epsilon) \|x_{-k}\|_2.
\end{align*}
Then one can find a set $S$ of size $c_2 \cdot k$  in time $O(m)$, such that
\begin{align*}
\|x - x_{S} \|_2 \leq (1 + c_3 \cdot \epsilon) \|x_{-k}\|_2
\end{align*}
holds with probability $9/10$.
\end{lemma}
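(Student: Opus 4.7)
The plan is to use a \textsc{CountSketch}-style construction: pick $R = \Theta(\log(1/\epsilon))$ independent pairs $(h_r, \sigma_r)$ with $h_r:[n]\to[B]$ pairwise independent, $\sigma_r:[n]\to\{\pm 1\}$ $4$-wise independent, and $B = \Theta(k/\epsilon)$ chosen with a large enough hidden constant that $B \ge 100\,|L|$. The measurements are $y_{r,b} = \sum_{i : h_r(i)=b}\sigma_r(i)\,x_i$, giving $m = BR = O(\epsilon^{-1} k \log(1/\epsilon))$ and column sparsity $R = O(\log(1/\epsilon))$, as required. For decoding, I would compute for each $i\in L$ the estimate $\tilde{x}_i := \median_{r \in [R]} \sigma_r(i)\,y_{r,\,h_r(i)}$ and then output $S$ as the $c_2 k$ coordinates of $L$ with largest $|\tilde{x}_i|$; linear-time selection brings the running time to $O(|L|R) = O(m)$.

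The key structural consequence of the hypothesis is that $\|x_{[n]\setminus L}\|_2^2 \le \|x_{[n]\setminus T}\|_2^2 \le (1+\epsilon)^2\|x_{-k}\|_2^2$, since $T \subseteq L$. Set $\mu^2 := \epsilon \|x_{-k}\|_2^2 / k$. Fix $i\in L$ and a repetition $r$, and let $E_{i,r}$ be the event that no other $j \in L$ hashes to $h_r(i)$; pairwise independence plus $B \ge 100|L|$ give $\Pr[E_{i,r}] \ge 0.99$. Conditional on $E_{i,r}$, the per-repetition noise $\sigma_r(i) y_{r,\,h_r(i)} - x_i$ is a Rademacher sum over coordinates of $x_{[n]\setminus L}$ falling into bucket $h_r(i)$: it has variance $\le \|x_{[n]\setminus L}\|_2^2/B = O(\mu^2)$ and is sub-Gaussian at the same scale. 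Combining Chebyshev with a Chernoff bound across the $R$ independent repetitions then yields, for every $i\in L$ and every $t \ge \alpha\mu$,
\begin{align*}
\Pr\bigl[\,|\tilde{x}_i - x_i| > t\,\bigr] \;\le\; \exp\!\bigl(-\Omega(R\,t^2/\mu^2)\bigr),
\end{align*}
with $\alpha$ a suitable constant; in particular $\Pr[|\tilde{x}_i - x_i| > \alpha\mu] \le \epsilon^{C}$ for any fixed constant $C$ by choosing the hidden constant in $R$.

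For the selection step, I would use the exact identity
\begin{align*}
\|x - x_S\|_2^2 \;=\; \|x_{[n]\setminus T}\|_2^2 \;+\; \|x_{T \setminus S}\|_2^2 \;-\; \|x_{S\setminus T}\|_2^2,
\end{align*}
in which the first term is already at most $(1+\epsilon)^2\|x_{-k}\|_2^2$. Let $\mathcal{B} := \{i \in L : |\tilde{x}_i - x_i| > \alpha\mu\}$ be the bad set; by Markov, $|\mathcal{B}| \le \beta k$ with probability $9/10$ for any chosen constant $\beta$. Let $\theta := |\tilde{x}_{(c_2 k)}|$, the $(c_2 k)$-th largest estimate in $L$. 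Every good $i \in T\setminus S$ satisfies $|x_i|\le\theta+\alpha\mu$, while every good $i \in S\setminus T$ satisfies $|x_i|\ge\theta-\alpha\mu$. Since $|T\setminus S|\le k$ whereas at least $(c_2-1-\beta)k$ elements of $S\setminus T$ are good, choosing $c_2$ a large enough constant and maximizing the resulting quadratic in $\theta$ bounds
\begin{align*}
\|x_{T\setminus S}\|_2^2 - \|x_{S\setminus T}\|_2^2 \;\le\; O(k\mu^2) + \|x_{T \cap \mathcal{B} \setminus S}\|_2^2 \;\le\; O(\epsilon)\|x_{-k}\|_2^2 + \|x_{T \cap \mathcal{B} \setminus S}\|_2^2.
\end{align*}

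The main obstacle is controlling the residual $\|x_{T \cap \mathcal{B} \setminus S}\|_2^2$: because $\|x_T\|_2^2$ has no a priori bound in terms of $\|x_{-k}\|_2^2$, the naive $\E[\|x_{\mathcal{B}}\|_2^2]\le\epsilon^C\|x_T\|_2^2$ is useless. The fix is to split $T$ by magnitude. For $i\in T$ with $|x_i|\le C_{\star}\mu$, the total contribution is at most $(C_{\star}\mu)^2 \cdot |\mathcal{B}|\le C_{\star}^2 \beta k\mu^2 = O(\epsilon)\|x_{-k}\|_2^2$, which is absorbed into the quadratic-in-$\theta$ bound above via a slightly larger $c_2$. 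For $i\in T$ with $|x_i|>C_{\star}\mu$ the event $i\in L\setminus S$ forces $|\tilde{x}_i|\le\theta$, which already implies a noise of magnitude at least $\max\{\alpha\mu,\, |x_i|/2\}$ (using $|x_i|>2\theta$); invoking the sub-Gaussian median tail gives
\begin{align*}
\sum_{\substack{i \in T\\ |x_i|>C_{\star}\mu}} x_i^2\,\Pr\!\bigl[\,|\tilde{x}_i - x_i|>|x_i|/2\,\bigr] \;\le\; \sum_{i \in T} x_i^2\exp\!\bigl(-\Omega(R\,x_i^2/\mu^2)\bigr) \;\le\; \frac{|T|\mu^2}{\Omega(R)} \;=\; O\!\left(\frac{\epsilon\|x_{-k}\|_2^2}{\log(1/\epsilon)}\right),
\end{align*}
since the one-variable function $y\mapsto y\,e^{-\Omega(R) y}$ is maximized at $y=\Theta(1/R)$ with value $\Theta(1/R)$. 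Combining all the pieces, $\|x-x_S\|_2^2 \le (1+\epsilon)^2\|x_{-k}\|_2^2 + O(\epsilon)\|x_{-k}\|_2^2 \le (1+c_3\epsilon)^2\|x_{-k}\|_2^2$ for an appropriate constant $c_3$, with overall success probability $9/10$ after a union bound over the constantly many constant-probability events used.
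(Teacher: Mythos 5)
Your construction deviates from the paper's in two respects the authors explicitly flag as important: the paper uses \emph{Gaussian} weights $g_{i,r}$ rather than signs, and the paper's estimator is $z_i = \median_r |y_{r,h_r(i)}|$, the raw bucket magnitude, \emph{not} the standard unbiased CountSketch estimate $\sigma_r(i)\,y_{r,h_r(i)}$ that you use. The paper remarks that dividing by the weight ``will not work'' and that switching to Gaussians is needed to avoid cancellation of equal heavy coordinates. The reason these choices matter for the paper's proof is its central trick: because the bucket content is Gaussian with standard deviation $\|x_{J}\|_2 \ge |x_i|$, the statistic $|y_{r,h_r(i)}|$ \emph{stochastically dominates} $|x_i|\,|\mathcal{N}(0,1)|$, so noise can only help, and one gets $\Pr[|y_{r,h_r(i)}|\le \tau] \le C_g\tau/|x_i|$ unconditionally (Lemma~\ref{cla:prune_claim_1}). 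You replace this by an attempted two-sided concentration of the unbiased estimator, which is a genuinely different and less robust route.

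Two concrete gaps in your version. First, the sub-Gaussian tail $\Pr[|\tilde{x}_i - x_i| > t] \le \exp(-\Omega(Rt^2/\mu^2))$ is not available with $4$-wise independent signs: a single-repetition bucket only has $4$th-moment control, and the per-bucket ``variance'' $\sum_{j: h_r(j)=h_r(i)} x_j^2$ is itself a heavy-tailed random variable, so a single bucket is not sub-Gaussian at scale $\mu$. What Chebyshev plus the median of $R$ repetitions actually gives is a polynomial tail of the form $\Pr[|\tilde{x}_i - x_i| > t] \le (c\mu/t)^{\Omega(R)}$; this still suffices for your final sum (bounding $x_i^2(\mu/|x_i|)^{\Omega(R)}$ by $\mu^2$ once $\Omega(R) \ge 4$ and $|x_i| \ge C_\star\mu$, summed over $|T|\le k$), but your displayed derivation using $y\mapsto y\,e^{-\Omega(R)y}$ is not valid as written and must be redone. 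Second, the clause ``which already implies a noise of magnitude at least $\max\{\alpha\mu, |x_i|/2\}$ (using $|x_i| > 2\theta$)'' smuggles in the assumption $|x_i| > 2\theta$; bad coordinates $i \in T\cap\mathcal{B}\setminus S$ with $C_\star\mu < |x_i| \le 2\theta$ are not covered by your tail argument, and must instead be absorbed into the quadratic in $\theta$ (each contributes at most $4\theta^2$, and there are at most $\beta k$ of them, so the coefficient of $\theta^2$ gains $+4\beta k$, forcing a larger $c_2$). With both repairs your exchange-argument skeleton (the identity $\|x_{[n]\setminus S}\|_2^2 - \|x_{[n]\setminus T}\|_2^2 = \|x_{T\setminus S}\|_2^2 - \|x_{S\setminus T}\|_2^2$, the split at $\tau/\mu$, the bad-set Markov bound) mirrors the paper's Section~\ref{sec:pruning_sketch_analyzing_correctness}; but as submitted the proof does not close, and you should at least acknowledge why your setup (isolation of $L$ via $B \ge 100|L|$) evades the cancellation pathology the paper cites as the reason for abandoning signs.
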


 We will now prune the list $L$ obtained from the previous step, to $O(k)$ coordinates. We are going to use $O(\epsilon^{-1} k \cdot \log(1/\epsilon))$ measurements. We hash every coordinate to $k/\epsilon$ buckets, combining with \textbf{Gaussians}, and repeating $O(\log(1/\epsilon))$ times. A similar matrix was also used in \cite{glps12}, but there the functionality and the analysis were very different; moreover, the authors used random signs instead of Gaussians. We will heavily exploit the fact that a standard normal $g$ satisfies $\Pr[|g| < x] = O(x)$. The reasons why we need Gaussians is the following: if we have a lot of large coordinates which are equal and much larger than $\sqrt{\epsilon} \|x_{-k}\|_2$, we want to find all of them, but due to the use of random signs, they might cancel each other in a measurement. Switching to Gaussians is the easiest way of avoiding this undesirable case.

Our algorithm computes, for every $i \in L$, an estimate $\hat{x}_i$ by taking the median of the $O(\log(1/\epsilon))$ buckets it participates to, and then keeps the largest $O(k)$ coordinates in magnitude to form a set $S$. We then show that these coordinates satisfy the $\ell_2/\ell_2$ guarantee. We will say that a coordinate is well-estimated if $ |\hat{x}_i|  = \Theta( |x_i| ) \pm \sqrt{\epsilon k^{-1} } \|x_{-k}\|_2$. 
 For the analysis, we define a threshold $\tau = \|x_{-k}\|_2/ \sqrt{k}$, and classify coordinates based on whether $|x_i| \geq \tau$ or not.
\begin{itemize}
\item In the case $|x_i| \geq \tau$ the expected mass of these coordinates $i \notin S$ is small;
\item In the other case the number of coordinates $i$ with $|x_i| < \tau$ are $O(k)$. This allows us to employ an exchange argument, similar to previous work, e.g. \cite{pw11}, but more tricky due to the use of Gaussians instead of random signs. 
\end{itemize}
We note that the way we compute the estimates $\hat{x}_i$ is different from previous work: one would expect to divide the content of a bucket that $i$ hashed to by the coefficient assigned to $x_i$, in order to get an unbiased estimator, but this will not work. The details can be found in Section~\ref{sec:pruning_sketch}.

In the end, this step gives us a set $S$ suitable for our goal, but \textbf{does not} give good estimations of the coordinates inside that set. For that we need another, standard step.

\paragraph{Step 4: Set Query} 

We estimate every coordinate in $S$ using a set query algorithm of Price \cite{p11}, obtaining the desired guarantee. This matrix needs only $O(k/\epsilon)$ measurements, and runs in $O(k)$ time, while having constant column sparsity.

\begin{lemma}[set query, \cite{p11}]\label{lem:set_query_price}
For any $\epsilon \in (0,1/10]$. There exists a randomized construction of a matrix $\Phi \in \mathbb{R}^{m \times n}$, with $m = O(k/\epsilon)$ and column sparsity $O(1)$, such that given $y = \Phi x$ and a set $S \subseteq [n]$ of size at most $k$, one can find a $k$-sparse vector $\hat{x}_S$, supported on $S$ in $O(m)$ time, such that 
\begin{align*}
 \|\hat{x}_S - x_S \|_2^2 \leq \epsilon \|x_{[n] \backslash S}\|_2^2.
\end{align*}
holds with probability at least $1 - 1/\poly(k)$.
\end{lemma}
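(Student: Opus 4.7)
The plan is to recapitulate the set-query construction of \cite{p11}. Take $d = O(1)$ hash functions $h_1,\ldots,h_d : [n] \to [m/d]$ with $m = Ck/\epsilon$ for a sufficiently large constant $C$, together with a sign map $\sigma : [n] \to \{\pm 1\}$; all are drawn from $O(1)$-wise independent families for the per-coordinate analysis and promoted to higher independence where needed for concentration across $S$. The $j$-th entry of the $r$-th block of $\Phi x$ then equals $\sum_{i : h_r(i)=j} \sigma_i x_i$. Column sparsity is $d = O(1)$ and the total number of measurements is $m = O(k/\epsilon)$, as required.

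To analyze decoding, for each $i \in S$ the $d$ buckets containing $i$ are a sum of (i) its own signed contribution $\sigma_i x_i$, (ii) contributions from other $S$-coordinates colliding with $i$ in that bucket, and (iii) tail noise from $[n] \setminus S$. The structural fact to establish is that, with probability $1 - 1/\poly(k)$, the bipartite hashing graph restricted to $S$ is a good unique-neighbor expander: for every $T \subseteq S$, at least a $(1-\gamma)$-fraction of the $d|T|$ incident buckets are uniquely hit by $T$. This follows from a standard balls-into-bins analysis since $m/|S| = \Omega(1/\epsilon) \gg 1$. Given expansion, I would run an iterative peeling decoder: estimate each $\hat{x}_i$ as the median over $r$ of $\sigma_i$ times the bucket $h_r(i)$ after subtracting the current estimates of the other $S$-coordinates colliding with $i$, and then repeat on the residual. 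Each iteration shrinks the supported $\ell_2^2$ error by a constant factor by expansion, while leaking only tail noise whose expected squared magnitude per bucket is $\|x_{[n]\setminus S}\|_2^2/m = O(\epsilon \|x_{[n]\setminus S}\|_2^2 / k)$. Summing over the $k$ coordinates of $S$ and iterating $O(1)$ times yields $\|\hat{x}_S - x_S\|_2^2 \le \epsilon \|x_{[n]\setminus S}\|_2^2$.

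The failure probability $1 - 1/\poly(k)$ comes from union-bounding the expansion event for the fixed set $S$, together with the concentration of the tail noise across the $d$ buckets per coordinate. Decoding time is $O(m)$: each peeling iteration touches only the $O(dk) = O(k)$ buckets incident to $S$ with $O(1)$ work per bucket, and $O(1)$ iterations suffice. The main obstacle I expect is balancing two competing requirements. Constant column sparsity $d = O(1)$ weakens per-coordinate median concentration, so the proof must lean on the combinatorial expansion of the hashing graph rather than on independent repeated estimates. Getting the final bound to be exactly $\epsilon \|x_{[n]\setminus S}\|_2^2$ rather than a large constant multiple of it requires careful tracking of the contraction constant of the iteration against the leaked tail noise per round, which is where the argument of \cite{p11} is genuinely delicate.
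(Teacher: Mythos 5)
This lemma is taken from Price \cite{p11} and the paper does not reprove it, so there is no in-paper proof to compare against. Your blind sketch is a reasonable reconstruction of Price's expander-based set-query construction: a random $d$-left-regular bipartite hashing graph with $d = O(1)$, unique-neighbor expansion of the fixed set $S$ holding with probability $1 - 1/\poly(k)$, iterative peeling, and per-bucket tail noise $\|x_{[n]\setminus S}\|_2^2/m = O(\epsilon\|x_{[n]\setminus S}\|_2^2/k)$ summed over $S$; you also correctly flag where the argument is delicate (constant sparsity versus median concentration, the final constant in front of $\epsilon$). One caveat worth noting: with $d=O(1)$, a median over $d$ repetitions is not robust unless a strict majority of $i$'s buckets are pure relative to the still-undecoded portion of $S$, and the "subtract current estimates, then re-median" description is circular until a peeling order is fixed; Price's decoder instead reads estimates only from buckets that are isolated with respect to the residual set and peels combinatorially (in the spirit of LDPC/IBLT decoding), which is what makes the $O(m)$ total time and geometric error contraction go through cleanly. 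This is a presentation detail rather than a different approach, but it is the part of your sketch that would need to be tightened to match \cite{p11}.
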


Our Theorem \ref{thm:main_sparse_recovery} follows from the above four steps by feeding the output of each step to the next one. In the end, we rescale $\epsilon$. 
More specifically, by the identification step we obtain a set $L$ of size $O(k/\epsilon)$ which contains a subset of size $k$ that satisfies the $\ell_2/\ell_2$ guarantee. Then, the conditions for applying the pruning step are satisfied, and hence we can prune the set $L$ down to $O(k)$ coordinates, which satisfy the $\ell_2/\ell_2$ guarantee. Then we apply the set-query sketch to obtain estimates of these coordinates.

In what follows we ignore constant terms.
The number of measurements in total is
 \begin{align*}
\underbrace{ 1 }_{ \text{tail~estimation} } + \underbrace{ (k/\epsilon) \log (\epsilon n/k) }_{ \text{identification~step} } + \underbrace{ (k/\epsilon) \log (1/\epsilon) }_{ \text{pruning~step} } + \underbrace{ (k/\epsilon) }_{ \text{set~query} }.
\end{align*}

The decoding time equals
\begin{align*}
\underbrace{ 1 }_{ \text{tail~estimation} } + \underbrace{ (k/\epsilon) \log ( \epsilon n / k ) \log( \epsilon n/k) }_{ \text{identification~step} } + \underbrace{ (k/\epsilon) \log (1/\epsilon) }_{ \text{pruning~step} } + \underbrace{ k }_{ \text{set~query} }.
\end{align*}

The column sparsity equals 
\begin{align*}
\underbrace{ 1 }_{ \text{tail~estimation} } + \underbrace{ \log (\epsilon n / k)  }_{ \text{identification~step} } + \underbrace{ \log (1/\epsilon) }_{ \text{pruning~step} } + \underbrace{ 1 }_{ \text{set~query} }.
\end{align*}

\subsection{Possible applications of our approach to other problems}

\paragraph{Exactly $k$-sparse signals.}

When the vector we have to output has to be $k$-sparse, and not $O(k)$-sparse, the dependence on $\epsilon$ has to be quadratic \cite{pw11}. Our algorithm yields a state of the art result for this case, too. One can observe that the analysis of the algorithm in Section \ref{sec:btree} outputs a $O(k/\epsilon)$-sized set which contains a set $T$ of size $k$ that allows $\ell_2/\ell_2$ sparse recovery. Performing a \textsc{CountSketch} with $O(k/\epsilon^2)$ columns and $O( \log (k/\epsilon)$ rows, and following a standard analysis, one can obtain a sublinear algorithm with measurement complexity $O( \epsilon^{-1} k \log( \epsilon n /k) + \epsilon^{-2} k \log (k/\epsilon))$. This is an improvemnet on both the runtime and measurement complexity over previous work \cite{ccf02}.

\paragraph{Block-Sparse Signals.}

Our algorithm easily extends to block-sparse signals. For a signal of block size $b$ we obtain sample complexity $O( k /(\epsilon b ) \log ( b n / k) + (k/\epsilon))$, with a running time nearly linear in $k$. This matches the sample complexity of previous super-linear algorithms \cite{bcdh10,cevher2009recovery}, which also could not facilitate $\epsilon < 1$.


\paragraph {Phaseless Compressed Sensing.}
In Phaseless Compressed Sensing, one wants to design a matrix $ \Phi$ with $m$ rows, such that given $y = \Phi x$, one can find a vector $\hat{x}$ such that $ \mathrm{min}_{ \theta \in [0,2\pi]} \|x - e^{i \theta } \hat{x}\|_2 \leq (1+\epsilon)\|x_{-k}\|_2$. This problem has received a fair amount of attention \cite{ohlsson2011compressive,li2013sparse,cbjc14,ylpr15,pylr17,n17,ln17}, and the state of the art algorithm has $O(k \log n)$ measurement complexity \cite{n17,ln17}. One of the problems is that the iterative loop approach cannot be used here, since it is heavily based on the linearity of the sketch. However, our identification and pruning step do not use the linearity of the sketch, and work also with phaseless measurements. Previous algorithms such as \cite{n17,ln17} suffered a $k \log n$ factor in the number of measurements already from the first step, but this is avoidablue using our new approach. We hope to see an extension down this avenue that gives $O(k \log (n/k))$ measurements.

\paragraph{One-Bit Compressed Sensing.}

Another important subfield of Sparse Recovery is one-bit compressed sensing, where one has access only to one-bit measurements, i.e. $y = \mathrm{sign}(Ax)$, where the $\mathrm{sign}$ function on vectors should be understood as pointwise application of the sign function on each entry. Sublinear algorithms appear in \cite{nakos2017fast,nakos2017one}, but they both do not obtain the optimal number of measurements in terms of $k$ and $n$, which is $k \log (n/k)$, but rather the slightly suboptimal $k \log n$. One of the most important reasons is that the iterative loop cannot be implemented in such a scenario. It is a natural question whether our new approach can give the optimal number of measurements. The thresholding step, namely the part where we take use $V$ to filter out non-heavy intervals cannot be implemented here, but perhaps there is still a way to make a similar argument. One first approach should be to show that sublinear decoding with optimal measurements is achieved using non-adaptive threshold measurements, such as in \cite{knudson2016one} and \cite{baraniuk2017exponential} (note that the latter one uses adaptive measurements though).

\paragraph{Sparse Fourier Transform.}
The standard approach to discrete sparse Fourier transform, is to implement linear measurements by using Fourier measurements \cite{ggims02,gms05,hikp12a,hikp12b,iw13,ikp14,ik14,k16,cksz17,k17}. The idea is to hash the spectrum to $B$ buckets by carefully multiplying in the time-domain the vector $x$ with a sparse vector $z$. In the frequency domain this corresponds to convolving the spectrum of the $x$ with an approximation of the filter of an indicator function of an interval of length roughly $B$. Due to the Uncertainty Principle, however, one has to exchange measurement complexity and decoding time with the quality of the filter. For example, implementing hashing to buckets using ``crude'' filters leads to leakage in subsequent buckets, giving additional error terms. When iterating as usual, these errors accumulate and make identification much harder. The sophisticated approach of \cite{k17} manages to design an iterative algorithm, in the same vein with previous algorithms, which takes $O(\epsilon^{-1} k \log n)$ measurements. It would be interesting to see if the approach we suggest avoids some of the problems created by this iterative loop, and can give simpler and faster sparse Fourier transform schemes. It would be interesting to obtain such a result even using adaptive measurements. The work \cite{cksz17} has some interesting ideas in the context of block-sparse vectors that could be relevant.








\newpage
{\hypersetup{linkcolor=black}
\tableofcontents
}
\newpage
\appendix
\section{The Identification Linear Sketch}\label{sec:btree}
The goal of this section is to prove the following result,
\begin{theorem}[Restatement of Lemma~\ref{lem:identification_sketch}]\label{thm:identification}
Let $C_L > 1$ be a fixed constant. There exists a randomized construction of a matrix $\Phi \in \mathbb{R}^{m \times n}$, with 
 \[  m = O(\epsilon^{-1}k \log(\epsilon n/k)),\] 
with column sparsity $O(\log(\epsilon n/k))$ such that given $ y= \Phi x$, one can find in time $O(m \log( n/k) )$ a set $L$ of size $C_L \cdot k/\epsilon$, such that
\begin{align*} 
\exists T \subset L, |T| \leq k: \|x - x_T \|_2 \leq ( 1+\epsilon) \|x_{-k}\|_2,
\end{align*}
with probability $9/10$.
\end{theorem}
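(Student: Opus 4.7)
The plan is to instantiate the interval forest sketched in Section~\ref{sec:intro_btree} and analyze a top-down pruning decoder that uses the tail estimate $V$ from Lemma~\ref{lem:tail_estimation} as a global threshold. Set $\tau = k/\epsilon$ and $q = \epsilon n/k$, partition $[n]$ into $\tau$ length-$q$ intervals, and on each build a tree of branching factor $b = \Theta(\log q/\log\log q)$ and height $h = \Theta(\log q/\log\log q)$. At every level $\ell \in [h]$ use $R = \Theta(\log\log q)$ independent pairwise hashes of the level-$\ell$ intervals into $B = \Theta(k/\epsilon)$ buckets, combined with i.i.d.\ standard Gaussians on the coordinates. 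The total measurement count is $R \cdot h \cdot B = O(\epsilon^{-1} k \log(\epsilon n/k))$ and each coordinate lies in $R \cdot h = O(\log(\epsilon n/k))$ buckets, as required.

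The decoder first reads $V$ and sets $\theta = c_\theta \epsilon V = \Theta(\epsilon \|x_{-k}\|_2^2/k)$. It then BFSs the forest: for every surviving interval $I$ at level $\ell$ and each of its children $I'$, compute $\widehat m(I') = \median_{r\in[R]} z_{r, h_{\ell+1}^{(r)}(I')}^2$ and keep $I'$ iff $\widehat m(I') \ge \theta$; output the set $L$ of surviving leaves. Because bucket $z_{r,b}$ is a sum of independent Gaussian-weighted coordinates, it is distributed as a centered Gaussian with variance $\|x_{I'}\|_2^2 + N_r(I')$, where $N_r(I')$ is the mass of other intervals colliding with $I'$; the median over $R$ repetitions concentrates around a constant times $\|x_{I'}\|_2^2 + \E_r[N_r(I')]$. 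The analysis turns on two claims: (i) for every $i \in H(x,k)$, $\Pr[i \notin L] \le C\epsilon \|x_{-k}\|_2^2 / (k|x_i|^2)$, clipped at $1$; (ii) $|L| = O(k/\epsilon)$ with high probability.

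For~(i), take $T = L \cap H(x,k)$, so $|T|\le k$ and $\|x-x_T\|_2^2 = \|x_{-k}\|_2^2 + \|x_{H(x,k)\setminus L}\|_2^2$. Summing~(i) gives $\E\|x_{H(x,k)\setminus L}\|_2^2 \le \sum_{i\in H(x,k)} |x_i|^2 \cdot C\epsilon \|x_{-k}\|_2^2/(k|x_i|^2) \le C\epsilon \|x_{-k}\|_2^2$, and Markov delivers $(1+\epsilon)$ after rescaling $\epsilon$. Claim~(i) is proved level by level: the critical input is that at any fixed level the number of intervals of mass exceeding $\theta/2$ is $O(k/\epsilon)$ -- at most $k$ from coordinates in $H(x,k)$ plus $\|x_{-k}\|_2^2/(\theta/2) = O(k/\epsilon)$ from the tail -- so with $B = \Theta(k/\epsilon)$ the expected colliding-noise in $i$'s bucket is a small constant times $\epsilon\|x_{-k}\|_2^2/k$, and a single-repetition failure has probability $O(1/\sqrt{\mu_i})$ by the Gaussian small-ball bound $\Pr[|N(0,\sigma^2)| \le t] = O(t/\sigma)$, where $\mu_i := k|x_i|^2/(\epsilon\|x_{-k}\|_2^2)$. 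Taking the median over $R = \Theta(\log\log q)$ repetitions amplifies this to $\mu_i^{-\Omega(R)}$, enough to absorb the $h$-level union bound with room to spare. Claim~(ii) follows from the same collision estimate applied in the opposite direction: at each level the number of buckets whose noisy median crosses $\theta$ is $O(k/\epsilon)$ in expectation, giving $|L| = O(k/\epsilon)$ by Markov.

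The hard part, I expect, is keeping the median-amplification bookkeeping tight across all $h$ levels simultaneously: for borderline coordinates with $\mu_i \approx 1$ the per-level failure probability is only $\Theta(1)$, so claim~(i) holds only because such coordinates contribute at most $O(\epsilon\|x_{-k}\|_2^2/k)$ each to the budget, and one must carry the $|x_i|^2/\theta$-weighted accounting throughout rather than union-bounding naively. Once this is done, the decoder visits at most $O(k/\epsilon)$ surviving intervals per level, each requiring $O(bR)$ child-checks, for total running time $O((k/\epsilon) \cdot b \cdot R \cdot h) = O(m\log(n/k))$, matching the claim.
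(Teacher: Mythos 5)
Your proposal follows essentially the same approach as the paper: the interval forest with $\tau = k/\epsilon$ trees of branching factor and height $\Theta(\log q/\log\log q)$, the threshold $\Theta(\epsilon V)$ derived from the tail estimator, $R = \Theta(\log\log q)$ median repetitions per level with Gaussian small-ball bounds giving a per-level miss probability of $O(\mu_i^{-\Omega(R)})$, and the $|x_i|^2$-weighted accounting that turns per-coordinate miss probabilities into an $O(\epsilon\|x_{-k}\|_2^2)$ expected-missed-mass bound (this is exactly the paper's Lemma~\ref{lem:large}, Lemma~\ref{lem:small}, and Lemma~\ref{lem:final}). One small simplification you could make: for claim~(i) the collision-noise estimate is unneeded, since in a Gaussian bucket extra colliding mass only increases the variance and hence stochastically dominates the signal-only case, which is how the paper's Lemma~\ref{lem:large} avoids any collision bookkeeping on the detection side.
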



In Section~\ref{sec:btree_algorithm}, we provide the definition of sketching matrix $\Phi$ and present the decoding algorithm. We proved some concentration result in Section~\ref{sec:btree_concentration_of_estimation}. We analyzed the running time of algorithm in Section~\ref{sec:btree_running_time}. We proved the guarantees of the algorithm in Section~\ref{sec:btree_guarantee}. Finally, we bound the number of measurements in Section~\ref{sec:btree_measurements}. 
\subsection{Design of the sketch and decoding algorithm}\label{sec:btree_algorithm}





\begin{table}[!h]
\centering
\begin{tabular}{|l|l|l|l|}
  \hline
  Notation & Choice & Statement & Parameter\\ \hline
  $C_H$ & $4$  & Definition~\ref{def:degree_height} & $H$ \\ \hline
  $C_R$ & $100$ & Definition~\ref{def:R} & $R$ \\ \hline
  $C_B$ & $10^5$ & Definition~\ref{def:sketching_matrix} & $B$ \\ \hline
  $C_0$ & $10^3$ & Lemma~\ref{lem:lplp_tail_estimation} & Blow up on tail size \\ \hline
  $C_L$ & $10^4$ & Lemma~\ref{lem:running_time} & $L$ \\ \hline
  $\eta$ & $1/9$ & Lemma~\ref{lem:large},\ref{lem:small} & Shrinking factor on $V$ \\ \hline
  $\zeta$ & $1/4000$ & Lemma~\ref{lem:large},\ref{lem:small} & $\zeta \leq \eta / 400$ \\ \hline
\end{tabular}\caption{Summary of constants in Section~\ref{sec:btree}, the column ``Parameter'' indicates which parameter is depending on that constant. Note that constants $C_H, C_R, C_B, C_0, \eta$ are used in both algorithm and analysis, but constants $C_L$ and $\zeta$ are only being used in analysis. $C_L$ is the related to the guarantee of the output of the algorithm.}\label{tab:btree} 
\end{table}

We are going to use a hierarchical separation of $[n]$ into intervals. We will call this separation an interval forest.

Before discussing the matrix, we need the following definitions. We define  
\begin{definition}[size of the each tree in the forest]\label{def:forest}
Let 
\begin{align*}
\tau = k/\epsilon,
\end{align*}
assuming that $k/\epsilon \leq n/16$.
The size of each tree in the forest is
\begin{align*}
q = n /\tau = n\epsilon / k
\end{align*}
\end{definition}

\begin{definition}[degree and height of the interval forest]\label{def:degree_height}
Let $C_H > 1$ be a sufficiently large constant such that
\begin{align*}
 \left( \log q / \log \log q \right)^{C_H \log q /\log \log q} \geq q.
\end{align*}
Let $D$ denote the degree of the tree, and let $H$ denote the height of the tree. We set $D$ and $H$, $D = \lceil \log q / \log \log q \rceil$, and $H =  \lceil C_H \log q / \log \log q \rceil$.
\end{definition}

\begin{definition}[number of repetitions per level] \label{def:R}
Let $R$ denote the number of repetitions in each level. Let $C_R > 1$ denote some sufficiently large constant. We set $R = C_R \log \log q$.
\end{definition}

For $\ell \in \{0,1,\ldots,H\}$ we define $\mathcal{I}_{\ell}$, which is a family of sets. Every set $\mathcal{I}_\ell$ is a decomposition of $[n]$ to $\tau D^\ell$ intervals of (roughly) the same length. The set $ \mathcal{I}_0$ is a decomposition of $[n]$ to $\tau$ intervals of (roughly) the same length length $q$. If needed, we can round $q$ to a power of $2$. This means that
\begin{align*}
\mathcal{I}_0 = \{ I_{0,1} , I_{0,2}, \ldots\}
\end{align*}
where
\begin{align*}
I_{0,1} = [1,  \tau], I_{0,2} = [\tau+1,  2\tau], \ldots
\end{align*} 


We can conceptualize these sets as a forest consisting of $\tau$ trees, each of branching factor $D$ and height $H$, where the $\ell$-th level partitions $[n]$ into disjoint $\tau D^\ell$ intervals of length $n/(\tau \cdot D^{\ell}) = q \cdot D^{-\ell}$. For $\ell \in \{0, \ldots, H\}$, interval $I_{\ell,j}$ is decomposed to $D$ disjoint and continuous intervals 
\begin{align*}
I_{\ell+1,j\cdot D +1}, \ldots, I_{\ell+1,(j+1) \cdot D}
\end{align*}
 of the same length, except possibly the last interval.

 We say that an interval $I_{\ell+1,j}$ is a child of interval $I_{\ell,j'}$ if $j' = \lceil j/D \rceil$. 

\begin{definition}[sketching matrix $\Phi$]\label{def:sketching_matrix}
Let $C_B > 1$ be a sufficiently large constant. Let $B = C_B k / \epsilon$.
Let matrices $\Phi^{(1)},\ldots, \Phi^{(H)}$, where every matrix $\Phi^{(\ell)}$ consists of $R $ submatrices $\{\Phi^{(\ell)}_r\}_{r \in [R]}$. For every $\ell \in [H]$ and $r\in[ R ]$, we pick $2$-wise independent hash functions $h_{\ell,r}: [\tau D^{\ell}] \rightarrow [ B ]$.

We define measurement $y_{\ell,r,b} = (\Phi^{(\ell)}_r x)_b $ as:
\begin{align*}
y_{\ell,r,b} = \sum_{j \in h_{\ell,r}^{-1}(b) } \sum_{i \in I_{\ell,j}} g_{i,\ell,r}x_j,
\end{align*} 
where $g_{i,\ell,r} \sim \mathcal{N}(0,1)$, i.e. independent standard Gaussians.

We slightly abuse notation and treat $y$ as matrix the mapping to vector should be clear.
\end{definition}

Note that $C_B$ should be chosen such that $C_B \gg C_0$, where $C_0$ appears in tail estimation in Lemma~\ref{lem:lplp_tail_estimation}.

\begin{algorithm}[!t]
\begin{algorithmic}[1]
\Procedure{\textsc{IntervalForestSparseRecovery}}{$x,n,k,\epsilon $} \Comment{Theorem~\ref{thm:identification}} 
	\State Choose constants $C_H$, $C_R$, $\eta$, $C_0$ \Comment{According to Table~\ref{tab:btree}}
	\State $\tau \leftarrow (k/\epsilon)$ \Comment{Definition~\ref{def:forest}}
	\State $q \leftarrow n / \tau$ \Comment{Definition~\ref{def:forest}}
	\State $H \leftarrow \lceil C_H \log q /  \log \log q \rceil $ \Comment{Definition~\ref{def:degree_height}}
	\State $D \leftarrow \lceil \log q/ \log \log q \rceil$ \Comment{Definition~\ref{def:degree_height}}
	\State $R \leftarrow \lceil C_R \log \log q \rceil$ \Comment{Definition~\ref{def:R}} 
	\State $\ov{V} \leftarrow \textsc{LpLpTailEstimation}( x , k, 2, C_0, 1/100 )$ \label{lin:ov_v_value}\Comment{Algorithm~\ref{alg:lplp_tail_estimation}}
	\State $V \leftarrow \epsilon \ov{V}$ \label{lin:v_value} \Comment{Lemma~\ref{lem:large}}
	\State $T_0 \leftarrow \{ I_{0,1}, \ldots, I_{0,\tau} \}$
	\For{$\ell=1 \to H$}\label{lin:beg_of_btree}
		\State $T_{\ell} \leftarrow\textsc{RecursiveBTree}(\ell,R,D,\eta,T_{\ell-1},V)$  \Comment{Lemma~\ref{lem:running_time}}
	\EndFor\label{lin:end_of_btree} 
	\State $L \leftarrow T_H$
	\State \Return $L$ \Comment{Lemma~\ref{lem:final}}
\EndProcedure
\Procedure{\textsc{RecursiveBTree}}{$\ell,R,D,\eta,T,V$} 
	\State $T' \leftarrow \emptyset$
	\For{$t \in T$}
		\State Let $I_{\ell,j_1}, I_{\ell,j_2}, \ldots, I_{\ell,j_{D}}$ denote the child intervals of $I_{\ell-1,t}$
		\For {$p \in [D]$}
			\State $z_{j_p} \leftarrow \mathrm{median}_{r\in [R]} |y_{\ell,r,h_{\ell,r}(j_p) }|^2$ \Comment{Definition~\ref{def:sketching_matrix}}
		\EndFor
		\If { $z_{j_p} \geq \eta V$} \label{lin:btree_z_j_p_geq_eta_V}
			\State $T' \leftarrow T' \cup \{  j_p \}$
		\EndIf
	\EndFor
	\State \Return $T'$
\EndProcedure
\end{algorithmic}\caption{interval-forest sparse recovery}\label{alg:btree}
\end{algorithm}

\subsection{Concentration of estimation}\label{sec:btree_concentration_of_estimation}
In the following lemmata, $\zeta, \eta \in (0,1)$ are absolute constants with $1 > \eta >  1/10 >  \zeta$ (See a choice for our application in Table~\ref{tab:btree}). The exact values of the constants will be chosen below.

The following lemma handles the probability of detecting a heavy interval at level $\ell$.
\begin{lemma}[handling the probability of catching a heavy hitter] \label{lem:large}
Let $\ov{V}$ be the value in Line \ref{lin:ov_v_value} of Algorithm \ref{alg:btree}. Let $V = \epsilon \ov{V}$ be the value in Line \ref{lin:v_value} of Algorithm \ref{alg:btree}.
Let $j' \in  T_{\ell-1}$, and let $j$ be one of its children. Let $z_j$ be defined as follows,
\begin{align*}
z_j = \median_{r \in [R]} \left| y_{\ell,r,h_{\ell,r}(j)} \right|^2.
\end{align*}
 If $\|x_{I_{\ell,j}}\|_2^2 \geq C_j\frac{\epsilon}{k} \|x_{-k}\|_2^2$, where $C_j \geq 2$, then with probability $1-C_j^{-R/6}$ (over the randomness of $h_{\ell,r}$ and $g_{i,\ell,r}$ for $r \in [R], i \in [n]$ in Definition~\ref{def:sketching_matrix}), we have that
\begin{align*}
z_j \geq \eta V .
\end{align*}
\end{lemma}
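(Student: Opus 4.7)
The plan is to exploit the Gaussian structure of each measurement and then amplify via the median trick. First I would fix the interval $I_{\ell,j}$, and for each repetition $r \in [R]$ let $b_r := h_{\ell,r}(j)$ denote the bucket that $I_{\ell,j}$ falls into. Since the random signs $g_{i,\ell,r}$ are independent standard Gaussians, conditioned on the hash function $h_{\ell,r}$ the measurement
\begin{align*}
y_{\ell,r,b_r} \;=\; \sum_{j' : h_{\ell,r}(j') = b_r}\; \sum_{i \in I_{\ell,j'}} g_{i,\ell,r}\, x_i
\end{align*}
is a centered Gaussian whose variance $\sigma_r^2$ equals $\|x_{I_{\ell,j}}\|_2^2$ plus a nonnegative contribution from other intervals colliding in bucket $b_r$. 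In particular $\sigma_r^2 \geq \|x_{I_{\ell,j}}\|_2^2$ deterministically, regardless of the randomness of $h_{\ell,r}$.

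Next I would use Lemma~\ref{lem:tail_estimation}, which guarantees $\bar V \leq \frac{1}{k}\|x_{-k}\|_2^2$ and hence $V = \epsilon \bar V \leq \frac{\epsilon}{k}\|x_{-k}\|_2^2$. Combining this upper bound on $V$ with the hypothesis $\|x_{I_{\ell,j}}\|_2^2 \geq C_j \cdot \frac{\epsilon}{k}\|x_{-k}\|_2^2$ yields
\begin{align*}
\frac{\eta V}{\sigma_r^2} \;\leq\; \frac{\eta V}{\|x_{I_{\ell,j}}\|_2^2} \;\leq\; \frac{\eta}{C_j}.
\end{align*}
Writing $y_{\ell,r,b_r} = \sigma_r \cdot g$ for a standard Gaussian $g$, the event $|y_{\ell,r,b_r}|^2 < \eta V$ implies $|g| < \sqrt{\eta/C_j}$, and by Gaussian anti-concentration,
\begin{align*}
\Pr\!\left[ |y_{\ell,r,b_r}|^2 < \eta V \right] \;\leq\; \Pr\!\left[|g| < \sqrt{\eta/C_j}\right] \;\leq\; \sqrt{\tfrac{2\eta}{\pi C_j}} \;=:\; p.
\end{align*}
This bound is uniform over realizations of $h_{\ell,r}$, and because the randomness $(h_{\ell,r}, \{g_{i,\ell,r}\}_i)$ is mutually independent across $r$, the $R$ failure indicators are stochastically dominated by i.i.d.\ Bernoulli$(p)$ variables.

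Finally I would amplify by the median. The estimator $z_j$ fails to satisfy $z_j \geq \eta V$ only if at least $R/2$ of the $R$ repetitions individually fail, so by the standard binomial bound
\begin{align*}
\Pr\!\left[z_j < \eta V\right] \;\leq\; \binom{R}{R/2} p^{R/2} \;\leq\; (4p)^{R/2} \;\leq\; \left(4\sqrt{\tfrac{2\eta}{\pi C_j}}\right)^{R/2}.
\end{align*}
With the chosen $\eta \leq 1/9$ one checks that $(4\sqrt{2\eta/\pi})^{6} \leq 2 \leq C_j$, hence $(4p)^{R/2} \leq C_j^{-R/6}$, giving the claimed bound. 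The main obstacle is simply arithmetic bookkeeping of the constants, namely making sure that $\eta$ is chosen small enough relative to the Gaussian small-ball constant so that the bound $(4p)^{R/2} \leq C_j^{-R/6}$ goes through already at the worst case $C_j = 2$; everything else is direct Gaussian stability plus a Chernoff-style bound on the median.
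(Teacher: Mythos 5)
Your proposal is correct and follows essentially the same route as the paper: conditioning on the hash to view each measurement as a mean-zero Gaussian with standard deviation at least $\|x_{I_{\ell,j}}\|_2$, applying Gaussian anti-concentration to get per-repetition failure probability $\sqrt{2\eta/(\pi C_j)}$, invoking the upper bound $V \leq \tfrac{\epsilon}{k}\|x_{-k}\|_2^2$ from the tail-estimation lemma, and then amplifying by the median via a $\binom{R}{R/2}p^{R/2}$ bound. The only cosmetic difference is in the final constant bookkeeping: you factor out $C_j^{-1/2}$ from $p$ and verify $\bigl(4\sqrt{2\eta/\pi}\,\bigr)^6 \leq 2 \leq C_j$, whereas the paper keeps $C_j^{-R/4}$ intact and peels off $C_j^{-R/12} \leq 2^{-R/12}$ -- both computations check out with $\eta = 1/9$.
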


\begin{proof}

Fix $ r\in [R]$. Let $ b = h_{\ell,r}(j)$, and define $J = \cup_{ t \in h^{-1}_{\ell,r}(b)} I_{\ell,t}$. We observe that
\begin{align*}
|y_{\ell,r,b }|^2 =\|x_J \|_2^2 g^2, \text{~where~} g \sim \mathcal{N}(0,1).
\end{align*}

By property of Gaussian distribution, we have
\begin{align*}
\Pr \left[ |y_{\ell,r,b}|^2 \leq (\eta/C_j) \|x_J\|_2^2 \right] \leq \frac{2}{\sqrt{2 \pi}} \sqrt{ \frac{\eta}{ C_j} } \leq \sqrt{\frac{2\eta}{\pi C_j}},
\end{align*}

It implies, since $\|x_J\|_2^2 \geq \|x_{I_{\ell,j}}\|_2^2 \geq C_j\frac{\epsilon}{k} \|x_{-k}\|_2^2$, that
\begin{align*}
\Pr \left[ |y_{\ell,r,b}|^2 \leq \eta  \frac{\epsilon}{k} \|x_{-k}\|_2^2\right] \leq \sqrt{\frac{2\eta}{\pi C_j}}.
\end{align*}

Since Lemma~\ref{lem:lplp_tail_estimation}, we have $ \ov{V} \leq \frac{1}{ k } \|x_{-k}\|_2^2$.  Because $V =\epsilon \ov{V}$,
\begin{align*}
 \Pr \left[ |y_{\ell,r,b }|^2 \leq \eta V \right] \leq \sqrt{\frac{2\eta}{\pi C_j}}.
\end{align*} 

The $R$ repetitions ensure that the failure probability can be driven down to $C_j^{-R/6}$, because
\begin{align*}
\Pr [z_j \leq \eta V] \leq &~ {R \choose R/2} \cdot \left( \sqrt{\frac{2\eta}{\pi C_j}} \right)^{R/2} \\
\leq &~ 2^R \cdot (2\eta / \pi)^{R/4} \cdot C_j^{-R/4} \\
\leq &~ (2^R \cdot (2\eta / \pi)^{R/4} \cdot 2^{-R/12}) \cdot C_j^{-R/6} \\
\leq &~ \left(2^{11} \cdot (2/9\pi)^3 \right)^{R/12} \cdot C_j^{-R/6} \\
\leq &~ C_j^{-R/6},
\end{align*}

where the first step follows from a union bound, the third step follows from $C_j \geq 2$, and the forth step follows from $\eta \leq 1/9$.

\end{proof}

The following lemma handles the probability of a non-heavy interval being considered ``heavy'' by the algorithm at level $\ell$.

\begin{lemma}[handling the probability of false positives]\label{lem:small}
Let $V$ be the value in Line \ref{lin:v_value} of Algorithm \ref{alg:btree}.
Let $j'$ be an index in $T_{\ell-1}$, and let $j$ be one of its children. If $\|x_{I_{\ell,j}}\|_2^2 \leq \zeta \frac{\epsilon}{k} \|x_{-C_0k}\|_2^2$, then with probability $1 - 2^{-R/3}$ (over the randomness of $h_{\ell,r}$ and $g_{i,\ell,r}$ for $r \in [R], i \in [n]$ in Definition~\ref{def:sketching_matrix}) we have that 
\begin{align*}
z_j < \eta V.
\end{align*}
\end{lemma}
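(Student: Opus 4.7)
The plan mirrors the structure of Lemma~\ref{lem:large}: bound the per-repetition failure probability by a sufficiently small absolute constant, and then amplify using the median over the $R$ independent repetitions. Fix $r \in [R]$ and write $b = h_{\ell,r}(j)$ and
\[
J \;=\; \bigcup_{t \,:\, h_{\ell,r}(t) = b} I_{\ell,t}.
\]
Conditioning on $h_{\ell,r}$, the independent Gaussians $g_{i,\ell,r}$ of Definition~\ref{def:sketching_matrix} make $y_{\ell,r,b}$ distributed as $\mathcal{N}(0,\|x_J\|_2^2)$, so $|y_{\ell,r,b}|^2 = g^2\,\|x_J\|_2^2$ for $g \sim \mathcal{N}(0,1)$ independent of $h_{\ell,r}$. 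The key step is therefore to argue that $\|x_J\|_2^2$ is small with constant probability over the hash.

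To control $\|x_J\|_2^2$, I would decompose $J = I_{\ell,j} \cup J_{\mathrm{heavy}} \cup J_{\mathrm{light}}$, where $J_{\mathrm{heavy}}$ is the union of colliding intervals (other than $I_{\ell,j}$) that contain at least one coordinate from $H(x, C_0 k)$, and $J_{\mathrm{light}}$ is the union of the remaining colliding intervals. The hypothesis of the lemma directly gives $\|x_{I_{\ell,j}}\|_2^2 \leq \zeta\,\frac{\epsilon}{k}\,\|x_{-C_0 k}\|_2^2$. Since there are at most $C_0 k$ intervals at level $\ell$ that contain any coordinate of $H(x, C_0 k)$, pairwise independence of $h_{\ell,r}$ and a union bound yield that the probability some such interval collides with $j$ is at most $C_0 k/B = C_0\,\epsilon/C_B$, which is at most $1/100$ for the choice of $C_B$ in Table~\ref{tab:btree}; on this event $J_{\mathrm{heavy}} = \emptyset$. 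For the light contribution, pairwise independence and linearity give
\[
\E\left[\|x_{J_{\mathrm{light}}}\|_2^2\right] \;\leq\; \frac{1}{B}\,\|x_{-C_0 k}\|_2^2 \;=\; \frac{\epsilon\,\|x_{-C_0 k}\|_2^2}{C_B\,k},
\]
and Markov's inequality upgrades this to $\|x_{J_{\mathrm{light}}}\|_2^2 \leq \frac{100\,\epsilon}{C_B\,k}\|x_{-C_0 k}\|_2^2$ with probability at least $99/100$. On the intersection of these two good hash events, $\|x_J\|_2^2 \leq \alpha\,\frac{\epsilon}{k}\,\|x_{-C_0 k}\|_2^2$ with $\alpha := \zeta + 100/C_B$, which is an arbitrarily small constant.

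Combining this with $V = \epsilon \bar{V} \geq \frac{\epsilon}{10\,k}\|x_{-C_0 k}\|_2^2$ from Lemma~\ref{lem:tail_estimation} yields $\eta V/\|x_J\|_2^2 \geq \eta/(10\alpha)$, which is a large absolute constant, so $\Pr[g^2 \geq \eta/(10\alpha)]$ is tiny by a standard Gaussian tail bound. Summing the two bad hash events and the Gaussian tail event, the per-repetition failure probability $\Pr[|y_{\ell,r,b}|^2 \geq \eta V]$ is bounded by an absolute constant $p_0$ that, with the values in Table~\ref{tab:btree}, satisfies $4 p_0 \leq 2^{-2/3}$. Since the $R$ repetitions are independent, the median exceeds $\eta V$ only if at least $R/2$ of the individual squared measurements do, and a standard binomial bound gives
\[
\Pr[z_j \geq \eta V] \;\leq\; \binom{R}{R/2}\, p_0^{R/2} \;\leq\; (4 p_0)^{R/2} \;\leq\; 2^{-R/3}.
\]
The only real obstacle, as in Lemma~\ref{lem:large}, is a constant-tuning exercise to ensure that the per-repetition failure probability is small enough to fit under the median-amplification threshold; this is exactly what the choices $\eta = 1/9$, $\zeta = 1/4000$, and $C_B = 10^5$ in Table~\ref{tab:btree} accomplish.
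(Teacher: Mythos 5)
Your proof is correct and follows essentially the same structure as the paper's: condition away collisions with a set of $O(C_0 k)$ heavy intervals (probability $O(C_0\epsilon/C_B)$ by pairwise independence), control the expected contribution of the remaining intervals via pairwise independence and Markov, use the tail-estimation lower bound $\bar V \geq \frac{1}{10k}\|x_{-C_0 k}\|_2^2$ to compare against $\eta V$, and finish with the median amplification over $R$ repetitions. The only cosmetic differences are that the paper defines the heavy set as the $C_0 k$ intervals of largest interval-mass (rather than those meeting $H(x,C_0 k)$) and applies Markov directly to $\E[|y_{\ell,r,b}|^2]$ instead of separating out a Gaussian tail bound after bounding $\|x_J\|_2^2$; both choices yield the same constants and the same final $2^{-R/3}$.
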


\begin{proof}
Fix $\ell \in [H]$ and consider the set $H_{\ell}$ that contains the $C_0 k$ coordinates $j''$ with the largest $\|x_{I_{\ell,j''}}\|_2^2$ values. Define $H_{\ell}^{(j)} = H_{\ell} \setminus \{j \}$. Fix $r \in [R]$ and observe that by a union-bound we get that
\begin{align*}
\Pr \left[ \exists j'' \in H_{\ell}^{(j)} ~\big|~ h_{\ell,r}(j) = h_{\ell,r}(j'') \right] \leq C_0k \cdot \frac{1}{C_Bk/ \epsilon} = \frac{C_0 \epsilon}{C_B} \leq \frac{1}{20},
\end{align*}
because $C_B \geq 20 C_0$.

We condition on the event $\forall j'' \in H_{\ell}^{(j)}: h_{\ell,r}(j) \neq h_{\ell,r}(j'') $. A standard calculation now shows that
\begin{align*}
\E \left[ |y_{\ell,r,h_{r,\ell}(j)}|^2 \right] 
\leq & ~  \|x_{I_{\ell,j}}\|_2^2 + \frac{1}{C_B} \frac{\epsilon}{k} \|x_{-C_0k}\|_2^2 \\
\leq & ~  \zeta \frac{\epsilon}{k} \| x_{-C_0 k} \|_2^2 + \frac{1}{C_B} \frac{\epsilon}{ k} \| x_{-C_0 k} \|_2^2  \\
\leq & ~ \frac{2\zeta \epsilon}{k} \|x_{-C_0k}\|_2^2,
\end{align*}
where the last step follows from $\frac{1}{C_B} < \zeta$.

We now apply Markov's inequality to obtain
\begin{align*}
     \Pr \left[ | y_{\ell,r,h_{\ell,r}(j)}|^2 \geq \eta V \right ]  
\leq & ~ \Pr \left[| y_{\ell,r,h_{\ell,r}(j)}|^2 \geq \frac{\eta \epsilon}{10 k} \|x_{-C_0k}\|_2^2 \right] \\
\leq & ~ \frac{\frac{2 \zeta \epsilon}{k} \|x_{-C_0k}\|_2^2 }{ \frac{\eta \epsilon}{10 k} \|x_{-C_0k}\|_2^2 } \\
= & ~ \frac{20 \zeta}{ \eta} \\
\leq & ~ \frac{1}{20}, & \text{~by~} \zeta \leq \eta/400.
 \end{align*}

By a union bound, the unconditional probability $\Pr \left[ | y_{\ell,r,h_{\ell,r}(j)}|^2 \geq \eta V \right ] \leq \frac{1}{10}$. 
Finally, we can upper bound the probability that $z_j = \mathrm{median}_{r\in [R]} |y_{\ell,r,h_{\ell,r}(j) }|^2$ is greater than $\eta V$,
\begin{align*}
\Pr [z_j \geq \eta V] \leq & ~  {R \choose R/2} (1/10)^{R/2} \\
< & ~  2^{R} \cdot 2^{-\frac{3}{2} R} \\
= & ~  2^{-\frac{R}{2}} \\
< & ~  2^{-\frac{R}{3}}.
\end{align*}

\end{proof}

\subsection{Analysis of running time}\label{sec:btree_running_time}

\begin{lemma}[bounds of $D,H$ with respect to $R$]\label{lem:bounds}
Let $D,H$ as in Definition \ref{def:degree_height}. It holds that $ D \leq 2^{\frac{R}{6}-10},$ and $ H \leq 2^{\frac{R}{6}-10}$.
\end{lemma}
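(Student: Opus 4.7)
The plan is a direct substitution of the constants defined in Table~\ref{tab:btree} followed by a crude inequality comparison, so the proof should be short and essentially computational. I would proceed as follows.

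First, I would unpack each quantity. By Definition~\ref{def:degree_height} we have $D = \lceil \log q / \log \log q \rceil$ and $H = \lceil C_H \log q / \log \log q \rceil$ with $C_H = 4$, while Definition~\ref{def:R} gives $R = \lceil C_R \log \log q \rceil$ with $C_R = 100$. Since $H \geq D$ it suffices to bound $H$. Using the assumption that $k/\epsilon \leq n/16$ from Definition~\ref{def:forest} we have $q \geq 16$, so $\log \log q \geq 2$ and in particular all of the ceilings are negligible additive constants.

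Next, I would rewrite $2^{R/6}$ in the convenient form $2^{R/6} \geq 2^{(C_R/6) \log \log q} = (\log q)^{C_R/6}$. With $C_R = 100$ this gives $2^{R/6-10} \geq (\log q)^{100/6}/2^{10} \geq (\log q)^{16}/2^{10}$. On the other hand, $H \leq C_H \log q / \log \log q + 1 \leq 5 \log q$ for $q \geq 16$. Thus the desired inequality $H \leq 2^{R/6-10}$ reduces to $5 \log q \leq (\log q)^{16}/2^{10}$, i.e.\ $(\log q)^{15} \geq 5 \cdot 2^{10}$, which holds comfortably because $\log q \geq 4$ gives $(\log q)^{15} \geq 4^{15} = 2^{30}$.

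Since $D \leq H$, the same bound applies to $D$, finishing the proof. There is essentially no obstacle here: the only thing to watch out for is the boundary behavior at small $q$, which is controlled by the standing assumption $k/\epsilon \leq n/16$ that forces $q \geq 16$ and hence $\log \log q \geq 2$. If one worried about this being too tight, one could also observe that the constants $C_R$ and $C_H$ have been chosen with substantial slack precisely so that the polynomial-in-$\log q$ factor $(\log q)^{C_R/6}$ dwarfs the linear-in-$\log q$ quantity $H$ for every valid $q$.
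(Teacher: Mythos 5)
Your proof is correct and takes essentially the same route as the paper: observe $H \geq D$, plug in the explicit constants $C_H=4$, $C_R=100$, and use $q \geq 16$ (hence $\log\log q \geq 2$) to verify a crude numerical inequality. The only cosmetic difference is that you rewrite $2^{R/6}$ as the polynomial $(\log q)^{C_R/6}$ and compare polynomials in $\log q$, whereas the paper writes $\log q = 2^{\log\log q}$ and compares exponents; the two manipulations are algebraically equivalent, and both have ample slack.
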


\begin{proof}

Since $H \geq D$, it suffices to prove the claim only for $H$.

\begin{align*}
H  = C_H \frac{\log q}{ \log \log q}< C_H\log q = C_H\log \left( \epsilon n / k \right) = C_H 2^{\log \log(\epsilon n/k)} \leq 2^{\frac{R}{6} - 10},
\end{align*}
where the third step follows from $q = \epsilon n / k$, and the last step follows from $\log \log (\epsilon n / k) + \log C_H \leq (C_R / 6) \log \log ( \epsilon n /k ) - 10$ and note that $\log\log (\epsilon n / k) \geq 2$ because we assume $k / \epsilon \leq n / 16$.

\end{proof}

\begin{lemma}[running time]\label{lem:running_time}
Let $R$ as in Definition \ref{def:R} and $D$, $H$ as in Definition \ref{def:degree_height}.
Let $T_H$ be the set obtained by applying procedure \textsc{RecursiveBTree} $H$ times (lines \ref{lin:beg_of_btree}-\ref{lin:end_of_btree}) of Algorithm \ref{alg:btree}, and let $C_L > 1$ be some sufficiently large absolute constant. With probability $1- H \cdot 2^{-R/6 + 1}$ we have that: 
\begin{itemize}
\item $|T_{H}| \leq C_L \cdot k/\epsilon$,
\item The running time of \textsc{BTreeSparseRecovery} is 
\begin{align*}
O\left( \epsilon^{-1} k \cdot \log(\epsilon n/ k) \cdot D \right).
\end{align*}
\end{itemize}
\end{lemma}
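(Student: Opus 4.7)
My proof plan has three steps: deterministically bound the number of genuinely heavy intervals at each level, use Lemma~\ref{lem:small} plus Markov to control the false positives, and then read off the running time from the resulting size bound.

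\textbf{Step 1: counting heavy intervals.} Call an interval $I_{\ell,j}$ \emph{heavy} if $\|x_{I_{\ell,j}}\|_2^2 > \zeta\frac{\epsilon}{k}\|x_{-C_0 k}\|_2^2$, and \emph{light} otherwise. I would first show that at any level $\ell$ there are at most $(C_0+1/\zeta)k/\epsilon = O(k/\epsilon)$ heavy intervals. Each heavy interval either (a) contains one of the top $C_0 k$ coordinates of $x$, giving at most $C_0 k$ candidates, or (b) lies entirely inside the support of $x_{-C_0 k}$, in which case it consumes at least $\zeta\epsilon\|x_{-C_0 k}\|_2^2/k$ of the total tail mass $\|x_{-C_0 k}\|_2^2$, bounding the count by $k/(\zeta\epsilon)$.

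\textbf{Step 2: inductive size bound.} Choose $C_L$ large enough that $C_L/2 \geq C_0 + 1/\zeta$. I will prove by induction on $\ell$ that $|T_\ell|\leq C_L k/\epsilon$ with probability at least $1-\ell\cdot 2^{-R/6+1}$. The base case $|T_0|=\tau=k/\epsilon$ holds deterministically. For the inductive step, condition on $|T_{\ell-1}|\leq C_L k/\epsilon$, so that the algorithm inspects at most $C_L k D/\epsilon$ children. By Lemma~\ref{lem:small}, each light child enters $T_\ell$ with probability at most $2^{-R/3}$. Combining linearity of expectation with Lemma~\ref{lem:bounds} yields
\[
\mathbb{E}[\text{\# light survivors at level }\ell \mid T_{\ell-1}] \leq \frac{C_L k D}{\epsilon}\cdot 2^{-R/3}\leq \frac{C_L k}{\epsilon}\cdot 2^{R/6-10}\cdot 2^{-R/3}= \frac{C_L k}{\epsilon}\cdot 2^{-R/6-10}.
\]
Markov's inequality then bounds the probability that more than $(C_L/2)k/\epsilon$ light children survive by $2\cdot 2^{-R/6-10}=2^{-R/6-9}\leq 2^{-R/6+1}$. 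Combined with the at most $(C_L/2)k/\epsilon$ heavy survivors from Step~1, the inductive hypothesis propagates to $|T_\ell|\leq C_L k/\epsilon$. A union bound over the $H$ levels gives the failure probability $H\cdot 2^{-R/6+1}$ stated in the lemma.

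\textbf{Step 3: running time.} On the good event, level $\ell$ examines $|T_{\ell-1}|\cdot D\leq C_L k D/\epsilon$ children, each requiring an $O(R)$-time median computation over the $R$ repetitions that define $z_{j_p}$. Summing over $\ell=1,\ldots,H$, and using $HR=O(\log q)=O(\log(\epsilon n/k))$ from Definitions~\ref{def:degree_height} and~\ref{def:R}, the total running time is $O(\epsilon^{-1} k \log(\epsilon n/k)\cdot D)$, as claimed; the initial tail estimation of Line~\ref{lin:ov_v_value} adds only $O(1)$ by Lemma~\ref{lem:tail_estimation} and is absorbed.

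\textbf{Main obstacle.} The delicate point is the Markov calibration per level: after paying a branching factor of $D$ at each step, one needs enough slack to obtain a $2^{-R/6}$-style tail bound on the number of light survivors. Lemma~\ref{lem:bounds} together with $R=C_R\log\log q$ from Definition~\ref{def:R} are chosen precisely so that $D\cdot 2^{-R/3}\leq 2^{-R/6-10}$, and $C_L$ must be chosen large enough to absorb both the deterministic heavy-interval count from Step~1 and half of itself arising from the Markov threshold.
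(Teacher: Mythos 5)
Your proof is correct and follows essentially the same strategy as the paper: a deterministic $O(k/\epsilon)$ bound on heavy intervals per level, Lemma~\ref{lem:small} plus Markov to control light survivors at each level (with Lemma~\ref{lem:bounds} absorbing the branching factor $D$), induction with a union bound over the $H$ levels, and then reading off the $O((k/\epsilon)\cdot D\cdot R\cdot H)=O(\epsilon^{-1}k\log(\epsilon n/k)\cdot D)$ running time. The only cosmetic difference is that you invoke $D\leq 2^{R/6-10}$ inside the expectation computation before applying Markov, whereas the paper applies Markov with threshold $\frac{C_L k}{2\epsilon}D\,2^{-R/6}$ and then uses $D\leq 2^{R/6}$ afterward---these are the same calculation rearranged.
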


\begin{proof}

Let $C_L = 2(C_0 + 1 / \zeta)$.

First, it is easy to see that $|T_0|$ is bounded by 
\begin{align*}
|T_0| = \tau = k / \epsilon < C_L k / \epsilon.
\end{align*}

 We claim that if we condition on the event that $|T_{\ell-1}| \leq C_L k/\epsilon$, then with probability $1-2^{-R/6+1}$, $|T_{\ell}| \leq C_L k / \epsilon $. The proof of both bullets  will then follow by a union-bound over all $H$ levels. Indeed, consider the set $Q_{\ell}$ containing the $|T_{\ell-1}| D \leq C_L \cdot (k/\epsilon) \cdot D$ coordinates $j$ that are children of some $j' \in T_{\ell-1}$. Define
\begin{align*}
B_{\ell} = \left\{ j \in Q_{\ell} ~\bigg|~ \|x_{I_{\ell,j}}\|_2^2 \leq \zeta \frac{\epsilon}{k} \|x_{-C_0k}\|_2^2 \right\}.
\end{align*}


By definition of $B_{\ell}$ and $Q_{\ell}$, we have
\begin{align*}
|B_{\ell}| \leq |Q_{\ell}| \leq C_L \cdot (k / \epsilon) \cdot D.
\end{align*}

Moreover, Lemma \ref{lem:small} gives
\begin{align*}
 \forall j \in B_{\ell},  \Pr \left[ z_j \geq \eta V \right] \leq 2^{-R/3}
\end{align*}

Define random variables $W_j$ to be $1$ if $z_j \geq \eta V$, and $0$ otherwise. Then
\begin{align*}
\E \left[ \sum_{j \in B_{\ell}} W_j \right]
\leq & ~ \frac{ C_L k}{ \epsilon} D \cdot 2^{-R/3}
\end{align*}

An application of Markov's inequality gives
\begin{align*}
\Pr \left[ \sum_{j \in B_{\ell}} W_j \geq  \frac{ C_L k}{ 2\epsilon} D \cdot 2^{-R/6} \right] \leq 2^{-R/6+1}.
\end{align*}

Conditioning on $\sum_{j \in B_{\ell}} W_j \leq 
 \frac{ C_L k}{ 2\epsilon} D \cdot 2^{-R/6}$  we will upper bound the size of $T_{\ell}$.

First, observe that there exist at most $(C_0k + k/(\zeta \epsilon))$ $j \in \mathcal{I}_{\ell}$ for which $\|x_{I_{\ell,j}}\|_2^2 > \zeta \frac{\epsilon}{k} \|x_{-C_0 k}\|_2^2$. This gives
\begin{align}\label{eq:btree_size_T_ell}
|T_{\ell}| \leq \left( C_0k + \frac{k}{ \zeta \epsilon} \right) + \frac{C_L k}{2\epsilon}D 2^{-R/6}. 
\end{align}

If $C_L \geq 2 (C_0 + 1/\zeta)$, then we can upper bound the first term in Eq.~\eqref{eq:btree_size_T_ell},
\begin{align*}
 C_0k + \frac{k}{ \zeta \epsilon}  \leq \left( C_0+ \frac{1}{\zeta} \right) \frac{k}{\epsilon} \leq \frac{1}{2} \frac{ C_L k }{ \epsilon }.
\end{align*}

For the second term in Eq.~\eqref{eq:btree_size_T_ell}, we can show that 
\begin{align*}
\frac{C_L k}{2\epsilon} \cdot D 2^{-R/6} \leq \frac{1}{2} \frac{C_L k}{ \epsilon},
\end{align*}

or equivalently $ D \leq 2^{R/6}$, which holds by Lemma \ref{lem:bounds}.

We have $D$ levels, and at each level $\ell$ we have $|T_\ell| = O(k/\epsilon)$, conditioned on the aforementioned events happening. The children of $T_\ell$ is then $O(k/\epsilon \cdot D)$. Since we have $R$ repetitions the total running time per level is $O( (k/\epsilon) \cdot D \cdot R)$, and the total running time is 
\begin{align*}
O( (k/\epsilon) \cdot D \cdot R \cdot H ) = O \left( (k/\epsilon) \cdot D \cdot \log \log q \cdot \frac{\log q}{\log \log q} \right) = O \left( (k/\epsilon) \cdot D \cdot \log( \epsilon n / k ) \right) ,
\end{align*}
where the first step follows from definition of $R$ and $H$, and the last step follows from definition of $q$.

Therefore, it gives the desired result.
\end{proof}

\subsection{Guarantees of the algorithm}\label{sec:btree_guarantee}

\begin{lemma}[guarantees]\label{lem:final}
Let $L = T_H$ be the set obtained by applying procedure \textsc{RecursiveBTree} $R$ times (lines \ref{lin:beg_of_btree}-\ref{lin:end_of_btree}) of Algorithm \ref{alg:btree}, we have that, with probability $9/10$, there exist $T' \subseteq L$ of size at most $k$, such that 
\begin{align*} 
	 \|x - x_{T'}\|_2^2  \leq (1 + \epsilon) \|x_{-k}\|_2^2.
\end{align*}
\end{lemma}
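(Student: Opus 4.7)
The plan is to take $T' := L \cap H(x,k)$, which automatically gives $|T'| \leq k$, and then reduce the statement to controlling how much of the heavy mass escapes $L$. Since
\begin{align*}
\|x - x_{T'}\|_2^2 = \|x_{-k}\|_2^2 + \|x_{H(x,k) \setminus L}\|_2^2,
\end{align*}
the lemma follows once I establish $\|x_{H(x,k) \setminus L}\|_2^2 = O(\epsilon)\|x_{-k}\|_2^2$ with probability $\geq 9/10$; the displayed $(1+\epsilon)$ guarantee is then obtained from the $(1 + O(\epsilon))$ version by running the algorithm with $\epsilon/c$ in place of $\epsilon$ for an appropriate absolute constant $c$, which only adjusts the hidden constants $C_B, C_0, C_R, C_H$ in Table~\ref{tab:btree}. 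To bound the missed mass I would split $H(x,k) = A \sqcup B$ with $A = \{i \in H(x,k) : |x_i|^2 \geq 2(\epsilon/k)\|x_{-k}\|_2^2\}$. The ``light'' part $B$ has
\begin{align*}
\|x_{B \setminus L}\|_2^2 \leq \|x_B\|_2^2 \leq |B| \cdot 2(\epsilon/k)\|x_{-k}\|_2^2 \leq 2\epsilon \|x_{-k}\|_2^2
\end{align*}
deterministically, so only the coordinates in $A$ need a probabilistic argument.

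For each $i \in A$ set $C_i := k|x_i|^2/(\epsilon\|x_{-k}\|_2^2) \geq 2$, and let $I_\ell(i)$ denote the unique level-$\ell$ interval containing $i$. Since $\|x_{I_\ell(i)}\|_2^2 \geq |x_i|^2 = C_i (\epsilon/k)\|x_{-k}\|_2^2$, Lemma~\ref{lem:large} (conditioned on the tail-estimation event of Lemma~\ref{lem:lplp_tail_estimation}, which holds with probability $\geq 99/100$) gives, at each level $\ell$,
\begin{align*}
\Pr\bigl[\,I_\ell(i) \notin T_\ell \,\big|\, I_{\ell-1}(i) \in T_{\ell-1}\,\bigr] \leq C_i^{-R/6}.
\end{align*}
Because the algorithm only expands children of $T_{\ell-1}$, the events $\{I_\ell(i) \in T_\ell\}$ are nested in $\ell$; and since the hashes and Gaussians across different levels are independent by Definition~\ref{def:sketching_matrix}, a union bound over $\ell \in [H]$ yields $\Pr[i \notin L] \leq H \cdot C_i^{-R/6}$.

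Assembling the expectation,
\begin{align*}
\E\bigl[\|x_{A \setminus L}\|_2^2\bigr] \leq \sum_{i \in A} |x_i|^2 \cdot H \cdot C_i^{-R/6} = H \cdot \frac{\epsilon}{k}\|x_{-k}\|_2^2 \cdot \sum_{i \in A} C_i^{\,1-R/6} \leq 2 H \cdot 2^{-R/6} \cdot \epsilon\|x_{-k}\|_2^2,
\end{align*}
where the last step uses $|A| \leq k$ together with the monotonicity bound $C_i^{1-R/6} \leq 2^{1-R/6}$ valid for $C_i \geq 2$ and $R \geq 12$. With $R = C_R \log\log q$ and $H = O(\log q / \log\log q)$ from Definitions~\ref{def:degree_height} and \ref{def:R}, the prefactor $2H \cdot 2^{-R/6}$ is $(\log q)^{-\Omega(1)}$ and in particular $\leq 1/100$. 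Markov then gives $\Pr[\|x_{A \setminus L}\|_2^2 \geq (\epsilon/10)\|x_{-k}\|_2^2] \leq 1/10$, and union-bounding with the tail-estimation failure ($\leq 1/100$) yields, with probability at least $9/10$,
\begin{align*}
\|x - x_{T'}\|_2^2 \leq \|x_{-k}\|_2^2 + \bigl(\tfrac{\epsilon}{10} + 2\epsilon\bigr)\|x_{-k}\|_2^2 \leq (1 + 3\epsilon)\|x_{-k}\|_2^2,
\end{align*}
after which the rescaling of $\epsilon$ finishes the proof.

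The main obstacle is the tight interplay between the two sources of slack. The deterministic $\Theta(\epsilon)$ contribution from $B$ is intrinsic to the simple choice $T' = L \cap H(x,k)$, which forces the constant rescaling of $\epsilon$. Meanwhile, the probabilistic tail from $A$ only barely survives the union bound over $H$ levels: the per-level failure probability $C_i^{-R/6}$ must beat $H = \Theta(\log q / \log\log q)$, and the repetition count $R = \Theta(\log\log q)$ is precisely what makes $H \cdot 2^{-R/6}$ negligible. A weaker repetition count or an attempt to avoid the $H$-fold union bound would break the argument, so the tight coupling between $D$, $H$, and $R$ designed in the algorithm is essential here.
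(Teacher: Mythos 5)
Your proposal is correct and follows essentially the same route as the paper's proof. The paper picks $T' = \mathcal{H} \cap T_H$ where $\mathcal{H}$ is exactly your set $A$, whereas you take $T' = L \cap H(x,k) \supseteq A \cap L$; this is an immaterial difference because both choices discard the same $B$-mass bound of $2\epsilon\|x_{-k}\|_2^2$ and both control the missed $A$-mass via the same chain of steps: Lemma~\ref{lem:large} per level, union bound over $H$ levels to get $\Pr[i \notin L] \leq H C_i^{-R/6}$, then $\E[\|x_{A\setminus L}\|_2^2] \leq 2H 2^{-R/6}\epsilon\|x_{-k}\|_2^2$ using $|A| \leq k$ and $C_i \geq 2$, Lemma~\ref{lem:bounds} to kill the $H 2^{-R/6}$ prefactor, and Markov. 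One small quantitative nit: with the thresholds you chose, $1/10$ (Markov) plus $1/100$ (tail estimation) slightly overshoots the $1/10$ total failure budget; since $2H 2^{-R/6} \leq 2^{-9}$ under Lemma~\ref{lem:bounds} the Markov bound is actually far below $1/10$, so the union bound is fine with tighter bookkeeping, as in the paper (which gets $1/40$ at the Markov step). You also make explicit the conditioning on the tail-estimation event, which the paper leaves implicit; that is a helpful clarification, not a departure.
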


\begin{proof}

Define 
\begin{align*}
\mathcal{H} = \left\{ j \in H(x,k) ~\bigg|~ \exists C_j \geq 2,  |x_j|^2 \geq C_j \frac{\epsilon}{k}\|x_{-k}\|_2^2 \right\}.
\end{align*}

Moreover, associate every $ j \in \mathcal{H}$ with its corresponding $C_j = \frac{|x_j|^2}{ \frac{\epsilon}{k}\|x_{-k}\|_2^2}$.

Pick $j \in \mathcal{H}$. Let also $j_1,j_2,\ldots,j_{H }$, be numbers such that
\begin{align*}
 j \in I_{1,j_1}, j \in I_{2,j_2},\ldots ,j \in I_{R, j_{H}}.
\end{align*} 
For any $t \in \{1,\ldots, H-1 \}$, if $ I_{t,j_t} \in T_t$, then $I_{t+1,j_{t+1}} \in T_{t+1}$ with probability $1 - C_j^{-R/6}$, by Lemma \ref{lem:large}. Since $C_j \geq 2$, this allows us to take a union bound over all $H$ levels and claim that with probablity $1 -HC_j^{-R/6}$, $j \in T_{R}$. For $j \in \mathcal{H}$ define random variable to $\delta_j$ to be $1$ if $ j \notin T_H $. 
\begin{align*}
 \E \left[\delta_j x_j^2 \right]
\leq & ~ H\cdot C_j^{-R/6}x_j^2 \\
\leq & ~ H\cdot C_j^{-R/6}C_j \frac{\epsilon}{k}\|x_{-k}\|_2^2 \\
 = & ~ H \cdot C_j^{-R/6 + 1 } \frac{\epsilon}{k}\|x_{-k}\|_2^2 \\
 \leq & ~ \frac{\epsilon}{80 k} \|x_{-k}\|_2^2,
\end{align*}
where the first step follows by definition of $\delta_j$, the second step follows by the fact that $j \in \mathcal{H}$, and the last step follows by Lemma \ref{lem:bounds}.
Since $|\mathcal{H}| \leq k$, we have that
\begin{align*}
\E \left[ \sum_{j \in \mathcal{H}} \delta_j x_j^2 \right] \leq |{\cal H}| \cdot \frac{ \epsilon }{ 80 k } \| x_{-k} \|_2^2 \leq \frac{\epsilon}{80}\|x_{-k}\|_2^2.
\end{align*}
Then applying Markov's inequality, we have
\begin{align*}
\Pr \left[ \sum_{j \in \mathcal{H} }\delta_j x_j^2 > \frac{\epsilon}{2}\|x_{-k}\|_2^2\right] \leq \frac{1}{40}.
\end{align*}

We condition on the event $\sum_{j \in \mathcal{H}} \delta_j x_j^2 \leq \frac{\epsilon}{2} \|x_{-k}\|_2^2$. Setting $T' = \mathcal{H} \cap T_H$ we observe that
\begin{align*}
 \|x - x_{T'}\|_2^2  = & ~ \sum_{j \in \mathcal{H} \cap H(x,k) } \delta_j x_j^2 + \|x_{ H(x,k) \setminus \mathcal{H}} \|_2^2  + \|x_{[n] \setminus H(x,k)}\|_2^2 \\
 \leq & ~ \frac{\epsilon }{2}  \|x_{-k}\|_2^2 + k\cdot \frac{2\epsilon}{k} \|x_{-k}\|_2^2 + \|x_{-k}\|_2^2 \\
\leq & ~ ( 1+ 3\epsilon) \|x_{-k}\|_2^2.
\end{align*}
where the second step follows by the bound on $\sum_{j \in T } \delta_j x_j^2$ and the fact that every  $j \notin T$ satisfies $ |x_j|^2 \leq (2\epsilon/k)\|x_{-k}\|_2^2$.



Rescaling for $\epsilon$ we get the desired result.
\end{proof}

\subsection{Bounding the number of measurements}\label{sec:btree_measurements}

In this subsection, we prove that 

\begin{claim}[$\#$measurements]
The number of measurements is $O( \epsilon^{-1} k \cdot \log(\epsilon n/k) )$.
\end{claim}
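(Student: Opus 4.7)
The plan is simply to add up the number of rows contributed by each component of the sketch and substitute the parameter choices from Definitions~\ref{def:forest}, \ref{def:degree_height}, \ref{def:R}, and \ref{def:sketching_matrix}. The sketch used by Algorithm~\ref{alg:btree} consists of (i) the tail-estimation sketch invoked on line~\ref{lin:ov_v_value}, which by Lemma~\ref{lem:tail_estimation} contributes $O(\log(1/\delta)) = O(1)$ rows for constant failure probability, and (ii) the vertical stacking of the matrices $\Phi^{(1)}, \ldots, \Phi^{(H)}$ of Definition~\ref{def:sketching_matrix}.

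For component (ii), each $\Phi^{(\ell)}$ is the vertical stack of $R$ submatrices $\Phi^{(\ell)}_r$, and each submatrix has exactly $B$ rows (one per bucket of the hash function $h_{\ell,r} : [\tau D^{\ell}] \to [B]$). Hence the total contribution of the interval-forest sketch is
\begin{align*}
H \cdot R \cdot B.
\end{align*}
Substituting $H = \lceil C_H \log q / \log \log q \rceil$, $R = \lceil C_R \log \log q \rceil$, and $B = C_B k / \epsilon$ from the respective definitions, the product telescopes:
\begin{align*}
H \cdot R \cdot B \;=\; O\!\left( \frac{\log q}{\log \log q} \right) \cdot O(\log \log q) \cdot O(k/\epsilon) \;=\; O\!\left( \frac{k}{\epsilon} \log q \right).
\end{align*}

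Since $q = n/\tau = \epsilon n / k$ by Definition~\ref{def:forest}, this equals $O(\epsilon^{-1} k \log(\epsilon n / k))$. Adding the $O(1)$ rows from tail estimation is absorbed, yielding the claimed bound. The only minor subtlety is that the parameters are defined with ceilings; this only changes each factor by at most a constant, and does not affect the $O(\cdot)$ estimate. No substantive obstacle arises in the counting; all the hard work has already been done in choosing $H$, $R$, $B$ so that the product collapses cleanly and in verifying (in Lemmas~\ref{lem:running_time} and~\ref{lem:final}) that these choices suffice for correctness.
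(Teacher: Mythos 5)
Your proof is correct and follows essentially the same route as the paper: both count the rows of the interval-forest sketch as $B \cdot H \cdot R$, substitute the definitions of $B$, $H$, $R$, and observe that the $\log\log q$ factors cancel, giving $O(\epsilon^{-1}k\log q) = O(\epsilon^{-1}k\log(\epsilon n/k))$. Your additional bookkeeping of the $O(1)$ tail-estimation rows is harmless (it is absorbed), and your remark about ceilings only affecting constants is a fair, if minor, piece of rigor.
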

\begin{proof}

Recall the definition of $\tau$ and $q$,
\begin{align*}
\tau = k/\epsilon, ~~~ q = n /\tau.
\end{align*}

We thus have the following bound on the number of measurements: 
\begin{align*}
B \cdot H \cdot R = C_B (k/\epsilon) \cdot C_H \frac{\log (\epsilon n/k)}{\log \log(\epsilon n/k)} \cdot C_R \log \log(\epsilon n/k) = O \left( (k/\epsilon) \log (\epsilon n/k) \right).
\end{align*}

\end{proof}

\section{The Pruning Linear Sketch}\label{sec:pruning_sketch}
The goal of this section is to prove Theorem~\ref{thm:pruning}.
\begin{theorem}[Restatement of Lemma~\ref{lem:pruning_sketch}]\label{thm:pruning}
Let $C_L, \alpha, \beta> 1$ be three fixed constants. There exists a randomized construction of a matrix $\Phi \in \mathbb{R}^{m\times n}$, with $m = O( (k/\epsilon ) \cdot \log(1/\epsilon)),$ with column sparsity $O(\log(1/\epsilon))$ such that the following holds :\\
Suppose that one is given a set $L \subseteq [n] $ such that
\begin{align*}
 |L| = C_L \cdot k/\epsilon, && \exists T \subset L, |T| \leq k: \|x - x_T \|_2 \leq ( 1+\epsilon) \|x_{-k}\|_2.
\end{align*}
Then procedure \textsc{Prune} (Algorithm~\ref{alg:prune}) can find a set $S$ of size $\beta \cdot k$  in time $O(m)$, such that
\begin{align*}
\|x - x_{S} \|_2 \leq (1 + \alpha \cdot \epsilon) \|x_{-k}\|_2
\end{align*}
holds with probability $9/10.$
\end{theorem}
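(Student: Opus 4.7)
The plan is to hash $[n]$ into $B=\Theta(k/\epsilon)$ buckets via a pairwise independent $h_r:[n]\to[B]$, multiply each coordinate by an i.i.d.\ standard Gaussian $g_{i,r}$, and repeat this $R=\Theta(\log(1/\epsilon))$ times, producing $BR=O((k/\epsilon)\log(1/\epsilon))$ measurements with column sparsity $R$. Given $y_{r,b}=\sum_{i:h_r(i)=b}g_{i,r}x_i$, for every $i\in L$ I form the estimate
\begin{align*}
z_i \;=\; \median_{r\in [R]} \bigl|y_{r,h_r(i)}\bigr|^2,
\end{align*}
and output the subset $S\subseteq L$ consisting of the $\beta k$ indices with the largest $z_i$. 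Decoding takes $O(|L|\cdot R)=O(m)$ time, so the remaining work is analytic.

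First I would establish a two-sided bound of the form $z_i = c\cdot x_i^2 \pm O\bigl(\frac{\epsilon}{k}\|x_{-k}\|_2^2\bigr)$ where $c=\median(g^2)$ with $g\sim\mathcal{N}(0,1)$. Conditioning on $h_r$, the bucket value $y_{r,h_r(i)}$ is a centered Gaussian with variance $V_{r,i}=x_i^2+W_{r,i}$, with $W_{r,i}$ the squared collision mass. Split $[n]$ into a \emph{heavy} set $H^\star=\{i':x_{i'}^2\geq \zeta\frac{\epsilon}{k}\|x_{-k}\|_2^2\}$, which has size $O(k/\epsilon)$, and its complement; making the constant in $B$ large enough ensures that no $i'\in H^\star\setminus\{i\}$ lands in the bucket of $i$ except with probability $\leq 1/100$, while Markov controls the light collision mass at $O(\frac{\epsilon}{k}\|x_{-k}\|_2^2)$. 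Combined with constant-probability two-sided control of the chi-square factor $g^2$, each replicate gives $|y_{r,h_r(i)}|^2\in[c_1 x_i^2,\;c_2(x_i^2+\frac{\epsilon}{k}\|x_{-k}\|_2^2)]$ with probability bounded above $1/2$, and the median over $R=\Theta(\log(1/\epsilon))$ rounds drives the per-index failure probability to $\mathrm{poly}(\epsilon)$.

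With the estimate bound in hand, the recovery guarantee follows from an exchange argument. Set $\tau=\|x_{-k}\|_2/\sqrt{k}$ and split $T=T_H\cup T_L$ with $T_H=\{i\in T:|x_i|\geq\tau\}$; note $|T_L|\leq|T|\leq k$ and the total number of $i\in[n]$ with $|x_i|\geq\tau$ is at most $2k$. Write
\begin{align*}
\|x-x_S\|_2^2-\|x-x_T\|_2^2 \;=\; \sum_{i\in T\setminus S}x_i^2 \;-\; \sum_{j\in S\setminus T}x_j^2,
\end{align*}
and use $|S\setminus T|\geq|T\setminus S|$ (forced by $|S|=\beta k\geq|T|$) to injectively match every $i\in T\setminus S$ to some $j\in S\setminus T$ with $z_j\geq z_i$; the concentration bound then gives $x_j^2\geq x_i^2-O(\frac{\epsilon}{k}\|x_{-k}\|_2^2)$, and summing over at most $k$ matched pairs leaves a shortfall of only $O(\epsilon)\|x_{-k}\|_2^2$. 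The threshold $\tau$ is used to separate cases: for $i\in T_H$ the signal $x_i^2\geq\tau^2$ dominates the noise floor $O(\frac{\epsilon}{k}\|x_{-k}\|_2^2)$, so $\E[\sum_{i\in T_H\setminus S}x_i^2]$ is small; the remaining at most $k$ light indices in $T_L$ are handled entirely by the exchange. Combining with the hypothesis $\|x-x_T\|_2^2\leq(1+\epsilon)\|x_{-k}\|_2^2$ and rescaling $\epsilon$ by the implicit constant yields the claimed $(1+\alpha\epsilon)$ bound.

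The hard part is the concentration of $z_i$ for light $i$: the signal-to-noise ratio of a single replicate is only a constant, so correctness per replicate is only a constant-probability event, and the median over $R=\Theta(\log(1/\epsilon))$ rounds is needed to pay for a union bound over the relevant indices. The choice of \textbf{Gaussian} weights rather than random signs is essential precisely here, because a cluster of equal-magnitude coordinates sharing a bucket would cancel systematically under signs, producing a biased downward estimate that no median over $R$ rounds can repair; with Gaussians, the anti-concentration inequality $\Pr[|g|\leq x]=O(x)$ guarantees that this pathological cancellation is rare in every single repetition, which is the key mechanism unlocking the pruning bound.
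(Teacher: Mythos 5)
Your high-level strategy is the right one — hash into $B=\Theta(k/\epsilon)$ buckets with Gaussian weights, repeat $R=\Theta(\log(1/\epsilon))$ times, median-estimate, keep the top $\beta k$, and then split the analysis of $\|x-x_S\|_2^2$ into a ``heavy'' part controlled by the threshold $\tau = \|x_{-k}\|_2/\sqrt{k}$ and a ``light'' part controlled by an exchange argument. The measurement count, column sparsity, decoding time, and the rationale for Gaussians over random signs are all correct. But there is a real gap in the middle of the argument, at exactly the step you flag as the hard one.

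The gap is in your two-sided concentration claim and the pairwise exchange built on it. You assert that the median estimate satisfies $z_i = c\cdot x_i^2 \pm O\bigl(\frac{\epsilon}{k}\|x_{-k}\|_2^2\bigr)$ with a single multiplicative constant $c=\median(g^2)$, and then conclude that a matched pair $(i,j)$ with $z_j\geq z_i$ gives $x_j^2 \geq x_i^2 - O\bigl(\frac{\epsilon}{k}\|x_{-k}\|_2^2\bigr)$, so summing over $\leq k$ matched light pairs loses only $O(\epsilon)\|x_{-k}\|_2^2$. This does not hold. With $R=\Theta(\log(1/\epsilon))$ the median of $R$ i.i.d.\ $g^2$'s concentrates around $\median(g^2)$ only to relative accuracy $\Theta(1/\sqrt{R})=\Theta(1/\sqrt{\log(1/\epsilon)})$, a constant; what you can actually prove is a \emph{constant-factor} two-sided bound, of the form $z_i\in[c_1|x_i|-E,\ c_2|x_i|+E]$ (with $c_1\neq c_2$; the paper uses $c_1=\frac{1}{3C_g}$ and $c_2=2$ with $E=\Theta(\sqrt{\epsilon/k}\,\|x_{-k}\|_2)$ — Definition~\ref{def:badly-estimated-coordinates}). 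With such a bound, the relation $z_j\geq z_i$ only yields $|x_i|\leq \frac{c_2}{c_1}|x_j| + O(E)$, i.e.\ a multiplicative loss of a constant factor on the magnitude scale. A one-to-one matching of $T\setminus S$ into $S\setminus T$ then gives $x_i^2 - x_j^2 \leq (\frac{c_2^2}{c_1^2}-1)x_j^2 + O(E|x_j| + E^2)$, and summing $\leq k$ pairs with $|x_i|,|x_j|\leq\tau$ produces $\Theta(1)\cdot\|x_{-k}\|_2^2$, not $O(\epsilon)\|x_{-k}\|_2^2$. Even in the idealized case $c_1=c_2$ the bound would be $x_i^2-x_j^2\leq O(E\tau)=O(\sqrt{\epsilon}/k)\|x_{-k}\|_2^2$ per pair, so the $k$ pairs still give $O(\sqrt{\epsilon})\|x_{-k}\|_2^2$, which is too large.

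What saves the argument in the paper is precisely the cardinality disparity that your injective matching throws away. The set $H_3 = S\setminus(T\cup\mathcal{M}\cup\mathcal{B})$ has size at least $(\beta-3)k$, far larger than $|H_1|\leq k$, and the paper compares $a=\max_{H_1}|x_i|$ against $b=\min_{H_3}|x_j|$ (Eq.~\eqref{eq:def_d_E_a_b}), deriving $a\leq 6C_g b + E$ and then bounding $\|x_{H_1}\|_2^2-\|x_{H_3}\|_2^2 \leq d a^2 - (\beta-3)kb^2$. Taking $\beta$ large enough makes the $-(\beta-3)kb^2$ term absorb the $36C_g^2kb^2$ coming from the multiplicative loss, and the leftover cross-term $12C_gbkE$ is killed by completing the square, leaving $2kE^2 = O(\epsilon)\|x_{-k}\|_2^2$. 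In other words, you need to charge each under-selected $i$ against $\Theta(\beta)$ selected $j$'s, not one. Relatedly, the paper tracks an explicit badly-estimated set $\mathcal{B}$ with $|\mathcal{B}|\leq\epsilon k/100$ (Lemma~\ref{cla:prune_claim_2} plus Markov) so that $T\setminus(S\cup\mathcal{M}\cup\mathcal{B})$ and $H_3$ consist only of well-estimated coordinates for which the two-sided bound actually holds; your sketch relies on a per-index failure probability ``$\mathrm{poly}(\epsilon)$'' and a union bound, but with $|L|=\Theta(k/\epsilon)$ indices you cannot union-bound, and the clean fix is exactly the expectation/Markov bound on $|\mathcal{B}|$ combined with the observation that each bad light coordinate contributes at most $\tau^2=\|x_{-k}\|_2^2/k$. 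Your heavy-coordinate bound also needs to be made precise: the paper's Lemma~\ref{cla:prune_claim_1} uses the stochastic-dominance/anti-concentration argument to show $\Pr[z_i<\tau]\leq (C_g\tau/|x_i|)^{\Theta(\log(1/\epsilon))}$, so the failure probability decays \emph{geometrically} in $|x_i|/\tau$; simply saying ``the signal dominates the noise'' is not enough to make $\E[\sum_{i\in\mathcal{M}}x_i^2\mathds{1}_{z_i<\tau}]$ come out $O(\epsilon)\|x_{-k}\|_2^2$.
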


In Section~\ref{sec:pruning_sketch_definition}, we provide some basic definitions and description of our algorithm. We analyze the coordinates from several perspectives in Section~\ref{sec:pruning_sketch_analyzing_coordinates}. We prove the correctness of our algorithm in Section~\ref{sec:pruning_sketch_analyzing_correctness} and analyze time, number of measurements column sparsity, success probability of algorithm in Section~\ref{sec:pruning_sketch_analyzing_others}.


\subsection{Design of the sketching matrix, and helpful definitions}\label{sec:pruning_sketch_definition}

\begin{table}[!h]
\centering
\begin{tabular}{|l|l|l|l|}
\hline
  Notation & Choice & Statement & Parameter\\ \hline
  $C_R$ & $10^4+500C_L$  & Definition~\ref{def:Phi_in_pruning} & $R$ \\ \hline
  $C_B$ &  $5 \times 10^5$ & Definition~\ref{def:Phi_in_pruning} & $B$ \\ \hline
  $C_g$ & $4/5$ & Fact~\ref{fact:gaussian_fact} & Gaussian variable \\ \hline
  $C_L$ & $10^4$ & Theorem~\ref{thm:pruning} & $L$ \\ \hline
  $\alpha$ & $5$ & Theorem~\ref{thm:pruning} & Blow up on $\epsilon$ \\ \hline
  $\beta$ & $100$ & Theorem~\ref{thm:pruning} & Blow up on $k$ \\ \hline
\end{tabular}\caption{Summary of constants in Section~\ref{sec:pruning_sketch}, the column ``Parameter'' indicates which parameter is depending on that constant. Note that set $L$ is the input of the algorithm in Section~\ref{sec:pruning_sketch} and the output of the algorithm in Section~\ref{sec:btree}.}
\end{table}

\begin{definition}[sketching matrix $\Phi$]\label{def:Phi_in_pruning}
Let $C_R,C_B > 1$ be absolute constants. Let $R = C_R \log(1/\epsilon)$. Let $B = C_B k / \epsilon$. For $r \in [R]$, we pick $2$-wise independent hash function $h_r: [n] \rightarrow [B]$, as well as normal random variables $\{g_{i,r} \}_{i \in [n], r \in [R]}$ and take measurements
\begin{align*}
y_{r,b} = \sum_{ i \in h^{-1}_r(b)} x_i g_{i,r}.
\end{align*}

\end{definition}

Given the set $L$, for every $i \in L$ we calculate 
\begin{align*}
z_i = \median_{ r \in [R] } |y_{r,h_r(i)}|,
\end{align*}
and keep the indices $i$ with the $\beta k$ largest $z_i$ values to form a set $S$ of indices, for some absolute constant $\beta$ sufficiently large.
We describe this pruning step in Algorithm~\ref{alg:prune}.
For the analysis, we define the threshold 
\begin{align}\label{eq:threshold_tau}
\tau = \|x_{-k}\|_2/\sqrt{k}.
\end{align}
 We will need the following standard fact about the Gaussian distribution. Then we proceed with a series of definitions and lemmata.

\begin{algorithm}[!t]\caption{the prune procedure}\label{alg:prune}
\begin{algorithmic}[1]
\Procedure{\textsc{Prune}}{$x,n,k,\epsilon,L $} \Comment{Theorem~\ref{thm:pruning}} 
  \State $R \leftarrow C_R \log(1/\epsilon)$
	\State $B \leftarrow C_B k / \epsilon$
  \For{$r=1 \to R$}
		\State Sample $h_r: [n] \rightarrow [B]$ $\sim 2$-wise independent family
    \For{$i=1 \to n$}
      \State Sample $g_{i,r} \sim {\cal N}(0,1)$
    \EndFor
  \EndFor
  \For{$r=1 \to R$}
    \For{$b=1 \to B$}
      \State $y_{r,b} \leftarrow \sum_{ i \in h^{-1}_r(b)} x_i g_{i,r}$
    \EndFor
  \EndFor
  \For{$i \in L$}
    \State $z_i \leftarrow \median_{ r \in [R] } |y_{r,h_r(i)}|$
  \EndFor
	\State $S \leftarrow \{i \in L: z_i \text{ is in the top }\beta k\text{ largest~coordinates~in~vector~}z\}$
	\State \Return $S$
\EndProcedure
\end{algorithmic}
\end{algorithm}

\begin{fact}[property of Gaussian]\label{fact:gaussian_fact}
Suppose $x \sim \N(0,\sigma^2)$ is a Gaussian random variable. For any $ t \in (0,\sigma]$ we have
\begin{align*}
\Pr[ x \geq t ] \in \left[ \frac{1}{2} ( 1 - \frac{4}{5} \frac{t}{\sigma} ), \frac{1}{2} ( 1 - \frac{2}{3} \frac{t}{\sigma} ) \right].
\end{align*}
Similarly, if $x \sim \N(\mu, \sigma^2)$, for any $t \in (0,\sigma]$, we have
\begin{align*}
\Pr[ |x| \geq t ] \in \left[ 1 - \frac{4}{5} \frac{t}{\sigma} , 1 - \frac{2}{3} \frac{t}{\sigma} \right].
\end{align*}

The form we will need is the following:
\begin{align*}
	\Pr_{g \sim \mathcal{N}(0,1)}[ |g| \leq t ] \leq \frac{4}{5}t. 
\end{align*}

Thought the analysis, for convenience we will set $C_g = 4/5$. Another form we will need is:
\begin{align*}
\Pr_{g \sim \mathcal{N}(0,1)} \left[ |g| \in \Big[ \frac{1}{3C_g} ,2 \Big] \right] \geq 0.63
\end{align*}
\end{fact}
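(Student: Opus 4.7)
The proof is a direct computation with the standard Gaussian density $\phi(z) = e^{-z^2/2}/\sqrt{2\pi}$. The plan is to first reduce the claim about $x \sim \mathcal{N}(0,\sigma^2)$ to the standard normal case via the substitution $Z = x/\sigma$: writing $s = t/\sigma \in (0,1]$, the claim becomes $\Pr[Z \geq s] \in [\tfrac{1}{2}(1 - \tfrac{4}{5}s),\, \tfrac{1}{2}(1-\tfrac{2}{3}s)]$, equivalently $\int_0^s \phi(z)\,dz \in [s/3,\,2s/5]$. The upper bound on this integral is immediate from the pointwise bound $\phi(z) \leq \phi(0) = 1/\sqrt{2\pi} < 2/5$. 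I would interpret the $\mu$ appearing in the second displayed line of the fact as a typo for $0$, since otherwise $\Pr[|x|\geq t]$ depends strongly on $\mu$ and the stated bounds cannot hold uniformly; under this reading the second line is just $\Pr[|x|\geq t] = 2\Pr[x\geq t]$ applied to the first.

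The only nontrivial step is the lower bound $\int_0^s \phi(z)\,dz \geq s/3$ for $s \in (0,1]$, which I expect to be the main obstacle. A single pointwise density bound does not suffice, since $\phi(1) = e^{-1/2}/\sqrt{2\pi} < 1/3$. I would instead define $f(s) = \int_0^s \phi(z)\,dz - s/3$; then $f(0) = 0$ and $f'(s) = \phi(s) - 1/3$, and since $\phi$ is strictly decreasing on $[0,1]$ there is a unique $s_0 \in (0,1)$ with $\phi(s_0) = 1/3$ (numerically $s_0 \approx 0.60$). Hence $f$ increases on $[0,s_0]$ and decreases on $[s_0,1]$, so its minimum on $[0,1]$ is attained at an endpoint; the values $f(0) = 0$ and $f(1) = \Phi(1) - 1/2 - 1/3 \approx 0.3413 - 0.3333 > 0$ close the argument.

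The two numerical consequences are then short. For $\Pr_{g\sim\mathcal{N}(0,1)}[|g| \leq t] \leq \tfrac{4}{5}t$, the density bound alone works for every $t > 0$: $\Pr[|g|\leq t] = 2\int_0^t \phi(z)\,dz \leq 2t\,\phi(0) = 2t/\sqrt{2\pi} < \tfrac{4}{5}t$, so no case split is needed. For the final estimate, with $C_g = 4/5$ we have $1/(3C_g) = 5/12$, so $\Pr[|g|\in[5/12,2]] = 2(\Phi(2) - \Phi(5/12))$; standard normal values give $\Phi(2) \approx 0.9772$ and $\Phi(5/12) \approx 0.6616$, yielding $\approx 0.6312 \geq 0.63$. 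Apart from the $s/3$ lower bound of the previous paragraph, every step reduces to the pointwise bound $\phi \leq 1/\sqrt{2\pi}$ or a direct table lookup.
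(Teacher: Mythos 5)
Your proof is correct, and it follows the paper's approach (direct computation against the standard normal density/CDF); the paper's own proof simply asserts that the first form holds ``by simple calculation'' and then verifies the final numerical bound from a CDF table exactly as you do. The value you add is in actually supplying the nontrivial step the paper waves away: you correctly observe that the lower bound $\int_0^s \phi \ge s/3$ on $(0,1]$ cannot be obtained from a single pointwise density bound (since $\phi(1) < 1/3$), and your derivative-sign argument on $f(s) = \int_0^s \phi - s/3$ together with the endpoint check $\Phi(1) - \tfrac12 \approx 0.3413 > \tfrac13$ closes it cleanly; your reading of the displayed $\mu$ as a typo for $0$ is also the right call, since the stated two-sided bound on $\Pr[|x|\ge t]$ fails for $|\mu|$ large.
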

\begin{proof}
The first form is true by simple calculation. 
The second form is holding due to numerical values of cdf for normal distribution, 
\begin{align*}
\Pr \left[ |g_{i,r}| \in \Big[ \frac{1}{3C_g}, 2 \Big] \right] = 2 (f(2) - f(1/3C_g)) = 2(f(2) - f(5/12)) \geq 2 (0.977 - 0.662) = 0.63,
\end{align*}
 where $f(x) = \int_{-\infty}^{x} \frac{e^{-x^2 / 2}}{\sqrt{2\pi}} dx$ is the cdf of normal distribution. 
\end{proof}

Stochastic dominance is a partial order between random variables and it is a classical concept in decision theory and decision analysis \cite{hr69,b75}. We give the simplest definition below and it is sufficient for our application.
\begin{definition}[stochastic domination of Gaussian random variables]
Let $\sigma_1 < \sigma_2$ and random variables $X \sim \mathcal{N}(0,\sigma_1^2), Y \sim \mathcal{N}(0,\sigma_2^2)$. Then we say that $|Y|$ stochastically dominates $|X|$, and it holds that
\begin{align*}
\Pr \left[ |Y| \geq \lambda \right] \geq \Pr \left[ |X| \geq \lambda \right], ~~~ \forall \lambda \geq 0.
\end{align*}
\end{definition}

We formally define the set $L$ as follows:
\begin{definition}[set $L$, input of the algorithm]\label{def:prune_input_set_L}
Let $C_L > 1$ be a fixed constant, and let set $L \subseteq [n]$ be defined as:
\begin{align*}
| L | = C_L \cdot k/\epsilon, && \exists T \subset |T| \leq k : \| x - x_T \|_2 \leq (1+\epsilon) \| x_{-k} \|_2.
\end{align*}
\end{definition}

We provide a definition called ``badly-estimated coordinate'',
\begin{definition}[badly-estimated coordinate]\label{def:badly-estimated-coordinates}
We will say a coordinate $i \in [n]$ is badly-estimated if
\begin{align*}
 z_i \notin \left[ \frac{1}{3C_g}|x_i| - \frac{1}{100} \frac{\sqrt{\epsilon}}{\sqrt{k}} \|x_{-k}\|_2, 2 |x_i| + \frac{1}{100} \frac{\sqrt{\epsilon}}{\sqrt{k}} \|x_{-k}\|_2 \right],
\end{align*}
\end{definition}

Then, we can define ``badly-estimated set'',
\begin{definition}[badly-estimated set ${\cal B}$]\label{def:badly-estimated-set}
Let set $L$ be defined as Definition~\ref{def:prune_input_set_L}. We say $\mathcal{B} \subseteq L$ is a badly-estimated set if for all $i \in {\cal B}$, $z_i$ is a badly estimated coordinate (see Definition~\ref{def:badly-estimated-coordinates}).
\end{definition}

We define a set of large coordinates in head,
\begin{definition}[large coordinates in head]
Let $\tau$ be defined in \eqref{eq:threshold_tau}. Let $L$ be defined in Definition~\ref{def:prune_input_set_L}. Let $C_g$ be the constant from Fact \ref{fact:gaussian_fact}. Define set
\begin{align*}
 \mathcal{M} = \{ i \in L \cap H(x,k): |x_i| \geq 3C_g \tau \},
\end{align*}
which contains the head coordinates of $x$ that are in $L$ and are larger in magnitude than $3C_g\tau$.
\end{definition}

\subsection{Analyzing head and badly-estimated coordinates}\label{sec:pruning_sketch_analyzing_coordinates}

\begin{lemma}[expected error from coordinates above $\tau$]\label{cla:prune_claim_1}
We have that
\begin{align*}
\E \left[ \sum_{ i \in \mathcal{M} }  x_i^2 \cdot \mathds{1}_{z_i < \tau} \right]\leq \frac{\epsilon}{100}  \|x_{-k}\|_2^2.
\end{align*}
\end{lemma}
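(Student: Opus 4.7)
The plan is to prove the lemma by bounding $\Pr[z_i<\tau]$ for each individual $i\in\mathcal{M}$ and then summing, exploiting $|\mathcal{M}|\leq k$ and $k\tau^2=\|x_{-k}\|_2^2$. The core ingredient is a per-repetition Gaussian anti-concentration bound: conditioning on $h_r$, the bucket $y_{r,h_r(i)}=\sum_{j\in h_r^{-1}(h_r(i))}x_j g_{j,r}$ is a centered Gaussian with variance $\sigma^2=\sum_{j\in h_r^{-1}(h_r(i))}x_j^2\geq x_i^2$. Hence by Fact~\ref{fact:gaussian_fact},
\[
\Pr\bigl[|y_{r,h_r(i)}|<\tau \,\big|\, h_r\bigr]\ \leq\ C_g\frac{\tau}{\sigma}\ \leq\ C_g\frac{\tau}{|x_i|}\ =:\ p_i,
\]
which is valid since the ratio is at most $1/(3C_g)<1$ because $i\in\mathcal{M}$ forces $|x_i|\geq 3C_g\tau$; in particular $p_i\leq 1/3$.

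Next I would use that the pairs $(h_r,\{g_{j,r}\}_j)$ are independent across $r$, so the $R$ failure events are independent and each has (unconditional) probability at most $p_i$. Since $z_i<\tau$ forces more than $R/2$ of them to occur, the standard binomial tail bound gives
\[
\Pr[z_i<\tau]\ \leq\ \Pr\bigl[\mathrm{Bin}(R,p_i)\geq R/2\bigr]\ \leq\ \bigl(4p_i(1-p_i)\bigr)^{R/2}.
\]

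The main calculation is then to show a uniform bound $x_i^2\Pr[z_i<\tau]\leq 9C_g^2\tau^2(8/9)^{R/2}$ for every $i\in\mathcal{M}$. Substituting $t=p_i=C_g\tau/|x_i|\in(0,1/3]$, the quantity $x_i^2(4p_i(1-p_i))^{R/2}$ becomes, up to the constant $4^{R/2}C_g^2$, the function $t^{R/2-2}(1-t)^{R/2}$. Its unique interior maximizer on $(0,1)$ is $t^\ast=(R-4)/(2(R-2))$, and $t^\ast\geq 1/3$ for all $R\geq 8$; so on $(0,1/3]$ the function is nondecreasing and attains its maximum at $t=1/3$, i.e.\ at $|x_i|=3C_g\tau$, giving exactly $9C_g^2\tau^2(8/9)^{R/2}$. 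I expect this single-variable monotonicity check to be the main (mild) obstacle; everything else is routine.

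Finally, summing over $i\in\mathcal{M}$ and using $|\mathcal{M}|\leq k$ together with $k\tau^2=\|x_{-k}\|_2^2$,
\[
\E\Big[\sum_{i\in\mathcal{M}}x_i^2\,\mathds{1}_{z_i<\tau}\Big]\ \leq\ 9C_g^2\,(8/9)^{R/2}\,\|x_{-k}\|_2^2.
\]
The choice $R=C_R\log(1/\epsilon)$ with $C_R$ as in Table~2 makes $(8/9)^{R/2}=\epsilon^{C_R\log(9/8)/2}$ smaller than $\epsilon/(900 C_g^2)$, delivering the claimed bound $\frac{\epsilon}{100}\|x_{-k}\|_2^2$.
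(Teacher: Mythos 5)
Your argument is correct and follows essentially the same route as the paper's: both start from the Gaussian anti-concentration bound $\Pr[|y_{r,h_r(i)}|<\tau]\leq C_g\tau/|x_i|$, exponentiate it through a median tail bound, rewrite $x_i^2$ as $C_g^2\tau^2/p_i^2$, use $p_i\leq 1/3$ to conclude the per-coordinate bound is small, and finally sum over $|\mathcal{M}|\leq k$ with $k\tau^2=\|x_{-k}\|_2^2$. Your only deviation is cosmetic: you use the slightly sharper Chernoff form $(4p(1-p))^{R/2}$, which forces a small calculus step to locate the maximizer, whereas the paper uses the cruder $\binom{R}{R/2}p^{R/2}\le(4p)^{R/2}$-style bound, under which $p^{\text{exponent}-2}$ is monotone increasing and the optimization is trivially at $p=1/3$.
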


\begin{proof}

Fix $ i \in \mathcal{M}$. Observe that for $r \in [R]$
\begin{align*}	
|y'_{r,h_r(i)}| \sim  \|x_{h^{-1}_r(i)}\|_2 |\mathcal{N}(0,1)|.
\end{align*}

Since 
\begin{align*}
\|x_{h^{-1}_r(i)}\|_2 \geq |x_i|
\end{align*}
we have that the random variable $|y'_{r,h_r(i)}|$ stochastically dominates the random variable $|x_i| \cdot |\mathcal{N}(0,1) |$. 

By Fact \ref{fact:gaussian_fact}, we have that 
\begin{align*}
\Pr \left[ |y'_{r,h_r(i)}| \leq \tau \right] \leq C_g \frac{\tau}{|x_i|} .
\end{align*}

Because of the $R=C_R \log(1/\epsilon)$ repetitions, a standard argument gives that 
\begin{align*}
\Pr \left[ \mathds{1}_{z_i < \tau} = 1 \right ]	\leq  \left( C_g \frac{\tau}{|x_i|} \right)^{ C' \log(1/\epsilon)},
\end{align*}
for some absolute constant $C' > C_R /3$.

We now bound 	
\begin{align*}	
  \E \left[ \sum_{i \in \mathcal{M} } x_i^2 \cdot \mathds{1}_{z_i < \tau} \right] \leq & ~ \sum_{i \in \mathcal{M}} x_i^2\left( C_g \frac{\tau}{|x_i|} \right)^{ C'\log(1/\epsilon)} \\
\\ = & ~ \sum_{i \in \mathcal{M}} C_g^2 \tau^2 \left( C_g \frac{\tau}{|x_i|} \right)^{C' \log(1/\epsilon) -2 } \\ 
 \leq & ~ k \cdot \tau^2 \cdot \frac{\epsilon }{100} \\
  = & ~ \frac{\epsilon}{100}\|x_{-k}\|_2^2,
\end{align*}

where the first step follows by the bound on $ \E \left[ \mathds{1}_{z_i < \tau} \right] = \Pr  \left[ \mathds{1}_{z_i < \tau} = 1  \right]$, and the third step by choosing by choosing $C_R>1$ to be some sufficiently large constant and the facts $C'> C_R/3$ and $(C_g \tau)/|x_i| \leq 1/3$. 

\end{proof}

\begin{lemma}[probability of a fixed coordinate is badly-estimated]\label{cla:prune_claim_2}
A coordinate $i$ is badly-estimated (as in Definition~\ref{def:badly-estimated-coordinates}) probability at most 
\begin{align*}
\frac{ \epsilon^3 }{ 100^2C_L }.
\end{align*}
\end{lemma}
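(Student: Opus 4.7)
Fix a coordinate $i \in [n]$. For each repetition $r \in [R]$, decompose the bucket value as
\begin{align*}
y_{r, h_r(i)} \;=\; x_i\, g_{i,r} \;+\; N_r^{\mathrm{head}} \;+\; N_r^{\mathrm{tail}},
\end{align*}
where $N_r^{\mathrm{head}} = \sum_{j \in H(x,k)\setminus\{i\},\ h_r(j)=h_r(i)} x_j g_{j,r}$ collects contributions from other head coordinates landing in the same bucket, and $N_r^{\mathrm{tail}}$ collects the analogous sum over tail coordinates $j \notin H(x,k) \cup \{i\}$. I would define a per-trial good event $\mathcal{G}_r$ as the conjunction of three sub-events: (i) the Gaussian $|g_{i,r}|$ lies in $[1/(3C_g), 2]$; (ii) no head coordinate collides with $i$, so $N_r^{\mathrm{head}} = 0$; and (iii) $|N_r^{\mathrm{tail}}| \leq \tfrac{1}{100}\tfrac{\sqrt{\epsilon}}{\sqrt{k}}\|x_{-k}\|_2$. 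On $\mathcal{G}_r$, the triangle inequality immediately places $|y_{r,h_r(i)}|$ inside the target interval of Definition~\ref{def:badly-estimated-coordinates}, so trial $r$ ``succeeds''.

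\textbf{Per-trial failure bound.} I would next bound each sub-event separately. Fact~\ref{fact:gaussian_fact} gives $\Pr[\lnot(\mathrm{i})] \leq 1 - 0.63 = 0.37$. Pairwise independence of $h_r$ with $|H(x,k)| = k$ and $B = C_B k/\epsilon$ yields $\Pr[\lnot(\mathrm{ii})] \leq k/B = \epsilon/C_B$. Conditioning on~(ii), the random variable $N_r^{\mathrm{tail}}$ (given $h_r$) is Gaussian with variance $\sum_{j \notin H \cup \{i\},\ h_r(j)=h_r(i)} x_j^2$, whose expectation over $h_r$ is at most $\|x_{-k}\|_2^2 / B$ by pairwise independence. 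Markov on the second moment then gives $\Pr[\lnot(\mathrm{iii})] \leq 100^2/C_B = 1/50$ for $C_B = 5\cdot 10^5$. Because $g_{i,r}$, the other Gaussians $\{g_{j,r}\}_{j \neq i}$, and $h_r$ are mutually independent, a union bound yields $\Pr[\lnot \mathcal{G}_r] \leq p$ for some explicit absolute constant $p < 1/2$ (numerically, $p \approx 0.39$).

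\textbf{Median boost via Hoeffding.} Across the $R$ repetitions the hashes and Gaussians are independent, so the indicators $\{\mathds{1}[\lnot \mathcal{G}_r]\}_{r \in [R]}$ are independent Bernoullis with parameter at most $p$. The coordinate $i$ can be badly-estimated (as in Definition~\ref{def:badly-estimated-coordinates}) only if the median of $|y_{r,h_r(i)}|$ falls outside the target interval, which requires at least $R/2$ trials to violate $\mathcal{G}_r$. Applying Hoeffding's inequality,
\begin{align*}
\Pr[\,i \text{ is badly-estimated}\,] \;\leq\; \exp\!\big(-2R(1/2 - p)^2\big).
\end{align*}
Substituting $R = C_R \log(1/\epsilon)$ and the value $C_R = 10^4 + 500\,C_L$ from the constants table, the gap $(1/2 - p)^2$ is a fixed positive constant independent of $\epsilon$, so the exponent is $\Omega(C_R \log(1/\epsilon))$; choosing $C_R$ large enough (as tabulated) makes the right-hand side at most $\epsilon^3/(100^2 C_L)$, which is the claimed bound.

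\textbf{Main obstacle.} The delicate point is that the per-trial failure probability $p \approx 0.39$ is too close to $1/2$ for the naive $\binom{R}{R/2} p^{R/2} \leq (4p)^{R/2}$ median-boost bound used in Lemmas~\ref{lem:large} and~\ref{lem:small} to apply here (since $4p > 1$). One must instead exploit the explicit additive gap $1/2 - p$ via Hoeffding's inequality, which forces $C_R$ to be a moderately large constant but one well within the choice in Table~2. A secondary bookkeeping care-point is that the conditioning on~(ii) in the variance calculation is what converts $\|x\|_2^2$ into $\|x_{-k}\|_2^2$; without this split the heavy coordinates could inflate the noise, ruining the $\sqrt{\epsilon/k}\,\|x_{-k}\|_2$ scale in (iii).
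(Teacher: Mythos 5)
Your proposal is correct and follows essentially the same route as the paper's proof: decompose $y_{r,h_r(i)}$ into the signal term $x_i g_{i,r}$, the head-collision noise, and the tail noise; bound each of the three per-trial failure modes exactly as the paper does (Fact~\ref{fact:gaussian_fact} for the Gaussian range, a union bound over $H(x,k)\setminus\{i\}$ with $k/B=\epsilon/C_B$, and Markov on the tail noise second moment giving $100^2/C_B$); and then amplify via $R=C_R\log(1/\epsilon)$ repetitions. The only cosmetic difference is that the paper's median-boost step runs a multiplicative Chernoff bound on the count of \emph{good} trials ($\Pr[\sum Z_r < 0.54R] \leq e^{-\Theta(R)}$) whereas you apply Hoeffding to the count of \emph{bad} trials; both exploit the additive gap from $1/2$, and your observation that the crude $\binom{R}{R/2}p^{R/2}$ bound used in Lemmas~\ref{lem:large} and~\ref{lem:small} would fail here (since $4p>1$) is exactly why the paper switches to Chernoff at this point.
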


\begin{proof}

Fix $r$ and set set $b = h_r(i)$. Recall the definition of $y_{r,b} = \sum_{ i \in h^{-1}_r(b)} x_i g_{i,r}$ in Definition~\ref{def:Phi_in_pruning}. We have that 
\begin{align*}
 \left| |g_{i,r} x_i | - \left| \sum_{j \in h_r^{-1}(b)\setminus \{i\}} g_{j,r} x_j \right| \right| \leq |y_{r,b} |
\leq  |g_{i,r} x_i | + \left| \sum_{j \in h_r^{-1}(b)\setminus \{i\}} g_{j,r} x_j \right|,
\end{align*}
Now, $|g_{i,r}x_i|$ will be at in [$(1/3C_g) |x_i|, 2|x_i|]$ with probability at least $0.63$ (due to Fact~\ref{fact:gaussian_fact}).  

Moreover, for any $j \in H(x,k)\setminus\{i\}$, $h_r(j) \neq b$ with probability $1 - 1/B = 1-\epsilon/(C_Bk) \geq 1-1/(C_Bk)$.  By a union bound, we get with probability at least $1-1/C_B$, for all $j \in H(x,k) \setminus \{i\}$, $h_r(j) \neq b$. Conditioning on this event, we have,

\begin{align*}
\E \left[ \left(\sum_{j \in h_r^{-1}(b)\setminus \{i\}} g_{j,r} x_j  \right) ^2  \right] = \frac{\epsilon}{C_B k} \|x_{-k}\|_2^2.
\end{align*}

We then apply Markov's inequality to get that with probability at least $1-10^4/C_B$, 
\begin{align*}
\left( \sum_{j \in h_r^{-1}(b)\setminus \{i\}} g_{j,r} x_j \right)^2 \leq  \frac{\epsilon}{10^4 k} \|x_{-k}\|_2^2 .
\end{align*}

Therefore, by a union bound, 
\begin{align*}
\Pr \left[ |y_{r,b}| \in \left[ \frac{ 1 }{ 3C_g } |x_i|- \frac{1}{100} \frac{ \sqrt{\epsilon} }{\sqrt{k} }, 2|x_i|+ \frac{1}{100} \frac{ \sqrt{\epsilon} }{\sqrt{k} } \right] \right]  
\geq & ~ 0.63 - \frac{1}{C_B} - \frac{10^4}{C_B} \\
\geq & ~ 0.6,
\end{align*}

where the last step follows by $C_B \geq 5 \times 10^5$.

Note that $z_i$ is obtained by taking median of $R$ copies of i.i.d. $|y_{r,b}|$. For each $r \in [R]$, we define $Z_r = 1$ if $|y_{r,b}|$ falls into that region, and $0$ otherwise. We have $\E[ \sum_{r=1}^R Z_r ] \geq 0.6 R$. Using Chernoff bound, we have
\begin{align*}
\Pr \left[ \sum_{r=1}^R Z_r < 0.9 \cdot 0.6 R \right] \leq & ~ \Pr \left[ \sum_{r=1}^R Z_r < 0.9 \E[ \sum_{r=1}^R Z_r ] \right] \\ 
\leq & ~ e^{ - \frac{1}{3} 0.1^2 \E[\sum_{r=1}^R Z_r ] } \\
\leq & ~ e^{ - \frac{1}{3} 0.1^2 \cdot 0.6 R }
\end{align*}
Thus,
\begin{align*}
\Pr \left[ z_i \not\in \left[ \frac{1}{3C_g} |x_i|- \frac{1}{100} \frac{ \sqrt{\epsilon} }{\sqrt{k} }, 2|x_i| + \frac{1}{100} \frac{ \sqrt{\epsilon} } {\sqrt{k} } \right] \right]  
\leq & ~ e^{ - \frac{1}{3} 0.1^2 \cdot 0.6 R } \\
\leq & ~ 2^{-0.002 R} \\
= & ~ 2^{-0.002 C_R \log (1/\epsilon)} \\
\leq & ~ 2^{-0.002 (10000 + 500 C_L) \log (1/\epsilon)} \\
\leq & ~ \frac{\epsilon^3}{100^2C_L} ,
\end{align*}
where the third step follows from choice of $C_R$.

\end{proof}

\subsection{Guarantees of the algorithm}\label{sec:pruning_sketch_analyzing_correctness} 

We now proceed with the proof of Theorem \ref{thm:pruning}.
\begin{proof}

By Lemma~\ref{cla:prune_claim_1} and an application of Markov's inequality we have that 
\begin{align*}
 \sum_{ i \in \mathcal{M} } x_i^2 \cdot \mathds{1}_{z_i < \tau} \leq \epsilon \|x_{-k}\|_2^2,
\end{align*}
with probability $99/100$. Let this event be $\mathcal{E}_1$.

Moreover, by Lemma~\ref{cla:prune_claim_2},
\begin{align*}
 \E[ | {\cal B} | ] \leq \frac{ \epsilon^3 |L| }{ 100^2 C_L },
\end{align*}
so, by Markov's inequality, we have
\begin{align*}
\Pr \left[ | {\cal B} | \leq \frac{ \epsilon^2 |L| }{ 100 C_L } \right] \geq 1 - \epsilon/100 \geq 99/100
\end{align*} 
Let this event be $\mathcal{E}_2$.

By taking a union bound, $\mathcal{E}_1$ and $\mathcal{E}_2$ both hold with probability $98/100$.  Plugging size of $|L|$ ($\leq C_L \cdot k/\epsilon$) into equation of event ${\cal E}_2$, we get
\begin{align}\label{eq:upper_bound_size_of_set_B}
| {\cal B} | \leq \frac{ \epsilon^2 |L| }{ 100 C_L } \leq \frac{\epsilon k}{100}.
\end{align}
It means there are at most $\epsilon k /100$ coordinates that badly-estimated.

We remind that our goal is to bound 
\begin{align*}
\|x - x_S\|_2^2 = & ~ \|x_{\bar{S}}\|_2^2 \\
= & ~ \|x_{ \bar{S} \cap {\cal M} }\|_2^2 + \|x_{ \bar{S} \backslash {\cal M} }\|_2^2 \\
= & ~ \|x_{\bar{S} \cap {\cal M} }\|_2^2 + \|x_{ (\bar{S} \setminus \mathcal{M}) \cap \mathcal{B}}\|_2^2 + \|x_{ (\bar{S} \setminus \mathcal{M}) \backslash \mathcal{B} } \|_2^2
\end{align*}

\paragraph{1. Bounding $\|x_{\mathcal{M} \cap \bar{S}}\|_2^2$.}

Consider the set 
\begin{align*}
I = \{ i \in L \setminus \mathcal{M}: |x_i| \geq \tau/3 \},
\end{align*}
which contains the coordinates in $L$ with magnitude in the range $[\frac{1}{3}\tau, 3C_g\tau)$. By the definition of $\tau$, 
clearly, $|I| \leq 3k + k = 4k$, because we can have at most $k$ such elements in $H(x,k)$, and at most $3k$ such elements in the tail $[n] \setminus H(x,k)$. Since the number of badly estimated coordinates is at most $\epsilon k/100$ and the size of $S$ is $\beta k $ for sufficiently large $\beta$, we can have at most $4k + \epsilon k/100 < \beta k$ coordinates $i \in L$ which are not in $\mathcal{M}$ and are larger than $\tau$. This means that all coordinates in $\mathcal{M}$ with estimate $z_i \geq \tau$ will belong to $S$. This implies that 
\begin{align*}
 \mathcal{M} \cap \bar{S}  = \{ i \in \mathcal{M}: z_i < \tau \}, 
\end{align*}
and hence
\begin{align*}
\|x_{\mathcal{M} \cap \bar{S}}\|_2^2 = \sum_{i \in \mathcal{M}} x_i^2 \cdot \mathds{1}_{z_i < \tau} \leq \epsilon \|x_{-k}\|_2^2,
\end{align*}
since we conditioned on event $\mathcal{E}_1$.

\paragraph{2. Bounding $\|x_{ (\bar{S} \setminus \mathcal{M}) \cap \mathcal{B}}\|_2^2$. }

For every $i \in (\bar{S} \setminus \mathcal{M}) \cap \mathcal{B})$ we have the trivial bound $|x_i| \leq \tau$. Since 
$(\bar{S} \setminus \mathcal{M}) \cap \mathcal{B}) \subseteq \mathcal{B}$, because the event $\mathcal{E}_2$ we get that 
\begin{align*}
\|x_{ (\bar{S} \setminus \mathcal{M}) \cap \mathcal{B}}\|_2^2 \leq |\mathcal{B}| \cdot \tau^2 \leq \frac{\epsilon k}{100} \cdot \frac{\|x_{-k}\|_2^2}{k} = \frac{\epsilon}{100} \|x_{-k}\|_2^2,
\end{align*}
where the second step follows from \eqref{eq:upper_bound_size_of_set_B} and \eqref{eq:threshold_tau}. 

\paragraph{3. Bounding $\|x_{ (\bar{S} \setminus \mathcal{M}) \setminus \mathcal{B}}\|_2^2$.}

Observe that set $(\bar{S} \setminus \mathcal{M}) \setminus \mathcal{B}$ consists of well-estimated coordinates that are less than $\tau$	 in magnitude, and their estimates do not belong to the largest $\beta k$ estimates. For convenience, set $ Q = (\bar{S} \setminus \mathcal{M}) \setminus \mathcal{B}$, then it is obvious that $Q = \ov{S} \backslash ( {\cal M} \cup {\cal B} )$.

We define three sets $H_1$, $H_2$, $H_3$ as follows,
\begin{align*}
H_1 = Q \cap T ,
~~~
H_2 = Q \cap \bar{T}, 
~~~
\text{and~~~}
H_3 =  ( \ov{T} \backslash ( {\cal M} \cup {\cal B} ) ) \backslash \ov{S}  .
\end{align*}

Using the definition of $Q$, we can rewrite $H_1$, $H_2$, and $H_3$ as follows
\begin{align*}
H_1 = & ~ ( \ov{S} \backslash ( {\cal M} \cup {\cal B} ) ) \cap T = \ov{S} \cap \ov{ {\cal M} \cup {\cal B} } \cap T, \\
H_2 = & ~ ( \ov{S} \backslash ( {\cal M} \cup {\cal B} ) ) \cap \ov{T} =  \ov{S} \cap \ov{ {\cal M} \cup {\cal B} } \cap \ov{T}, \\
H_3 = & ~ ( \ov{T} \backslash ( {\cal M} \cup {\cal B} ) ) \backslash \ov{S} = S \cap \ov{ {\cal M} \cup {\cal B} } \cap \ov{T}.
\end{align*}

We can show that
\begin{align}\label{eq:property_of_set_H_2_cap}
H_2 \cap H_3
= & ~ (\ov{S} \cap \ov{ {\cal M} \cup {\cal B} } \cap \ov{T} ) \cup ( S \cap \ov{ {\cal M} \cup {\cal B} } \cap \ov{T} ) \notag \\
= & ~ \emptyset,
\end{align}
and 
\begin{align}\label{eq:property_of_set_H_2_cup}
H_2 \cup H_3
= & ~ (\ov{S} \cap \ov{ {\cal M} \cup {\cal B} } \cap \ov{T} ) \cap ( S \cap \ov{ {\cal M} \cup {\cal B} } \cap \ov{T} ) \notag \\
= & ~ \ov{ {\cal M} \cup {\cal B} } \cap \ov{T} \notag \\
= & ~ \ov{T} \backslash ( {\cal M} \cup {\cal B} ).
\end{align}

Then, 
\begin{align*}
\|x_Q\|_2^2 = & ~ \|x_{H_1}\|_2^2 + \|x_{H_2}\|_2^2 \\
= & ~ \|x_{H_1}\|_2^2 + (\|x_{\bar{T} \setminus ( \mathcal{M} \cup\mathcal{B}) }\|_2^2 - \|x_{ H_3 }\|_2^2) \\
\leq & ~ \|x_{H_1}\|_2^2 + \|x_{\ov{T}}\|_2^2 - \|x_{ H_3 }\|_2^2 \\
\leq & ~ \|x_{H_1}\|_2^2 + (1+\epsilon)\|x_{-k}\|_2^2 - \|x_{ H_3 }\|_2^2 , 
\end{align*}
where first step follows from $H_1 \cap H_2 = \emptyset$ and $H_1 \cup H_2 = Q$, the second step follows from Eq.~\eqref{eq:property_of_set_H_2_cap} and \eqref{eq:property_of_set_H_2_cup}, the third step follows from $\|x_{\bar{T} \setminus ( \mathcal{M} \cup\mathcal{B}) }\|_2^2 \leq \| x_{ \ov{T} } \|_2^2$  and the last step follows from $ \| x_{ \ov{T} } \|_2^2 \leq (1+\epsilon) \| x_{-k} \|_2^2$.

We define $d,E,a,b$ as follows
\begin{align}\label{eq:def_d_E_a_b}
 d = |H_1|, &&
 E = \frac{1}{4}\sqrt{\epsilon/k} \|x_{-k}\|_2, &&
 a = \max_{ i \in H_1 } |x_i|, &&
 b = \min_{ i \in H_3 } |x_i|.
\end{align}

Let $i^*$ and $j^*$ be defined as follows:
\begin{align}\label{eq:i_*_and_j_*}
i^* = \arg\max_{i \in H_1 } |x_i|, \text{~~~and~~~}
j^*= \arg\min_{ j \in H_3 } |x_j|.
\end{align}
Recall the definitions of $H_1$ and $H_3$, we know $H_3$ is a subset of $S$ and $H_1$ is a subset of $\ov{S}$. Since the set $S$ contains the largest $\beta k$ coordinates, thus we have
\begin{align*}
z_{j} \geq z_{i} , \forall i \in H_1, j \in H_3 .
\end{align*}
It further implies $z_{j^*} \geq z_{i^*}$.

By Definition~\ref{def:badly-estimated-coordinates}, we have
\begin{align}\label{eq:connect_z_i_x_i}
z_{i^*} \geq \frac{1}{3 C_g} | x_{i^*} | - \frac{1}{100} \frac{ \sqrt{\epsilon} }{ \sqrt{k} } \| x_{-k} \|_2 ,
\end{align}
and
\begin{align}\label{eq:connect_z_j_x_j}
z_{j^*} \leq 2 | x_{j^*} | + \frac{1}{100} \frac{ \sqrt{\epsilon} }{ \sqrt{k} } \| x_{-k} \|_2 .
\end{align}

Then, we can show that $a \leq 6 C_g b + E$ in the following sense:
\begin{align*}
a = & ~ | x_{i^*} | & \text{~by~def.~of~}i^*, a, \eqref{eq:i_*_and_j_*}, \eqref{eq:def_d_E_a_b} \\
\leq & ~ 3 C_g z_{i^*} + \frac{3 C_g}{100} \sqrt{\epsilon / k} \| x_{-k} \|_2  & \text{~by~} \eqref{eq:connect_z_i_x_i} \\
\leq & ~ 3 C_g z_{j^*} + \frac{3 C_g}{200} \sqrt{\epsilon / k} \| x_{-k} \|_2  & \text{~by~} z_{i^*} \leq z_{j^*} \\
\leq & ~ 6 C_g | x_{j^*} | + \frac{6 C_g}{200} \sqrt{\epsilon / k} \| x_{-k} \|_2 & \text{~by~} \eqref{eq:connect_z_j_x_j} \\
= & ~ 6 C_g b + \frac{6 C_g}{200} \sqrt{\epsilon / k} \| x_{-k} \|_2 & \text{~by~def.~of~}j^*, b, \eqref{eq:i_*_and_j_*}, \eqref{eq:def_d_E_a_b} \\
\leq & ~ 6 C_g b + E & \text{~by~def.~of~}E, \eqref{eq:def_d_E_a_b}
\end{align*}


Note that $H_3 = (\bar{T} \setminus ( \mathcal{M} \cup \mathcal{B} ) ) \setminus \ov{S} = S \setminus (T \cup \mathcal{M} \cup \mathcal{B})$. 
Therefore, 
\begin{align*}
|H_3| \geq & ~ |S| - |T| - |\mathcal{M}| - |\mathcal{B}| \\
\geq & ~ \beta k - k - k - k \\
= & ~ (\beta - 3) k.
\end{align*}

Finally, we can have
\begin{align*}
\|x_{H_1}\|_2^2 -  \|x_{H_3}\|_2^2
\leq & ~ da^2 - (\beta - 3) k b^2 \\ 
 \leq & ~ d(6C_gb + E)^2 - (\beta-3) k b^2 & \text{~by~} a \leq 6 C_g b + E \\
 = & ~ (36C_g^2d - (\beta-3) k )b^2 + 12 C_g bd E + d E^2 \\ 
  \leq & ~  (36C_g^2 k - (\beta-3) k )b^2 + 12 C_g b k E + k E^2 & \text{~by~} d \leq k \\
   \leq & ~  (36C_g^2 k - (\beta-5 C_g^2) k )b^2 + 12 C_g b k E + k E^2 & \text{~by~} C_g \geq 4/5 \\
 \leq & ~ -36k C_g^2 b^2 + 12 C_g b k E + k E^2 & \text{~by~} \beta \geq 77 C_g^2 \\ 
 = & ~ - k( 6C_g b - E)^2 + 2kE^2  \\
 \leq & ~ 2k E^2 \\
 \leq & ~ \epsilon \|x_{-k}\|_2^2.
\end{align*}
where the last step follows from definition of $E$.

Thus, we have
\begin{align*}
\| x_Q \|_2^2 \leq (1+2\epsilon) \| x_{-k} \|_2^2.
\end{align*}
{\bf Putting it all together.} We have
\begin{align*}
\| x - x_S \|_2^2 = & ~ \| x_{ \ov{S} \cap {\cal M} } \|_2^2 + \| x_{ ( \ov{S} \backslash {\cal M} ) \cap {\cal B} } \|_2^2 + \| x_{ ( \ov{S} \backslash {\cal M} ) \backslash {\cal B} } \|_2^2 \\
\leq & ~ \epsilon \| x_{-k} \|_2^2 + \frac{\epsilon}{100} \| x_{-k} \|_2^2 + (1+2\epsilon) \| x_{-k} \|_2^2 \\
\leq & ~ (1+4 \epsilon) \| x_{-k} \|_2^2
\end{align*}
Finally, we can conclude $\alpha =5$ and $\beta = 100$.

\end{proof}

\subsection{Time, measurements, column sparsity, and probability}\label{sec:pruning_sketch_analyzing_others}

In this section, we will bound the decoding time, the number of measurements, column sparsity and success probability of algorithm.

\paragraph{Decoding time.} 
For each $i \in L$, we compute $z_i$ to be the median of $R$ values. For this part, we spend $O(|L| \cdot R) = O((k/\epsilon) \cdot \log (1/\epsilon))$ time.
Moreover, calculating the top $\beta k$ estimates in $L$ only takes $O(|L|)$ time.
Therefore, the decoding time is $O((k/\epsilon) \cdot \log (1/\epsilon))$.

\paragraph{The number of measurements.}
The number of measurements is the bucket size $B$ times the number of repetitions $R$, which is $O(BR) = O((k/\epsilon) \cdot \log (1/\epsilon))$.

\paragraph{Column sparsity.}
Each $i \in [n]$ goes to one bucket for each hash function, and we repeat $R$ times, so the column sparsity is $O(R) = O(\log (1/\epsilon))$.

\paragraph{Success probability.}
By analysis in Section~\ref{sec:pruning_sketch_analyzing_correctness}, the success probability is at least $0.98$.

\section{Tail Estimation}\label{sec:tail_estimation}

In Section~\ref{sec:tail_random_walks}, we present a standard result on random walks. In Section~\ref{sec:tail_p_stable_distribution}, we present some results on $p$-stable distribution. In what follows we asssume that $0<p\leq 2$. We show an algorithm for $\ell_p$ tail estimation in Section~\ref{sec:tail_tail_estimation_in_the_ell_p_norm}.

\subsection{Random walks}\label{sec:tail_random_walks}
\begin{theorem}\label{lem:random_walk}
We consider the following random walk. We go right if $B_i=1$ and we go left if $B_i=0$. The probability of $B_i = 1$ is at least $9/10$ and the probability of $B_i=0$ is at most $1/10$. With at least some constant probability bounded away from $\frac{1}{2}$, for all the possible length of the random walk, it will never return to the origin.
\end{theorem}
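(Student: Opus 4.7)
The plan is to reduce to a standard biased random walk and apply first-step analysis. Write $p = \Pr[B_i = 1] \geq 9/10$ and $q = 1-p \leq 1/10$, and let $S_n = \sum_{i=1}^n (2B_i - 1)$ be the position after $n$ steps. Since for any finite-length walk the event ``never returns to the origin'' is implied by the same event for the infinite walk started in the same way, it suffices to lower-bound the probability for the infinite walk; the same bound then holds uniformly over all walk lengths.

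First I would handle the (possibly adaptive) dependence of the $B_i$ by a standard stochastic domination: regardless of history, each step goes right with probability at least $9/10$, so we may couple the walk with an i.i.d.\ walk $\widetilde{S}_n$ whose steps are $+1$ with probability exactly $9/10$ and $-1$ otherwise, in such a way that $S_n \geq \widetilde{S}_n$ for all $n$. Hence it suffices to prove the claim for $\widetilde{S}_n$, where $p = 9/10$, $q = 1/10$.

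For the i.i.d.\ walk, condition on the first step. With probability $p$ it moves to position $1$; with probability $q$ it moves to $-1$, already returning strictly through the origin at time $1$ is avoided but the walk is then in negative territory, so for the ``never return'' event I only keep the first branch. Let $\rho$ be the probability that a walk started at position $1$ ever visits $0$. By a first-step analysis at position $1$,
\begin{equation*}
\rho \;=\; q \;+\; p\,\rho^2,
\end{equation*}
since either the very next step lands on $0$, or it lands on $2$, from which reaching $0$ requires first hitting $1$ and then hitting $0$ (two independent copies of $\rho$ by the strong Markov property). The two roots are $\rho = 1$ and $\rho = q/p$; the relevant root in the transient regime $p > q$ is $\rho = q/p \leq 1/9$.

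Putting this together, the probability that the walk never returns to the origin is at least
\begin{equation*}
p \cdot (1 - \rho) \;\geq\; \tfrac{9}{10}\cdot\bigl(1-\tfrac{1}{9}\bigr) \;=\; \tfrac{4}{5},
\end{equation*}
which is a constant strictly bounded away from $1/2$. The only place requiring care is the justification that $\rho = q/p$ and not $\rho = 1$; this follows because the walk has positive drift $p-q > 0$, so $S_n \to +\infty$ almost surely by the law of large numbers, forcing $\rho < 1$, and then the quadratic leaves only the smaller root. The main (mild) obstacle is therefore the coupling/domination step if the $B_i$ are adaptive rather than i.i.d.; everything else is the textbook gambler's ruin calculation.
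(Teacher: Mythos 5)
Your proof is correct and takes a genuinely different route from the paper's. You compute the hitting probability $\rho$ of $0$ from position $1$ via the one-line first-step recursion $\rho = q + p\rho^2$, select the root $q/p$ by transience (positive drift $\Rightarrow S_n \to \infty$ a.s.), and multiply by the probability $p$ that the first step moves right, giving a never-return probability of at least $p(1-q/p) = p - q \geq 4/5$. The paper instead enumerates first-return paths of each odd length $2k+1$ by Catalan numbers $C_k$, writes $P_{2k+1} = C_k\,q(pq)^k$, sums the series with the Catalan generating function $c(x) = (1-\sqrt{1-4x})/(2x)$, and simplifies $\sqrt{1-4pq} = 1-2q$ to arrive at the same $\rho = q/p$. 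Your first-step analysis is shorter and needs no generating-function identity; the Catalan computation additionally exposes the distribution of the first-return time, though the paper makes no use of that extra information. One small but real discrepancy: the paper's final bound $1 - q/p \geq 8/9$ is the never-return probability \emph{conditional} on the walk already being at $1$, whereas your $p(1-q/p) = p - q \geq 4/5$ correctly charges for the first step, which matches how the lemma is invoked in Claim~\ref{cla:tail_upper_bound_on_Delta_t} (where $B_1 = 1$ is also required). Your stochastic-domination step to handle possibly non-i.i.d.\ $B_i$ is a welcome extra; the paper's proof implicitly assumes i.i.d.\ steps.
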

This is a standard claim, that can be proved in numerous ways, such as martingales etc. For the completeness, we still provide a folklore proof here.
\begin{proof}

Let $p > 1/2$ be the probability of stepping to the right, and let $q=1-p$. For integer $m \geq 1$, let $P_m$ be the probability of first hitting $0$ in exactly $m$ steps. It is obvious that $P_m = 0$ if $n$ is even, and $P_1 = q$. In order to hit $0$ for the first time on the third step you must Right-Left-Left, so $P_3=pq^2$. To hit $0$ for the first time in exactly $2k+1$ steps, you must go right $k$ times and left $k+1$ times, your last step must be to the left, and through the first $2k$ steps you must always have made at least many right steps as left steps. It is well known that the number o such path is $C_k$, which is the $k$-th Catalan number. Thus,
\begin{align*}
P_{2k+1} = C_k q^k q^{k+1} = C_k \cdot q (pq)^k = \frac{q (pq)^k }{k+1} { 2k \choose k},
\end{align*}
since 
\begin{align*}
C_k = \frac{1}{k+1} {2k \choose k}
\end{align*}
By \cite{w05}, the generating function for the Catalan numbers is
\begin{align*}
c(x) = \sum_{k \geq 0} C_k x^k = \frac{ 1 - \sqrt{1-4 x} }{2x},
\end{align*}
so the probability that the random walk will hit $0$ is
\begin{align*}
\sum_{k \geq 0} P_{2k+1} 
= & ~ q \sum_{k \geq 0} C_k (pq)^k \\
= & ~ q \cdot c(pq) \\
= & ~ q \cdot \frac{ 1 - \sqrt{1-4pq} }{2pq} & \text{~by~definition~of~}c(x) \\
= & ~ \frac{ 1 - \sqrt{ 1 - 4 q (1-q) } }{2p} \\
= & ~ \frac{ 1 - \sqrt{ 1 - 4 q + 4 q^2 } }{ 2p } \\
= & ~ \frac{ 1 - ( 1 - 2q ) }{2p} \\
= & ~ q/p \\
\leq & ~ 1/9.
\end{align*}
Thus, we complete the proof.
\end{proof}

\subsection{$p$-stable distributions}\label{sec:tail_p_stable_distribution}

\begin{figure}[!t]
  \centering
    \includegraphics[width=0.8\textwidth]{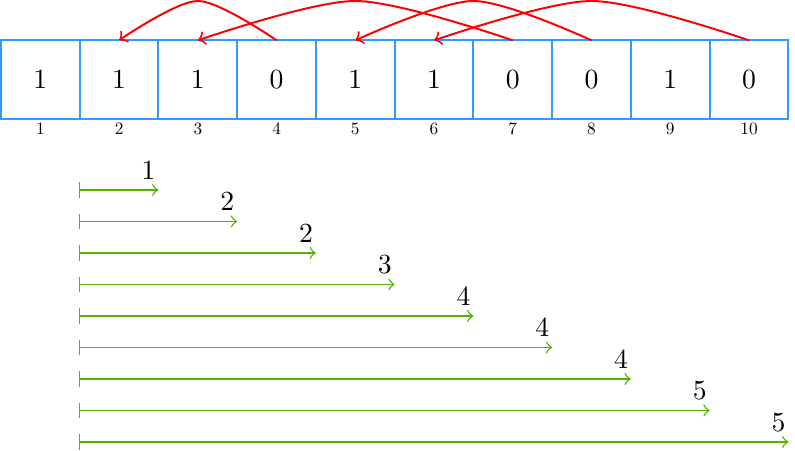}
    \caption{This is a visualization of part of the proof in Claim~\ref{cla:tail_upper_bound_on_Delta_t}. We consider an example where there are $l=10$ blocks, $B_1 = 1$, $B_2 = 1$, $B_1 = 1$, $B_3 = 0$, $B_4=0$, $B_5=1$, $B_6=1$, $B_7=0$, $B_8=0$, $B_9=1$ and $B_{10}=0$. Recall the two important conditions in the proof of Claim~\ref{cla:tail_upper_bound_on_Delta_t}, the first one is $B_1=1$ and the second one is, for all $j\in [l], \sum_{j'=2}^j B_{j'} > (j-1)/2$. The number on the green arrow is $\sum_{j'=2}^j B_{j'}$. It is to see that the example we provided here is satisfying those two conditions. Recall the definition of set $S_1$ and $S_0$. Here $S_1 = \{ 2,3,5,6,9 \}$ and $S_0 = \{ 4,7,8,10 \}$. Then $S'_1=\{ 2,3,5,6 \}$. The mapping $\pi$ satisfies that $\pi(4) = 2$, $\pi(7)=3$, $\pi(8)=5$ and $\pi(10)=6$. }\label{fig:random_walk}
\end{figure}

We first provide the definition of $p$-stable distribution. For the more details, we refer the readers to \cite{i06}.

\begin{definition}[$p$-stable distribution]
A distribution ${\cal D}$ over $\R$ is called $p$-stable, if there exists $p\geq 0$ such that for any $n$ real numbers $a_1, a_2, \cdots, a_n$ and i.i.d. variables $x_1, x_2, \cdots, x_n$ from distribution ${\cal D}$, the random variable $\sum_{i=1}^n a_i x_i$ has the same distribution as the variable $\| a \|_p y$, where $y$ is a random variable from distribution ${\cal D}$. 
\end{definition}

\begin{theorem}[\cite{z86}]
For any $p \in (0,2]$, there exists a $p$-stable distribution.
\end{theorem}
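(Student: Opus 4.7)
The plan is to exhibit, for each $p \in (0,2]$, a probability distribution $\mathcal{D}_p$ on $\mathbb{R}$ whose characteristic function is $\phi_p(t) = e^{-|t|^p}$, and then derive the stability property from the multiplicativity of characteristic functions under independent sums.

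First I would observe that once $\mathcal{D}_p$ is shown to exist, $p$-stability follows immediately. If $x_1,\ldots,x_n$ are i.i.d.\ samples from $\mathcal{D}_p$ and $a_1,\ldots,a_n \in \mathbb{R}$, then by independence,
\begin{align*}
\E\left[ e^{ \i t \sum_{i=1}^n a_i x_i } \right] = \prod_{i=1}^n \phi_p(a_i t) = \prod_{i=1}^n e^{ - |a_i t|^p } = e^{ - |t|^p \sum_{i=1}^n |a_i|^p } = \phi_p( \| a \|_p \cdot t ),
\end{align*}
which is the characteristic function of $\| a \|_p \cdot y$ for $y \sim \mathcal{D}_p$; uniqueness of Fourier inversion then yields the claimed equality in distribution.

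The main obstacle is therefore constructing $\mathcal{D}_p$, i.e., verifying that $\phi_p$ is the characteristic function of some probability measure. For $p = 2$ this is the standard Gaussian (up to a scale) and for $p = 1$ it is the Cauchy distribution, so I would focus on general $p \in (0,2)$. Here I would use a subordination argument: first construct a positive random variable $S$ whose Laplace transform satisfies $\E[e^{-\lambda S}] = e^{-\lambda^{p/2}}$ for $\lambda \geq 0$, then set $X := \sqrt{2S} \cdot g$ where $g \sim \N(0,1)$ is independent of $S$. A direct conditioning computation gives
\begin{align*}
\E\left[e^{\i t X}\right] = \E\left[\E[e^{\i t X} \mid S]\right] = \E\left[e^{-S t^2}\right] = e^{-(t^2)^{p/2}} = e^{-|t|^p},
\end{align*}
which matches $\phi_p$, so $X$ has the desired distribution.

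The remaining construction of $S$ reduces to showing that $\lambda \mapsto e^{-\lambda^{\alpha}}$ is the Laplace transform of a probability measure on $[0,\infty)$ for every $\alpha \in (0,1]$. I would deduce this from Bernstein's theorem: $\lambda \mapsto \lambda^\alpha$ is a Bernstein function, and for any Bernstein function $\psi$ the composition $\lambda \mapsto e^{-\psi(\lambda)}$ is completely monotone; evaluating at $\lambda = 0$ gives value $1$, so the corresponding measure is a probability measure. Taking $\alpha = p/2 \in (0,1]$ produces the desired subordinator $S$ and completes the construction. The most delicate step to write out carefully is the complete monotonicity of $e^{-\lambda^\alpha}$, which I would handle either via the closure property that $e^{-\psi}$ is completely monotone whenever $\psi$ is a Bernstein function (verifying this closure by differentiating and using Faà di Bruno), or by writing down an explicit L\'evy--Khintchine representation of $\lambda^\alpha$ as $\int_0^\infty (1 - e^{-\lambda u})\, \nu(du)$ with $\nu(du) \propto u^{-1-\alpha} du$ and exponentiating.
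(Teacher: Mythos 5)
The paper does not prove this statement; it cites it as a known fact from \cite{z86} and immediately moves on. So there is no ``paper's proof'' to compare against. Judged on its own, your argument is correct and is essentially the classical route: one shows $\phi_p(t) = e^{-|t|^p}$ is a characteristic function, after which $p$-stability is an immediate consequence of multiplicativity of characteristic functions under independent sums and injectivity of the Fourier transform on probability measures. Your subordination step $X = \sqrt{2S}\,g$ with $\E[e^{-\lambda S}] = e^{-\lambda^{p/2}}$ is the standard way to reduce the problem to the existence of a positive $\alpha$-stable subordinator for $\alpha = p/2 \in (0,1]$, and the conditional computation $\E[e^{\i t X} \mid S] = e^{-S t^2}$ is carried out correctly. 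The reduction to Bernstein's theorem is also correct: $\lambda \mapsto \lambda^\alpha$ is Bernstein, $e^{-\psi}$ is completely monotone for any Bernstein $\psi$, and $e^{-0^\alpha} = 1$ gives a genuine probability measure; for $\alpha = 1$ (i.e.\ $p=2$) the subordinator degenerates to a point mass and your construction still recovers the Gaussian. One small stylistic remark: for $p \in (0,1]$ you could shortcut the whole subordination machinery by P\'olya's criterion, since $e^{-|t|^p}$ is even, continuous, and convex on $[0,\infty)$ in that range; but P\'olya fails for $p \in (1,2)$, so the subordination (or an explicit L\'evy--Khintchine computation) is genuinely needed there, and it is cleaner to use one argument covering all $p$ as you do.
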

Gaussian distribution defined by the density function $f(x) = \frac{1}{\sqrt{2\pi}} e^{-x^2/2}$, is $2$-stable. Cauchy distribution defined by density function $f(x) = \frac{1}{\pi} \frac{1}{1+x^2}$ is $1$-stable. Let ${\cal D}_p$ denote the $p$-stable distribution. For $p=2$, ${\cal D}_p$ is ${\cal N}(0,1)$ and for $p=1$, ${\cal D}_p$ is ${\cal C}(0,1)$.

\subsection{$\ell_p$-tail estimation algorithm}\label{sec:tail_tail_estimation_in_the_ell_p_norm}
The goal of this Section is prove Lemma~\ref{lem:lplp_tail_estimation}.

\begin{algorithm}
\begin{algorithmic}[1]\caption{$\ell_p$ tail estimation algorithm}\label{alg:lplp_tail_estimation}
\Procedure{\textsc{LpLpTailEstimation}}{$x,k,p,C_0,\delta$} \Comment{Lemma~\ref{lem:lplp_tail_estimation} }
	\State \Comment{Requires $C_0 \geq 1000$}
	\State $m \leftarrow O(\log (1/\delta))$
	\State Choose $g_{i,t}$ to be random variable that sampled i.i.d. from distribution ${\cal D}_p$, $\forall i, t \in [n] \times [m]$
	\State Choose $\delta_{i,t}$ to be Bernoulli random variable with $\E[\delta_{i,t}] = 1/(100 k )$, $\forall i,t \in [n] \times [m]$
	\State \Comment{Matrix $A$ is implicitly constructed based on $g_{i,t}$ and $\delta_{i,t}$}
	\For{$t \in [m]$}
		\State $y_t \leftarrow \sum_{i=1}^n \delta_{i,t} \cdot g_{i,t} \cdot x_i$
	\EndFor
	\State $V \leftarrow \median_{t \in [m]} |y_t|^2$
	\State \Return $V$ \Comment{$ \frac{1}{10k} \| x_{-C_0 k} \|_p^p \leq V \leq \frac{1}{k} \| x_{-k} \|_p^p $}
\EndProcedure
\end{algorithmic}
\end{algorithm}

One can try to prove such a claim for $p=2$ with random signs, instead of Gaussians, by applying the Paley-Zygmund inequality to obtain the lower bound. A straightforward calculation indicates that this approach does not give the desired result, hence we need a new argument to deal with the lower bound.

\begin{lemma}[Restatement of Lemma~\ref{lem:tail_estimation}]\label{lem:lplp_tail_estimation}
Let $C_0 \geq 1000$ denote some fixed constant. There is an oblivious construction of matrix $A\in \R^{m \times n}$ with $m = O(\log (1/\delta))$ along with a decoding procedure $\textsc{LpLpTailEstimation}(x,k,p,C_0,\delta)$ (Algorithm~\ref{alg:lplp_tail_estimation}) such that, given $Ax$, it is possible to output a value $V$ in time $O(m)$ such that 
\begin{align*}
 \frac{1}{10k} \|x_{-C_0 k }\|_p^p \leq V \leq \frac{1}{k} \|x_{-k}\|_p^p,
\end{align*}
holds with probability $1- \delta$.
\end{lemma}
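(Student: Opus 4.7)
The strategy is to exploit $p$-stability to reduce the problem to controlling a subsampled tail mass, and then to amplify via a median over $m = O(\log(1/\delta))$ independent repetitions. Conditioned on the subsampling pattern $\{\delta_{i,t}\}_{i \in [n]}$, the defining property of the $p$-stable distribution $\mathcal{D}_p$ gives that $y_t$ has the same distribution as $S_t^{1/p} \cdot Z_t$, where $S_t = \sum_i \delta_{i,t} |x_i|^p$ and $Z_t \sim \mathcal{D}_p$ is independent of the $\delta_{i,t}$. In particular $|y_t|^p = S_t \cdot |Z_t|^p$, and I read the displayed $|y_t|^2$ in the algorithm as $|y_t|^p$ (this is forced by the statement, since for $p \neq 2$ a squared estimator cannot give $\| \cdot \|_p^p$ bounds). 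The plan is to show that for a single $t$, $|y_t|^p$ lies in the target window $[\frac{1}{10k}\|x_{-C_0 k}\|_p^p,\ \frac{1}{k}\|x_{-k}\|_p^p]$ with some constant probability bounded away from $1/2$; a standard Chernoff bound on the fraction of trials in the window then boosts the median to success probability $1-\delta$ at the cost of $m = O(\log(1/\delta))$ samples, and $V$ is read off in $O(m)$ time as required.

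For the upper bound on a single trial, $\E S_t \le \frac{1}{100k}\|x\|_p^p$, but to avoid contamination by heavy hitters I first remove them: the probability that any of the top-$k$ coordinates is sampled is at most $k \cdot \frac{1}{100k} = \frac{1}{100}$, so with probability at least $99/100$ the sample is supported on the tail and $\E[S_t \mid \text{no head sampled}] \le \frac{1}{100k}\|x_{-k}\|_p^p$. Markov yields $S_t \le \frac{1}{20k}\|x_{-k}\|_p^p$ with probability at least $4/5$, and combining this with an upper tail estimate on $|Z_t|^p$ (since $\mathcal{D}_p$ is a fixed distribution, $|Z_t|^p = O(1)$ with probability at least $3/4$) gives $|y_t|^p \le \frac{1}{k}\|x_{-k}\|_p^p$ with probability bounded away from $1/2$.

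The main obstacle is the lower bound on $S_t$: because the tail coordinates can be extremely non-uniform in magnitude, Paley--Zygmund applied to $S_t$ as a single random variable leaves constants far too weak, and this is where Theorem~\ref{lem:random_walk} is invoked. I would partition the coordinates outside $H(x,k)$ into dyadic rank blocks $G_1, G_2, \ldots$, where $G_j$ contains the coordinates of ranks in $(2^{j-1}k, 2^j k]$, so $|G_j| = \Theta(2^j k)$ and any two coordinates inside a single $G_j$ have magnitudes within a small factor of each other. For each block, $S_t^{(j)} := \sum_{i \in G_j} \delta_{i,t}|x_i|^p$ is a sum of independent Bernoulli$(1/(100k))$-scaled nonnegative terms with no single term dominating, so a Chernoff-type bound yields $\Pr[S_t^{(j)} \ge \frac{1}{10} \cdot \frac{1}{100k}\|x_{G_j}\|_p^p ] \ge 9/10$. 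Declaring the indicator $B_j = 1$ when this happens, the sequence $(B_j)_j$ drives a biased walk of exactly the kind analyzed in Theorem~\ref{lem:random_walk}; with constant probability the walk never returns to the origin, which is precisely the event that the running sum $\sum_{j' \le j} S_t^{(j')}$ remains above a constant fraction of $\frac{1}{100k}\|x_{-C_0 k}\|_p^p$ for every $j$ up to the block covering rank $C_0 k$. Calibrating the constants (in particular $C_0 \ge 1000$) then forces $S_t \ge \frac{1}{10k}\|x_{-C_0 k}\|_p^p$, and pairing this with a constant-probability lower bound on $|Z_t|^p$ (the median of $|Z_t|^p$ is a positive constant) closes out the lower side. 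Combined with the upper half and the median amplification, this delivers the claimed guarantee.
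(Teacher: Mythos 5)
Your skeleton is right at the outer layers: you correctly identify the $p$-stability reduction $|y_t|^p = S_t \cdot |Z_t|^p$ with $S_t=\sum_i\delta_{i,t}|x_i|^p$, the conditioning-away of the top-$k$ plus Markov for the upper side, the constant-probability window for a single trial, and the median amplification to $1-\delta$. You also correctly flag that the real difficulty is a lower bound on $S_t$, and that Paley--Zygmund alone is too weak, which is exactly why the paper invokes Theorem~\ref{lem:random_walk}. The issue is in how you set up the lower bound.

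The flaw is the claim that for dyadic \emph{rank} blocks $G_j$ (ranks in $(2^{j-1}k,\,2^jk]$), ``any two coordinates inside a single $G_j$ have magnitudes within a small factor of each other'' and hence ``no single term dominates'' $S_t^{(j)}$. Rank blocks carry no such guarantee: the coordinate at rank $2^{j-1}k+1$ can be arbitrarily larger than the one at rank $2^jk$, so a single $|x_i|^p$ can account for, say, $99\%$ of $\|x_{G_j}\|_p^p$. In that case $S_t^{(j)}$ is at most $0.01\|x_{G_j}\|_p^p$ unless that particular index is sampled, which happens with probability only $1/(100k)$; so $\Pr\bigl[S_t^{(j)}\ge\tfrac{1}{10}\cdot\tfrac{1}{100k}\|x_{G_j}\|_p^p\bigr]$ can be $O(1/k)+O(1)$-small rather than $\ge 9/10$, and the Chernoff step you rely on does not exist. (You would be confusing rank blocks with magnitude level sets; only the latter have bounded ratios, and those do not have predictable sizes.) A secondary issue: even granting the bound, the biased-walk event you describe (``the running sum stays above a constant fraction'') is not the event controlled by Theorem~\ref{lem:random_walk}, which is about a prefix-majority of block indicators, not about running mass.

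The paper sidesteps the uniformity problem entirely by using \emph{equal-size} blocks of $s=C_0k$ coordinates (after sorting), so that sampling at rate $1/(100k)$ hits each block with probability $\ge 9/10$. It then never tries to estimate the mass inside a block; instead, one sample in block $T_j$ certifies a contribution of at least $x_{js}^p$, which is a per-coordinate lower bound for the \emph{next} block $T_{j+1}$ and hence certifies $\tfrac{1}{s}\|x_{T_{j+1}}\|_p^p$. The random walk (prefix-majority of $B_j$) then supplies an injection from unsampled blocks to earlier sampled ones, repairing the gaps and yielding $\Delta_t^p\ge\tfrac{1}{2s}\|x_{-s}\|_p^p$. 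That exchange step is the genuinely new idea, and it is exactly the piece your proposal replaces with an invalid concentration claim.
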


\begin{proof}
Let $m = O(\log (1/\delta))$. For each $i\in [n], t \in [m]$, we use $g_{i,t}$ to denote a random variable that sample from distribution ${\cal D}_p$. 

For each $i\in [n], t \in [m]$, we use $\delta_{i,t} $ to denote a Bernoulli random variable such that
\begin{align*}
\delta_{i,t} =
\begin{cases}
1, & \text{~with~prob.~} \frac{1}{100 k}; \\
0, & \text{~otherwise}.
\end{cases}
\end{align*}
Then we have 
\begin{align*}
\E[ \delta_{i,t} ] = \frac{1}{100 k}.
\end{align*}

For each $t \in [m]$, we define $y_t$ as follows
\begin{align}\label{eq:tail_def_y_t}
y_t  = \sum_{i=1}^n \delta_{i,t} g_{i,t} x_i. 
\end{align}
For each $t \in [m]$, we define $\Delta_t$ as follows
\begin{align}\label{eq:tail_def_Delta_t}
\Delta_t = \left( \sum_{i=1}^n \delta_{i,t}^p x_i^p \right)^{1/p}.
\end{align}

Using Claim~\ref{cla:tail_upper_bound_on_y_t} and Claim~\ref{cla:tail_upper_bound_on_Delta_t}
\begin{align*}
\Pr_{g,\delta} \left[ |y_t| < \alpha \frac{ 1 }{ (2 C_0 k)^{1/p} } \| x_{-C_0 k} \|_p \right] \leq 1/5.
\end{align*}

Using Claim~\ref{cla:tail_lower_bound_on_y_t} and Claim~\ref{cla:tail_lower_bound_on_Delta_t}
\begin{align*}
\Pr_{g,\delta} \left[ |y_t| > \beta \frac{ 1 }{ k^{1/p} } \| x_{-k} \|_p \right] \leq 1/5.
\end{align*}

Finally, we just take the median over $m$ different independent repeats. Since $ m = O(\log (1/\delta))$, thus, we can boost the failure probability to $\delta$.

\end{proof}

It is a standard fact, due to $p$-stability, that $y_t$ follows the $p$-stable distribution : $\Delta_t \cdot {\cal D}_p$. Since $p$-stable distributions are continuous functions,  we have the following two Claims:
\begin{claim}[upper bound on $|y_t|$]\label{cla:tail_upper_bound_on_y_t}
Let $y_t$ be defined in Eq.~\eqref{eq:tail_def_y_t}, let $\Delta_t$ be defined in Eq.~\eqref{eq:tail_def_Delta_t}. There is some sufficiently small constant $\alpha \in (0,1)$ such that
\begin{align*}
\Pr_g [  | y_t | < \alpha \cdot \Delta_t ] \leq 1/10.
\end{align*}
\end{claim}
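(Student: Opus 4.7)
The plan is to condition on the randomness of the Bernoulli variables $\{\delta_{i,t}\}_{i\in [n]}$ and exploit the $p$-stability of ${\cal D}_p$. Once the indicators are fixed, $y_t = \sum_{i=1}^n \delta_{i,t} g_{i,t} x_i$ is a linear combination of i.i.d.\ ${\cal D}_p$ random variables with deterministic coefficients $\delta_{i,t} x_i$. By the defining property of a $p$-stable distribution, this linear combination has the same distribution as
\[
\Big( \sum_{i=1}^n |\delta_{i,t} x_i|^p \Big)^{1/p} \cdot Z \;=\; \Delta_t \cdot Z,
\]
where $Z \sim {\cal D}_p$ (here we used $\delta_{i,t}^p = \delta_{i,t}$ since $\delta_{i,t} \in \{0,1\}$). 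Thus, conditionally on the $\delta_{i,t}$'s,
\[
\Pr_g\bigl[ |y_t| < \alpha \Delta_t \bigr] \;=\; \Pr_{Z \sim {\cal D}_p}\bigl[ |Z| < \alpha \bigr].
\]
Since the left-hand side does not depend on the realization of the $\delta_{i,t}$'s, this identity in fact holds unconditionally (and the ``$\Pr_g$'' in the statement can be read as the probability over the only remaining randomness).

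The remaining task is to choose $\alpha > 0$ so that $\Pr_{Z \sim {\cal D}_p}[|Z| < \alpha] \leq 1/10$. The existence of such $\alpha$ follows from the fact that $p$-stable distributions for $p \in (0,2]$ admit a continuous, bounded density on $\R$ (see, e.g., Zolotarev \cite{z86}): the cumulative distribution function $F_p$ of $|Z|$ is continuous with $F_p(0) = 0$, hence $\lim_{\alpha \to 0^+} F_p(\alpha) = 0$, so any sufficiently small $\alpha > 0$ does the job. For the two concrete cases used in the paper, this is particularly transparent: for $p = 2$, ${\cal D}_p = {\cal N}(0,1)$ has density bounded above by $1/\sqrt{2\pi}$, so $\Pr[|Z|<\alpha] \leq 2\alpha/\sqrt{2\pi}$ and any $\alpha \leq \sqrt{\pi/200}$ suffices; for $p = 1$, the Cauchy density satisfies $\Pr[|Z|<\alpha] = (2/\pi)\arctan(\alpha) \leq 2\alpha/\pi$, and again any sufficiently small $\alpha$ works. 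The same continuity argument yields an absolute $\alpha$ for every $p \in (0,2]$.

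There is essentially no obstacle here: the content of the claim is just $p$-stability combined with continuity of the $p$-stable density at the origin. The only point to be slightly careful about is that the equality $\delta_{i,t}^p = \delta_{i,t}$ relies on $\delta_{i,t}$ being $\{0,1\}$-valued, so the definition $\Delta_t = (\sum_i \delta_{i,t}^p x_i^p)^{1/p}$ used in the statement coincides with $(\sum_i \delta_{i,t} |x_i|^p)^{1/p}$, which is the correct scaling factor output by $p$-stability applied to the nonzero coordinates.
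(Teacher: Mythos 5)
Your proof is correct and takes the same route the paper does: the paper dismisses both Claims~\ref{cla:tail_upper_bound_on_y_t} and \ref{cla:tail_lower_bound_on_y_t} with the one-line remark that, by $p$-stability, $y_t \sim \Delta_t\cdot {\cal D}_p$, and that $p$-stable distributions are continuous, so a sufficiently small $\alpha$ (resp.\ large $\beta$) works. You have simply spelled out that remark, including the harmless but worth-noting point that $\delta_{i,t}^p = \delta_{i,t}$ because the indicators are $\{0,1\}$-valued.
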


\begin{claim}[lower bound on $|y_t|$]\label{cla:tail_lower_bound_on_y_t}
Let $y_t$ be defined in Eq.~\eqref{eq:tail_def_y_t}, let $\Delta_t$ be defined in Eq.~\eqref{eq:tail_def_Delta_t}. There is some sufficiently large constant $\beta > 1$ such that
\begin{align*}
\Pr_g [ | y_t | > \beta \cdot \Delta_t ] \leq 1/10.
\end{align*}
\end{claim}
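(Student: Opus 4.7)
The plan is to derive this essentially for free from the $p$-stability of $\mathcal{D}_p$, after conditioning on the subsampling variables $\delta_{i,t}$. Specifically, I will condition on the entire vector $\{\delta_{i,t}\}_{i\in[n]}$ and view $y_t = \sum_{i=1}^n (\delta_{i,t} x_i) g_{i,t}$ as a fixed linear combination of i.i.d.\ $p$-stable variables with coefficient vector $a = (\delta_{i,t} x_i)_{i \in [n]}$. By the defining property of $p$-stable distributions, this sum has the same distribution as $\|a\|_p \cdot Z$ where $Z \sim \mathcal{D}_p$. Observing that $\|a\|_p = \left(\sum_i \delta_{i,t}^p |x_i|^p\right)^{1/p} = \Delta_t$ (using that $\delta_{i,t} \in \{0,1\}$), we obtain the exact distributional identity
\begin{align*}
y_t \mid \delta \;\sim\; \Delta_t \cdot Z, \qquad Z \sim \mathcal{D}_p.
\end{align*}

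Consequently, $\Pr_g[|y_t| > \beta \Delta_t \mid \delta] = \Pr[|Z| > \beta]$ whenever $\Delta_t > 0$, and the event is impossible (probability $0$) when $\Delta_t = 0$ since then $y_t$ is identically $0$. So it suffices to choose $\beta$ so that the unconditional tail $\Pr[|Z| > \beta] \leq 1/10$. This is possible because $\mathcal{D}_p$ is a continuous distribution with finite mass and tail going to zero at infinity: for $p = 2$ this is the standard Gaussian tail bound, while for $p \in (0,2)$ the $p$-stable tail satisfies $\Pr[|Z| > \beta] = \Theta(\beta^{-p})$, so in either regime picking $\beta$ large enough (depending only on $p$) suffices.

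Finally, since the bound $\Pr[|Z| > \beta] \leq 1/10$ holds uniformly in the conditioning, we may integrate out the randomness in $\delta$ to conclude $\Pr_g[|y_t| > \beta \Delta_t] \leq 1/10$ as required. I do not anticipate a real obstacle here: the whole claim is essentially a one-line consequence of $p$-stability together with any quantitative tail bound for $\mathcal{D}_p$, and it mirrors the companion Claim~\ref{cla:tail_upper_bound_on_y_t}, which uses the symmetric fact that $\mathcal{D}_p$ has bounded density at the origin (so $\Pr[|Z| < \alpha]$ is small for small $\alpha$). The only small bookkeeping point is the $\Delta_t = 0$ case, which is handled trivially as above.
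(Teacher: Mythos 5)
Your proposal is correct and matches the paper's approach: the paper simply asserts these two claims as immediate consequences of $p$-stability (that $y_t \mid \delta$ is distributed as $\Delta_t \cdot \mathcal{D}_p$) without giving an explicit proof. You have just filled in the routine details—conditioning on $\delta$, noting $\delta_{i,t}^p = \delta_{i,t}$ so $\|a\|_p = \Delta_t$, handling the $\Delta_t = 0$ case, and invoking the tail decay of $\mathcal{D}_p$—which is exactly what the paper leaves implicit.
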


It remains to prove Claim~\ref{cla:tail_lower_bound_on_Delta_t} and Claim~\ref{cla:tail_upper_bound_on_Delta_t}.

\begin{claim}[lower bound on $\Delta_t$]\label{cla:tail_lower_bound_on_Delta_t}
Let $\Delta_t$ be defined in Eq.~\eqref{eq:tail_def_Delta_t}. Then we have 
\begin{align*}
\Pr_{\delta} \left[ \Delta_t > \frac{1}{ k^{1/p} } \| x_{-k} \|_p \right] \leq 1/10.
\end{align*}
\end{claim}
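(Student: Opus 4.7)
The plan is to split $\Delta_t^p$ into a head part and a tail part according to $H := H(x,k)$, and show that with high probability the head contributes nothing while the tail is controlled by Markov's inequality.

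First I would note that since each $\delta_{i,t}$ is Bernoulli (so $\delta_{i,t}^p = \delta_{i,t}$), we can rewrite
\[
\Delta_t^p = \sum_{i \in H} \delta_{i,t}\, |x_i|^p \;+\; \sum_{i \notin H} \delta_{i,t}\, |x_i|^p \;=:\; A + B,
\]
and the claim $\Delta_t > k^{-1/p}\|x_{-k}\|_p$ is equivalent to $A + B > \tfrac{1}{k}\|x_{-k}\|_p^p$, since $\Delta_t \geq 0$.

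For the head term $A$, the key observation is that $|H| \leq k$ and each $\delta_{i,t}$ equals $1$ with probability $1/(100k)$, so by a union bound the probability that any head coordinate is sampled is at most $k \cdot \tfrac{1}{100k} = \tfrac{1}{100}$. Thus with probability at least $99/100$ we have $A = 0$. For the tail term $B$, I would compute $\E[B] = \tfrac{1}{100k}\|x_{-k}\|_p^p$ and apply Markov's inequality to conclude
\[
\Pr\Bigl[\,B > \tfrac{1}{k}\|x_{-k}\|_p^p\,\Bigr] \;\leq\; \frac{\E[B]}{(1/k)\|x_{-k}\|_p^p} \;=\; \frac{1}{100}.
\]

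A union bound over these two bad events gives failure probability at most $2/100 < 1/10$, and on the complementary event $\Delta_t^p = A + B \leq 0 + \tfrac{1}{k}\|x_{-k}\|_p^p$, which yields the claim. There is no real obstacle here: the whole argument hinges on the fact that the subsampling rate $1/(100k)$ is tuned precisely so that (i) the expected number of head coordinates hit is $\ll 1$, decoupling the estimator from the head, and (ii) the expected tail contribution is a $1/100$ fraction of the target threshold, leaving ample slack for Markov. The only mild subtlety is making sure we use $|x_i|^p$ rather than $x_i^p$ in the definition of $\Delta_t$ (harmless for the $p = 2$ regime of interest, and needed only so that both $A$ and $B$ are nonnegative for the union-bound step).
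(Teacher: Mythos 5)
Your proof is correct and uses the same two ingredients as the paper's (union bound over the at most $k$ head coordinates to kill the head contribution with probability $\geq 99/100$, then Markov on the tail contribution with mean $\tfrac{1}{100k}\|x_{-k}\|_p^p$), but you organize them slightly differently and more cleanly: you stay entirely in ``$\Delta_t^p$-land,'' decomposing $\Delta_t^p = A+B$ and applying Markov to $B$ directly, whereas the paper conditions on the no-head-sample event $\xi$ and then applies Markov to $\E[\Delta_t \mid \xi] \leq (100k)^{-1/p}\|x_{-k}\|_p$. The paper's route implicitly passes from $\E[\Delta_t^p]$ to $\E[\Delta_t]$, which requires Jensen and hence $p \geq 1$, and for $p=2$ its Markov step alone already saturates the $1/10$ budget before adding the $1/100$ from $\bar\xi$; your version avoids both of these wrinkles and lands comfortably at $2/100 < 1/10$ for every $p>0$. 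Your parenthetical about $|x_i|^p$ versus $x_i^p$ is also a fair correction to the paper's notation: nonnegativity of the summands is exactly what makes the decomposition $\Delta_t^p = A+B$ with $A,B \geq 0$ and the subsequent union bound legitimate.
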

\begin{proof}
The proof mainly includes three steps,

First, for a fixed coordinate $i\in [n]$, with probability at most $1/(100k)$, it got sampled. Taking a union bound over all $k$ largest coordinates.
We can show that with probability at least $1-1/100$, none of $k$ largest coordinates is sampled. Let $\xi$ be that event.

Second, conditioning on event $\xi$, we can show that 
\begin{align*}
\E[ \Delta_t ] \leq \frac{1}{ (100k)^{ 1/p } } \| x_{-k} \|_p.
\end{align*}

Third, applying Markov's inequality, we have
\begin{align*}
\Pr[ \Delta_t \geq a ] \leq \E[\Delta_t] /a.
\end{align*}
Choosing $a = \frac{1}{  k^{1/p} } \| x_{-k} \|_p$, we have
\begin{align*}
\Pr \left[ \Delta_t \geq \frac{1}{  k ^{1/p} } \| x_{-k} \|_p \right] \leq 1/10.
\end{align*}
\end{proof}

\begin{claim}[upper bound on $\Delta_t$]\label{cla:tail_upper_bound_on_Delta_t}
Let $\Delta_t$ be defined in Eq.~\eqref{eq:tail_def_Delta_t}. For any $C_0 \geq 1000$, we have 
\begin{align*}
\Pr_{\delta} \left[ \Delta_t < \frac{1}{  ( 2 C_0 k )^{1/p} } \| x_{- C_0 k} \|_p \right] \leq 1/10.
\end{align*}
\end{claim}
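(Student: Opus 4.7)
The plan is to reduce the lower bound on $\Delta_t^p$ to a block-level event (``many blocks contain a sampled coordinate'') and then combine the random-walk bound of Theorem~\ref{lem:random_walk} with a single Abel-summation step to convert that event into the desired inequality. The monotonicity obtained from sorting coordinates is crucial: once we sort, the value of any coordinate in block $j$ dominates the average value of block $j+1$, so a hit in block $j-1$ is enough to pay for the mass of block $j$.

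Concretely, I would sort the coordinates of $|x|$ in decreasing order and partition $[n]$ into consecutive blocks $G_0, G_1, G_2, \ldots, G_\ell$ of size $C_0 k$ each (pad the last block with zeros if necessary). Then $G_0 = H(x, C_0 k)$ and the sorting gives the pointwise bound
\begin{align*}
\min_{i \in G_{j-1}} |x_i|^p \,\geq\, \frac{1}{C_0 k} \, \|x_{G_j}\|_p^p, \qquad j \geq 1.
\end{align*}
Let $B_j = 1$ if some coordinate of $G_j$ is sampled by $\delta$, and $B_j = 0$ otherwise. Since each $\delta_{i,t}$ is $1$ independently with probability $1/(100k)$ and $|G_j| = C_0 k$, we get $\Pr[B_j = 1] \geq 1 - (1 - 1/(100k))^{C_0 k} \geq 1 - e^{-C_0/100} \geq 1 - e^{-10}$. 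Since $\delta_{i,t} \in \{0,1\}$ we can write
\begin{align*}
\Delta_t^p \,=\, \sum_{i} \delta_{i,t}\,|x_i|^p \,\geq\, \sum_{j \geq 0} B_j \,\min_{i \in G_j} |x_i|^p \,\geq\, \frac{1}{C_0 k} \sum_{j \geq 1} B_{j-1} \, \|x_{G_j}\|_p^p,
\end{align*}
so, setting $a_j = \|x_{G_j}\|_p^p$ and $c_j = B_{j-1}$, it remains to prove $\sum_{j \geq 1} c_j a_j \geq \tfrac{1}{2} \sum_{j \geq 1} a_j = \tfrac{1}{2}\|x_{-C_0 k}\|_p^p$ with probability at least $9/10$.

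The final step is Abel summation applied to the non-increasing non-negative sequence $(a_j)$. Setting $W_k = \sum_{j=1}^k (2 c_j - 1)$, a direct computation gives
\begin{align*}
\sum_{j=1}^M c_j a_j \,-\, \tfrac{1}{2} \sum_{j=1}^M a_j \,=\, \tfrac{1}{2}\,a_M\, W_M \,+\, \tfrac{1}{2} \sum_{k=1}^{M-1} (a_k - a_{k+1}) \, W_k.
\end{align*}
Because $a_k - a_{k+1} \geq 0$ and $a_M \geq 0$, the right-hand side is non-negative as soon as $W_k \geq 0$ for every $k$. But $(W_k)$ is exactly the $\pm 1$ random walk whose $j$-th step is ``right'' iff $B_{j-1} = 1$; by Theorem~\ref{lem:random_walk} the event ``$B_0 = 1$ and the walk never returns to the origin'' has failure probability bounded by $\Pr[B_0 = 0] + e^{-10}/(1 - e^{-10}) \leq 2 e^{-10} \ll 1/10$, using that our per-step bias is $1 - e^{-10}$ rather than the weaker $9/10$ used in the statement of Theorem~\ref{lem:random_walk}.

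The only conceptually nontrivial point is the index shift: a hit $B_{j-1} = 1$ is charged with covering the mass $a_j$ of the next block, which is why the head block $G_0$ must participate in the walk (otherwise block $G_1$'s mass would be unpaid for, and the inequality $\sum c_j a_j \geq \tfrac{1}{2} \sum a_j$ could fail when $a_1$ dominates). Including $G_0$ in the blocking and letting its sampling indicator $B_0$ provide the first step of the walk is exactly the trick that makes the Abel-summation bound tight, and combining with the factor $1/(C_0 k)$ yields the claimed threshold $1/(2 C_0 k)$.
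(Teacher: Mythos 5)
Your proposal is correct and follows essentially the same route as the paper: sort, block into consecutive $C_0k$-size groups, use a block-level sampling indicator $B_j$ with bias bounded via $1-(1-1/(100k))^{C_0k}$, invoke Theorem~\ref{lem:random_walk} to ensure all prefix counts of good blocks stay strictly positive, and exploit the index shift (``a hit in block $j$ pays for block $j+1$'') together with monotonicity from sorting. The only difference is cosmetic: where the paper converts the prefix-positivity into $\sum_{j\in S_1}\|x_{T_j}\|_p^p\geq\frac12\|x_{-s}\|_p^p$ via an explicit one-to-one matching $\pi$ from bad blocks to earlier good blocks, you phrase the same step as an Abel-summation identity for the non-increasing block masses; the two are equivalent, and your write-up is arguably cleaner about the index shift that the paper's displayed chain of (in)equalities glosses over.
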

\begin{proof}

Without loss of generality, we can assume that all coordinates of $x_i$ are sorted, i.e. $x_1 \geq x_2 \geq \cdots \geq x_n$. Then we split length $n$ vector into $l$ blocks where each block has length $s= C_0 k$. Note that it is obvious $l \cdot s = n$.

For each $j \in [l]$, we use boolean variable $B_j$ to denote that if at least one coordinate in $j$-th block has been sampled. For a fixed block $j\in [l]$, the probability of sampling at least one coordinate from that block is at least
\begin{align*}
1 - \left( 1 - \frac{1}{100k} \right)^{s} = 1 - \left( 1 - \frac{1}{100k} \right)^{ C_0 k } \geq 9/10.
\end{align*}
Thus, we know $1 \geq \E[B_j] \geq 9/10$.

{\bf Warm-up.} Note the probability is not allowed to take a union over all the blocks. However, if we conditioned on that each block has been sampled at least one coordinate, then we have
\begin{align*}
\Delta_t^p = & ~ \sum_{i=1}^n \delta_{i,t} x_i^p \\
\geq & ~ \sum_{j=1}^{l - 1} x_{j s}^p \\
\geq & ~ \sum_{j=1}^{l - 1 } \frac{1}{s} \left( x_{js+1}^p + x_{js+2}^p + \cdots + x_{js+s}^p  \right) \\
= & ~ \frac{1}{s} \| x_{s} \|_p^p.
\end{align*}

{\bf Fixed.} 
For simplicity, for each $j \in [l]$, we use set $T_j$ to denote $\{ (j-1)s+1, (j-1)s+2, \cdots, (j-1)s + s\}$.

Using random walk Lemma~\ref{lem:random_walk}, with probability at least $99/100$, we have : for all $j\in \{2,\cdots, l\}$,
\begin{align*}
\sum_{j' = 2}^{j} B_{j'} > (j-1)/2.
\end{align*}
We know that with probability at least $99/100$, $B_1=1$. Then with probability at least $99/100$, we have
\begin{align*}
B_1 = 1 , \text{~and~} \sum_{j'=2}^{j} B_{j'} > (j-1)/2, \forall j \in [l].
\end{align*}
We conditioned on the above event holds. Let set $S_1 \subset [n]$ denote the set of indices $j$ such that $B_j=1$, i.e.,
\begin{align*}
S_1 = \left\{ j ~\big|~ B_j = 1, j \in [n] \backslash \{1\} \right\}.
\end{align*} 
Let set $S_0 \subset [n]$ denote the set of indices $j$ such that $B_j=0$, i.e.,
\begin{align*}
S_0 = \left\{ j ~\big|~ B_j = 0, j \in [n] \backslash \{1\} \right\}.
\end{align*}
Due to $\sum_{j'=2}^{j} B_{j'} > (j-1)/2, \forall j \in [l]$, then it is easy to see $S_1 > S_0$ and there exists a one-to-one mapping $\pi : S_0 \rightarrow S_1'$ where $S_1' \subseteq S_1$ such that for each coordinate $j \in S_0$, $\pi(j) < j$. Since we are the coordinates are being sorted already, thus
\begin{align*}
\sum_{j \in S_1} \| x_{T_j} \|_p^p = & ~ \sum_{j \in S_1'} \| x_{T_j} \|_p^p \\
= & ~ \sum_{ j \in S_1' } \| x_{T_{\pi^{-1}(j)}} \|_p^p \\
\geq & ~ \sum_{ j \in S_0 } \| x_{T_j} \|_p^p
\end{align*}
which implies that
\begin{align*}
\Delta_t^p = \sum_{i=1}^n \delta_i^p x_i^p = \sum_{j\in S_1} \| x_{T_j} \|_p^p \geq \frac{1}{2s} \| x_{-s} \|_p^p.
\end{align*}
Thus, with probability at least $9/10$, we have
\begin{align*}
\Delta_t \geq \frac{1}{ ( 2s )^{1/p} } \| x_{-s} \|_p.
\end{align*}

\end{proof}

\section{Putting It All Together}

Our full algorithm first applies interval forest sparse recovery algorithm in Algorithm~\ref{alg:btree} with precision parameter $\frac{\epsilon}{10}$ to obtain a set $L$ of size $O(k / \epsilon)$. 
By Theorem~\ref{thm:identification}, with probability $\frac{9}{10}$, $L$ contains a subset $T$ of size at most $k$ so that $\|x - x_T\|_2 \leq (1+\frac{\epsilon}{10}) \|x_{-k}\|_2$.
Then the algorithm feeds $L$ into the pruning procedure in Algorithm~\ref{alg:prune} also with precision parameter $\frac{\epsilon}{10}$ to get $S$.
By Theorem~\ref{thm:pruning}, with probability $\frac{9}{10}$, $|S| = O(k)$ and $\|x - x_S\|_2 \leq (1+\frac{\epsilon}{2}) \|x - x_{-k}\|_2$.
Finally, the algorithm uses set query data structure in Lemma~\ref{lem:set_query_price} to give $\hat{x}_S$ as approximation to $x_S$.
With probability at least $1 - 1/\poly(k)$, $\|\hat{x}_S - x_S\|_2^2 \leq \frac{\epsilon}{2} \|x - x_S\|_2^2$.
Therefore, $\|x-\hat{x}_S\|_2^2 = \|x-x_S\|_2^2 + \|x_S-\hat{x}_S\|_2^2 \leq (1+\frac{\epsilon}{4})(1+\frac{\epsilon}{2})^2 \|x - x_{-k}\|_2^2 \leq (1+\epsilon)^2 \|x - x_{-k}\|_2^2$.
By union bound, the overall failure probability is at most $1/4$.

\begin{algorithm}\caption{Stronger $\ell_2/\ell_2$ algorithm}
\begin{algorithmic}[1]
\Procedure{\textsc{Main}}{$x,n,k,\epsilon$} \Comment{Theorem~\ref{thm:main_sparse_recovery}}
	\State $L \leftarrow \textsc{IntervalForestSparseRecovery}(x,n,k,\frac{\epsilon}{10})$ \Comment{Algorithm~\ref{alg:btree}}
	\State $S \leftarrow \textsc{Prune}(x,n,k,\frac{\epsilon}{10},L)$ \Comment{Algorithm~\ref{alg:prune}}
	\State $\hat{x}_S \leftarrow \textsc{SetQuery}(x,n,\frac{\eps}{4},S)$ \Comment{Lemma~\ref{lem:set_query_price}}
	\State \Return $\hat{x}_S$
\EndProcedure
\end{algorithmic}
\end{algorithm}

\section*{Acknowledgments}
The authors would like to express their sincere gratitude to Zhengyu Wang for his great, selfless and priceless help with this project. The authors would like to thank Jaros\l{}aw B\l{}asiok, Michael Kapralov, Rasmus Kyng, Yi Li, Jelani Nelson, Eric Price, Ilya Razenshteyn, Aviad Rubinstein, David P. Woodruff and Huacheng Yu for useful discussions. The authors would like to thank anonymous reviewers for pointing out that small accuracy regime is not very interesting. Finally, the authors decided to remove it to obtain the current clean version.

\newpage
\addcontentsline{toc}{section}{References}
\bibliographystyle{alpha}
\bibliography{ref}
\newpage





\end{document}